\setlist[enumerate]{nolistsep,itemsep=3pt,topsep=3pt,leftmargin=*} %
\setlist[itemize]{nolistsep,itemsep=3pt,topsep=3pt,leftmargin=2em} %
\crefname{algorithm}{Algorithm}{Algorithms} %
\crefname{subsection}{Subsection}{Subsections} %
\theoremstyle{plain} 
\newtheorem{theorem}{Theorem}
\newtheorem{lemma}{Lemma}
\newtheorem{Corollary}{Corollary}
\theoremstyle{definition}
\newtheorem{definition}[theorem]{Definition}
\newcommand{\microspace}{\mspace{.5mu}} %
\newcommand{\avg}[1]{\langle\microspace#1\microspace\rangle} %
\newcommand{\noise}[1]{\widetilde{#1}} %
\newcommand{\navg}[1]{\langle\microspace\noise{#1}\microspace\rangle} %
\newcommand{\bignavg}[1]{\bigl\langle\microspace\noise{#1}\microspace\bigr\rangle} %
\newcommand{\Bignavg}[1]{\Bigl\langle\microspace\noise{#1}\microspace\Bigr\rangle} %
\newcommand{\labelstyle}[1]{\textrm{#1}} %
\newcommand{\lin}{\labelstyle{in}} %
\newcommand{\combine}{\mathsf{combine}} 
\newcommand{\neighbor}{\mathsf{neighbor}} 
\newcommand{\vqe}{\mathsf{vqe}}
\DeclareMathOperator*{\argmin}{argmin} %
\def\H{\mathcal{H}} 
\DeclareMathOperator{\Exp}{\mathbb{E}} 
\DeclareMathOperator{\Var}{Var}
\def\C{\mathcal{C}} 
\def\E{\mathcal{E}} 
\def\P{\mathcal{P}} 
\def\Q{\mathcal{Q}} 
\def\U{\mathcal{U}} 
\def\V{\mathcal{V}} 
\def\K{\mathcal{K}} 
\def\X{\mathcal{X}} 
\def\Y{\mathcal{Y}} 
\def\Z{\mathcal{Z}} 
\def\S{\mathcal{S}} 
\def\CC{\mathbb{C}}
\def\GG{\mathbb{G}} 
\newcommand{\diag}{\operatorname{diag}}
\newcommand{\ExyS}{\underset{\left(\vb{x}, y\right) \sim P_{\mathbb{S}}}{\Exp}}
\newcommand{\ES}{\underset{P_{\mathbb{S}}}{\Exp}}
\newcommand{\Et}{\underset{P_{\mathbb{\rm test}}}{\Exp}}
\newcommand{\Exyt}{\underset{\left(\vb{x}, y\right) \sim P_{\rm test}}{\Exp}}
\newcommand{\Em}{\underset{\mathcal{M}}{\Exp}}
\newcommand{\Etrain}{\underset{\operatorname{train}}{\Exp}}
\newcommand{\Etest}{\underset{\operatorname{test}}{\Exp}}
\newcommand{\ReNs}{\vb{\Lambda}_{N_s}}
\begin{document}
\title{Scalable Quantum Error Mitigation with\\ Neighbor-Informed Learning}
\renewcommand\Affilfont{\itshape\small}%
\author[1]{Zhenyu Chen\thanks{These authors contributed equally to this work.}}
\author[2]{Bin Cheng$^\ast$}
\author[3,4]{Minbo Gao}
\author[5,6]{Xiaodie Lin}
\author[7,8]{Ruiqi Zhang}
\author[7,9]{Zhaohui Wei\thanks{Corresponding author. Email: weizhaohui@gmail.com}}
\author[1,11]{Zhengfeng Ji\thanks{Corresponding author. Email: jizhengfeng@tsinghua.edu.cn}}
\affil[1]{Department of Computer Science and Technology, Tsinghua University, Beijing, China.}
\affil[2]{Centre for Quantum Technologies, National University of Singapore, Singapore.}
\affil[3]{Institute of Software, Chinese Academy of Sciences, Beijing, China}
\affil[4]{University of Chinese Academy of Sciences, Beijing, China}
\affil[5]{Department of Mechanical and Automation Engineering, The Chinese University of Hong Kong, Shatin, Hong Kong SAR, China.}
\affil[6]{College of Computer and Data Science, Fuzhou University, Fuzhou, China.}
\affil[7]{Yau Mathematical Sciences Center, Tsinghua University, Beijing, China.}
\affil[8]{Department of Mathematics, Tsinghua University, Beijing, China.}
\affil[9]{Yanqi Lake Beijing Institute of Mathematical Sciences and Applications, Beijing, China.}
\affil[10]{Zhongguancun Laboratory, Beijing, China.}
\date{}

\maketitle

\begin{abstract}
  Noise in quantum hardware is the primary obstacle to realizing the
  transformative potential of quantum computing.
  Quantum error mitigation (QEM) offers a promising pathway to enhance
  computational accuracy on near-term devices, yet existing methods face a
  difficult trade-off between performance, resource overhead, and theoretical
  guarantees.
  In this work, we introduce neighbor-informed learning (NIL), a versatile and
  scalable QEM framework that unifies and strengthens existing methods such as
  zero-noise extrapolation (ZNE) and probabilistic error cancellation (PEC),
  while offering improved flexibility, accuracy, efficiency, and robustness.

  NIL learns to predict the ideal output of a target quantum circuit from the
  noisy outputs of its structurally related ``neighbor'' circuits.
  A key innovation is our \emph{2-design training} method, which generates
  training data for our machine learning model.
  In contrast to conventional learning-based QEM protocols that create training
  circuits by replacing non-Clifford gates with uniformly random Clifford gates,
  our approach achieves higher accuracy and efficiency, as demonstrated by both
  theoretical analysis and numerical simulation.
  Furthermore, we prove that the required size of the training set scales only
  \emph{logarithmically} with the total number of neighbor circuits, enabling
  NIL to be applied to problems involving large-scale quantum circuits.
  Our work establishes a theoretically grounded and practically efficient
  framework for QEM, paving a viable path toward achieving quantum advantage on
  noisy hardware.
\end{abstract}

\section{Introduction}
Quantum computing is expected to deliver significant speedups across a variety
of computational problems, including quantum
simulation~\cite{Feynman1982-simulation, Lloyd1996-simulation}, integer
factorization~\cite{shor_algorithms_1994}, unstructured
search~\cite{grover_fast_1996}, and many other computational problems.
However, realizing this promise requires large-scale quantum error correction,
which imposes significant resource demands that currently lie far beyond
existing technological capabilities.
At present, quantum devices operate in the noisy intermediate-scale quantum
(NISQ) era, where the qubit count remains insufficient for quantum error
correction and the systems are significantly affected by
noise~\cite{preskill_quantum_2018}.
Consequently, it would be desirable to mitigate the impact of noise in quantum
information processing tasks with techniques that do not entail substantial
quantum overheads.
In recent years, a widely applied approach of this kind is the so-called
\emph{quantum error mitigation} (QEM)~\cite{temme_error_2017,
  li_efficient_2017,aharonov2025on}.

The two most prominent quantum error mitigation techniques are zero-noise
extrapolation (ZNE)~\cite{temme_error_2017, li_efficient_2017} and probabilistic
error cancellation (PEC)~\cite{temme_error_2017}.
Several additional quantum error mitigation methods have also been developed,
including symmetry verification~\cite{bonet-monroig_low-cost_2018,
  mcardle_error-mitigated_2019}, purification-based
approaches~\cite{huggins_virtual_2021, koczor_exponential_2021, liu2024virtual},
and machine learning based techniques~\cite{liao2024machine,czarnik_error_2021,
  strikis_learning-based_2021,lowe_unified_2021}.
Specifically, ZNE enhances the accuracy of quantum computation by artificially
amplifying the noise in a quantum circuit and then extrapolating the noisy
outputs back to the zero-noise limit.
This approach is appealing due to its low experimental overhead, and has been
experimentally demonstrated on superconducting devices~\cite{kandala_error_2019,
  kim_scalable_2023}.
However, its accuracy is ultimately constrained by the quality of
extrapolations, and offers little flexibility for further improvement with
additional computational resources.
The PEC method directly mitigates noise by inverting each noise channel through
a quasiprobability decomposition into physically implementable
operations~\cite{temme_error_2017}.
It can, in principle, yield unbiased estimates and has been shown to be
theoretically optimal in certain settings~\cite{takagi_optimal_2021,
  takagi_fundamental_2021, jiang2021physical}.
Yet, PEC suffers from a fundamental scalability barrier, i.e., it requires an
exponential number of circuit evaluations to suppress noise to high precision.
Moreover, it typically depends on accurate noise characterization, which is
often difficult to obtain in practice.
To address the overhead issue of PEC, heuristic strategies have been proposed,
such as truncating the inverse channel expansion and applying learning-based
methods to estimate mitigation
coefficients~\cite{strikis_learning-based_2021,van2023probabilistic}.
While promising, these approaches generally lack rigorous performance guarantees
and are typically benchmarked only on small-scale instances or specific classes
of quantum circuits.

In many practical applications of near-term quantum algorithms, achieving higher
accuracy is often a priority, and allocating additional computational resources
to improve precision can be well justified.
This motivates the following fundamental question: \textit{Is it possible to
  design a quantum error mitigation strategy that is flexible---allowing
  resource allocation to be dynamically adjusted based on the desired accuracy;
  efficient---circumventing the computational overhead inherent in methods such
  as probabilistic error cancellation (PEC); and theoretically sound---with
  provable performance guarantees under reasonable assumptions?}
In this work, we provide an affirmative answer to this question.
Specifically, we introduce a general and principled framework for quantum error
mitigation, termed neighbor-informed learning (NIL), and demonstrate its strong
potential to address these challenges effectively.
Much like a detective solving a case by interviewing witnesses rather than
confronting the suspect directly, NIL reconstructs the ideal output of a target
circuit by aggregating observations from its surrounding neighbor circuits,
which offer noisy and partial reflections of the original circuit.
Our method is illustrated in \cref{fig:neighborhood_learning} and described in
detail in \cref{sec:framework}.

\begin{figure*}[htbp!]
  \centering%
  \includegraphics[width = \textwidth]{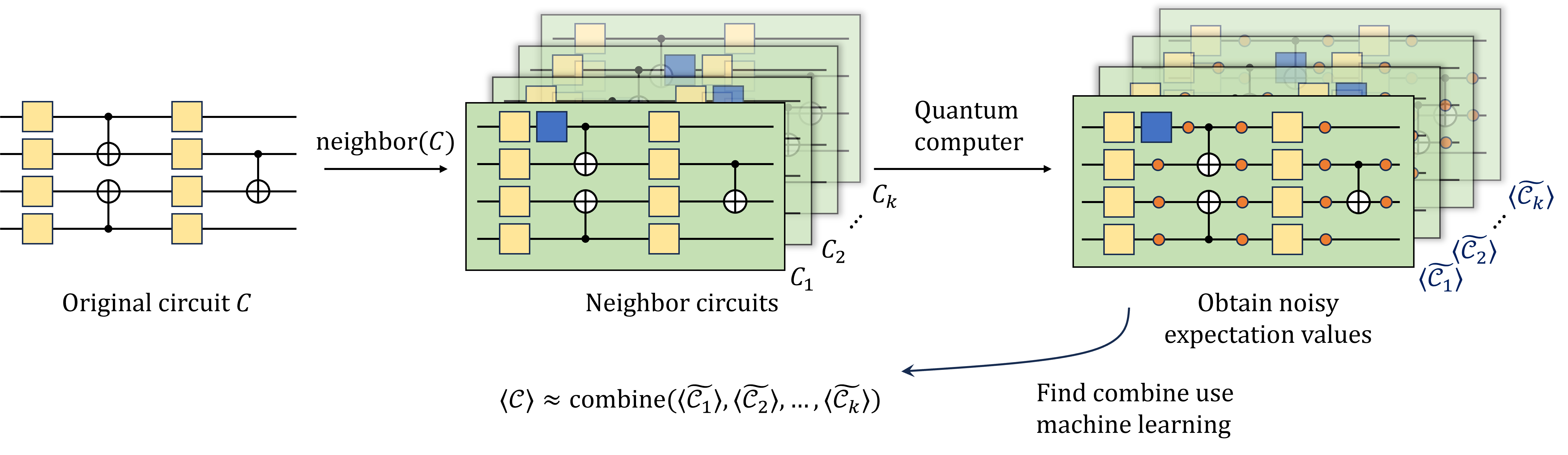}
  \caption{Schematic illustration of NIL\@.
    The single-qubit gates in the original circuit are represented by yellow
    blocks, the inserted gates are represented by blue blocks, and the orange
    circles represent the noise channels.
    The noisy expectation values $\navg{\mathcal{C}_j}$ are obtained on a real
    quantum device.
    The function $\combine$ is learned on a classical computer from a training
    set of Clifford circuits, for which the ideal expectation values can be
    efficiently calculated classically.}%
  \label{fig:neighborhood_learning}
\end{figure*}

This strategy provides substantial flexibility in the design of both the
neighbor circuits and the learning model, making it broadly compatible with
families of parameterized quantum circuits commonly used in NISQ algorithms such
as
VQE~\cite{peruzzo_variational_2014,yung_transistor_2014,McClean2016,Kandala2017}
and QAOA~\cite{farhi_quantum_2014,farhi2016quantum,moll2018quantum}.
In particular, we explore a variety of neighbor constructions including
gate-insertion schemes that yield Pauli and CPTP-basis neighbor circuits, as
well as those inspired by ZNE\@.
We find that combining different types of neighbor circuits can further enhance
performance while reducing computational cost.
Moreover, we would like to stress that by training our model once on a
representative set of training circuits, the learned model can be reused across
many circuits of the same configuration.

As a learning-based method, the performance of NIL depends heavily on the
quality of training datasets.
To this end, we propose a novel \emph{2-design training} strategy tailored for
quantum circuit architectures with rotational parameters of quantum gates.
The key idea is to replace each non-Clifford gate in the target circuit with a
random Clifford gate forming a rotational $2$-design, rather than using
uniformly random Clifford gates as suggested by the random Clifford method
adopted in previous works~\cite{strikis_learning-based_2021}.
We show that the data collected from these training circuits can be used to
train an optimal map $\combine$ selected among the class of linear functions in
an average-case sense.
In particular, this 2-design training method demonstrates strong generalization
ability when linear regression or Lasso regression is used to learn the
$\combine$ map.

Furthermore, we prove that to compute the target expectation value within an
error $\varepsilon$, the size of the training set required scales as
$\mathcal{O}(\ln{N}/\varepsilon^2)$, where $N$ is the number of neighboring
circuits.
This enables the method to leverage a large number of neighbor circuits, which
typically enhances the quality of the estimation and improves both efficiency
and flexibility.
Notably, our bound on the training set size also incorporates the effect of
measurement shot noise.
We rigorously analyze the error introduced by shot noise and show that the
number of measurement shots required is only $\mathcal{O}(1/\varepsilon^2)$.
Additionally, we demonstrate that shot noise can have a beneficial regularizing
effect on the method.

Through extensive numerical simulations, we demonstrate that, compared to
various existing QEM techniques, our method achieves a superior tradeoff between
error mitigation performance and computational resource consumption.
For example, when applied to the VQE circuits for the LiH molecule, our method
achieves a four-order-of-magnitude improvement in MSE at the same training cost,
compared to the random Clifford training
method~\cite{czarnik_error_2021,strikis_learning-based_2021,lowe_unified_2021}.
Moreover, we validate the scalability of our method on circuits with over 100
qubits and more than 20 layers, demonstrating that decent QEM performance is
still achievable even at scale.
Several other technical design choices, such as the construction of neighbor
circuits and the selection of learning models and algorithms, are also
considered numerically.
Additional details can be found in the Appendix.


\section{Neighbor-Informed Learning}\label{sec:framework}

\paragraph{General framework.}
\Cref{fig:neighborhood_learning} illustrates the workflow of NIL\@.
As a general and flexible framework for QEM, NIL reconstructs ideal expectation
values by leveraging noisy outcomes from structurally related circuits, referred
to as \emph{neighbor circuits}.
Instead of relying solely on the noisy outcome of a target circuit
$\mathcal{C}$, NIL collects noisy outcomes from a set of modified circuits
$\{\mathcal{C}_1, \ldots, \mathcal{C}_N\}$ and uses a learned function
$\combine$ to approximate the ideal result:
\begin{equation*}
  \avg{\mathcal{C}} \approx \combine \bigl( \navg{\mathcal{C}_1},
    \ldots, \navg{\mathcal{C}_{N}} \bigr).
\end{equation*}
Here, $\avg{\mathcal{C}}$ denotes the ideal, noise-free expectation value of the
target circuit $\mathcal{C}$, while $\navg{\mathcal{C}_j}$ represents the noisy
expectation value obtained from the $j$-th neighbor circuit on a quantum device.
This formulation generalizes and unifies several existing QEM approaches.
For example, in ZNE, neighbor circuits are created by artificially amplifying
noise, and $\combine$ corresponds to an extrapolation
function~\cite{temme_error_2017, endo_practical_2018}.
In PEC, neighbor circuits are generated by inserting physically implementable
operations into the original circuit according to quasiprobability distributions
that invert the noise channels, and $\combine$ is the corresponding weighted
sum~\cite{takagi_optimal_2021}.

To implement NIL in practice, we introduce several strategies to construct
neighbor circuits: (i) inserting single-qubit gates after noisy operations
(referred to as Pauli or CPTP neighbors), and (ii) generating ZNE-style
neighbors by scaling the noise rate (see \cref{sec:methods} for detailed constructions).
For a quantum circuit, the noisy outputs from its neighbor circuits serve as the
features of this circuit, which will be utilized in our machine learning models.

\paragraph{2-design training method.}
The function $\combine$ is determined using machine learning methods and
therefore requires a high-quality training dataset.
A central challenge in learning-based QEM is generating training data that
statistically mirrors the target circuits, which often feature continuous
rotational parameters.
To address this, we propose the \emph{2-design training
  method}, which faithfully respects this rotational parametrization structure.
This new methods
outperforms the conventional \emph{random Clifford} training method widely
used in learning-based
QEM~\cite{czarnik_error_2021,strikis_learning-based_2021,lowe_unified_2021}.
While the Clifford training method constructs training circuits by replacing
each non-Clifford single-qubit gate with a uniformly random Clifford gate, our
method replaces each rotation gate $R_P(\theta)$ with a Clifford gate uniformly
drawn from the four-element set $\{R_P(0), R_P(\pi/2), R_P(\pi), R_P(3\pi/2)\}$
where $P\in\{X,Y,Z\}$ is a Pauli operator.
The selection of these four Clifford gates is motivated by a key mathematical
property that we observe: they constitute a quantum \emph{rotational} 2-design that
captures the second-order statistical properties of Pauli rotations
$R_P(\theta)$, with $\theta$ uniformly distributed over $[0, 2\pi]$.
A formal definition and mathematical characterizations of this 2-design
construction are provided in Appendix~B.

In this work, we assume that target circuits consist of Pauli rotation gates and
Clifford gates, which together form a universal set.
The training circuits generated by the 2-design method remain Clifford circuits.
For each training circuit, we generate its noisy neighbor circuits and collect
their outputs as features.
The ideal expectation value of the training circuit, which is classically
simulable due to its Clifford
structure~\cite{gottesman_heisenberg_1999,aaronson_improved_2004}, serves as the
label.

\begin{figure*}[t!]
  \centering
  \begin{subfigure}[t]{0.7\textwidth}
      \includegraphics[width=\textwidth]{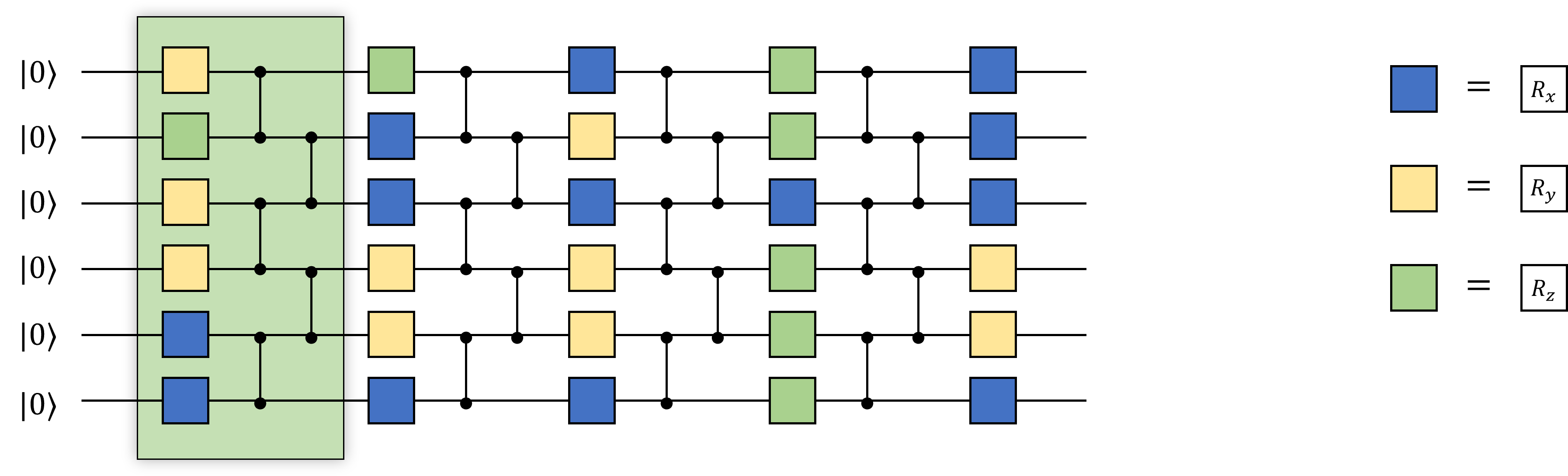}
      \caption{$\mathsf{vqe}$-6-4 (6 qubits, 13 layers)}
  \end{subfigure}
   
  \begin{subfigure}[t]{0.7\textwidth}
    \includegraphics[width=\textwidth]{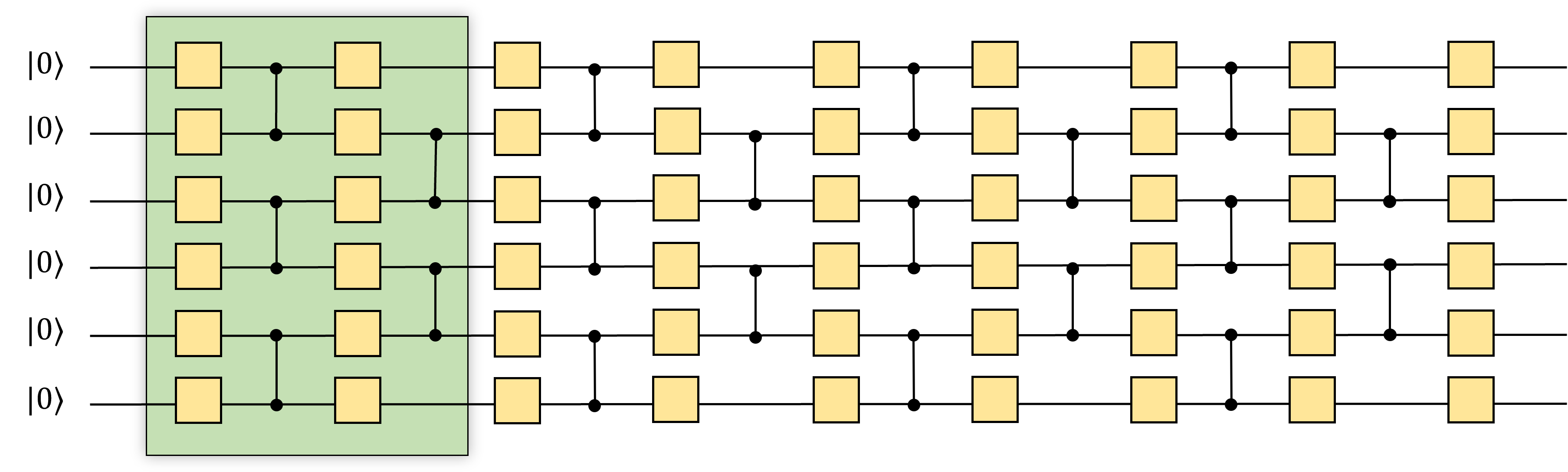}
    \caption{$\mathsf{vqe}$-$R_y$-$6$-$4$ (8 qubits, 17 layers)}
\end{subfigure}
  \vfill
  \begin{subfigure}[t]{0.7\textwidth}
    \includegraphics[width=\textwidth]{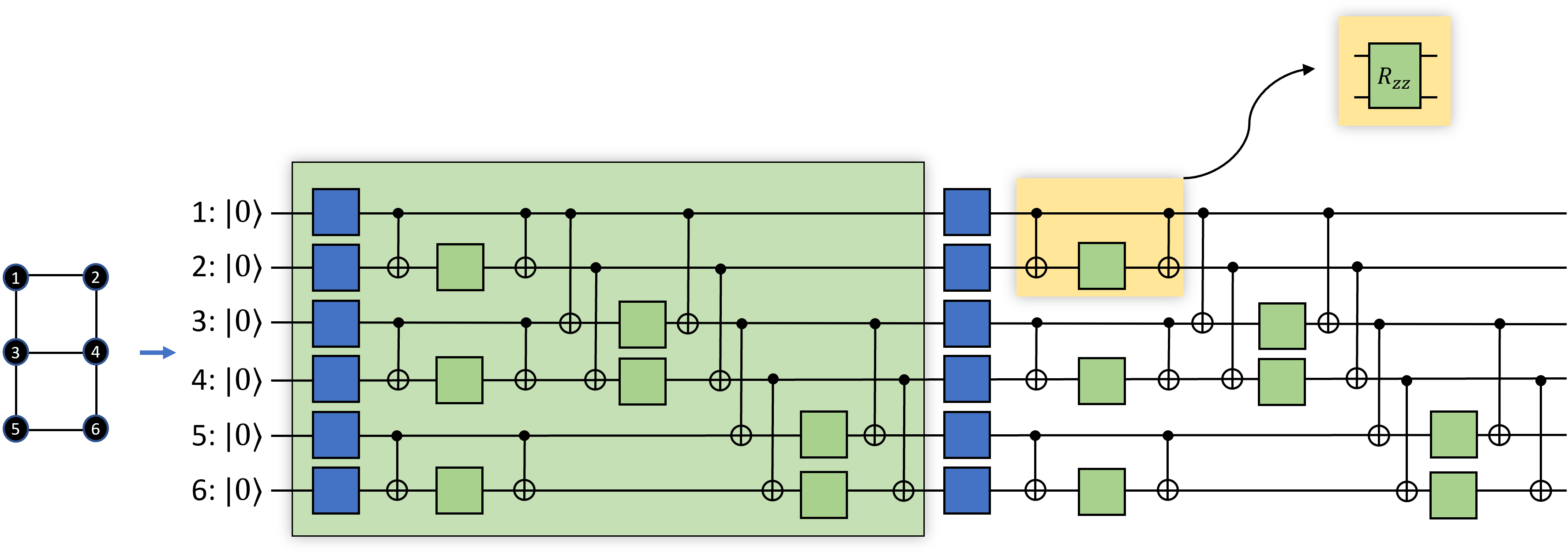}
    \caption{$\mathsf{vqe}$-$(3, 2)$-$2$ (6 qubits, 20 layers)}
\end{subfigure}
\caption{Three types of ansatz circuits we employ to test NIL\@.
  In these diagrams, colored boxes represent single-parameter rotation gates,
  vertical black lines are two-qubit Clifford gates, and the areas enclosed by
  dashed lines denote a complete block.
  All three circuits contain six qubits and can be extended to an arbitrary
  number of qubits.}\label{fig:ansatz_circuits}
\end{figure*} 

By repeating this procedure, we construct a sufficiently large training dataset.
We then apply linear models to fit the data and obtain a linear estimator of the
ideal expectation value for QEM, which will be ultimately applied to the target
quantum circuit.
Among various types of linear models, we find that \emph{Lasso regression}
offers the best trade-off between prediction accuracy and computational
efficiency.
The 2-design training method is described in detail in Appendix~A.

\section{Mathematical Characterizations of NIL}
\paragraph{Provable optimality.}
NIL offers several appealing practical advantages for QEM\@.
First, we prove that NIL always achieves the optimal performance across all
linear models, regardless of the presence of shot noise.
The proof relies on the observation that, when reformulating the linear
regression problem as a quadratic optimization problem, each entry of the
resulting coefficient matrix can be expressed as as the trace of a second-moment
expression involving the rotation gates $R_P(\theta)$ and $R_P(-\theta)$.
Notably, the statistical properties of these coefficients can be mimicked by
their behavior at the specific rotation angles
$\{R_P(0), R_P(\pi/2), R_P(\pi), R_P(3\pi/2)\}$.
This implies that in the large size limit, the training set matches the test set
on average, and the solution to the linear (or Lasso) regression exhibits
excellent generalization capabilities.

\begin{theorem}[informal]\label{thm:convergence_optimal}
  As the size of the training set approaches infinity, the linear model trained
  with the 2-design training method converges to the optimal linear estimator
  for the target circuits.
\end{theorem}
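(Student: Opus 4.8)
The plan is to reduce the statement to a moment-matching argument between the training distribution (the discrete $2$-design) and the test distribution (continuous uniform angles), and then to invoke the law of large numbers. First I would set up the linear regression explicitly. Writing the feature vector $\mathbf{x} = (\navg{\mathcal{C}_1}, \ldots, \navg{\mathcal{C}_N})^{\top}$ and the label $y = \avg{\mathcal{C}}$, the optimal linear estimator in the average-case sense minimizes $\Exp_{\mathrm{test}}[(y - \mathbf{w}^{\top}\mathbf{x})^2]$, whose minimizer is $\mathbf{w}^{\star} = M_{\mathrm{test}}^{-1}\mathbf{b}_{\mathrm{test}}$ with Gram matrix $M = \Exp[\mathbf{x}\mathbf{x}^{\top}]$ and cross-moment vector $\mathbf{b} = \Exp[\mathbf{x}\,y]$. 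Finite-sample regression on the training circuits returns $\hat{\mathbf{w}} = \hat{M}^{-1}\hat{\mathbf{b}}$, where $\hat{M}$ and $\hat{\mathbf{b}}$ are empirical averages over the sampled training circuits. The goal then reduces to proving $M_{\mathrm{train}} = M_{\mathrm{test}}$ and $\mathbf{b}_{\mathrm{train}} = \mathbf{b}_{\mathrm{test}}$, after which the theorem follows from convergence of the sample averages.

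The heart of the argument is to expose the precise polynomial dependence of every feature and the label on the rotation angles $\theta_1,\ldots,\theta_m$. Passing to the Pauli-transfer-matrix (superoperator) representation, each rotation gate acts as $\mathcal{R}_P(\theta):\rho \mapsto R_P(\theta)\,\rho\,R_P(-\theta)$, which is affine in $(\cos\theta,\sin\theta)$: one may write $\mathcal{R}_P(\theta) = A_0 + \cos\theta\,A_c + \sin\theta\,A_s$ for fixed matrices $A_0,A_c,A_s$. Since the full noisy, neighbor-augmented circuit is a product of such gate superoperators interleaved with angle-independent noise channels and inserted Clifford operations, each expectation value $\avg{\mathcal{C}}$ or $\navg{\mathcal{C}_j}$ is multilinear in the pairs $\{(\cos\theta_i,\sin\theta_i)\}_i$: for each gate index $i$ it selects exactly one of $\{1,\cos\theta_i,\sin\theta_i\}$. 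Consequently every entry of $M$ and $\mathbf{b}$, being the angle-average of a product of two such expectation values, is a sum of monomials of degree at most two in each pair $(\cos\theta_i,\sin\theta_i)$ — precisely the trace-of-second-moment structure in $R_P(\theta)$ and $R_P(-\theta)$ referenced in the statement.

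It then suffices to verify that the discrete distribution matches the continuous one on all monomials up to second order. Treating the angles as independent, this is a per-gate computation: for both $\theta \sim \mathrm{Unif}[0,2\pi]$ and $\theta \sim \mathrm{Unif}\{0,\pi/2,\pi,3\pi/2\}$ one checks $\Exp[\cos\theta] = \Exp[\sin\theta] = \Exp[\cos\theta\sin\theta] = 0$ and $\Exp[\cos^2\theta] = \Exp[\sin^2\theta] = 1/2$, which is exactly the rotational $2$-design property expressed at the level of $\Exp[\mathcal{R}_P(\theta)\otimes\mathcal{R}_P(\theta)]$. Matching second moments gate-by-gate, together with independence across gates, forces $M_{\mathrm{train}} = M_{\mathrm{test}}$ and $\mathbf{b}_{\mathrm{train}} = \mathbf{b}_{\mathrm{test}}$ term by term. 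Finally, by the strong law of large numbers $\hat{M}\to M_{\mathrm{train}}$ and $\hat{\mathbf{b}}\to\mathbf{b}_{\mathrm{train}}$ almost surely as the training size grows, so $\hat{\mathbf{w}}\to\mathbf{w}^{\star}$, the optimal test-distribution estimator.

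The main obstacle I anticipate is establishing the multilinear-affine structure rigorously once noise and the neighbor-circuit construction are in play: one must confirm that neither the noise channels nor the gate-insertion and noise-scaling schemes reintroduce dependence on any single $\theta_i$ beyond first order, so that the relevant moments never exceed degree two in each pair $(\cos\theta_i,\sin\theta_i)$. This is where an explicit, per-gate bookkeeping of the superoperator factorization is essential, and where an angle-dependent noise model (e.g.\ over-rotation errors) would require separate treatment. A secondary subtlety is the invertibility of $M$: if the neighbor features are linearly dependent, $M$ is singular and $\mathbf{w}^{\star}$ is non-unique, so the convergence claim should be phrased either for the predictions $\mathbf{w}^{\top}\mathbf{x}$ via the pseudo-inverse or under the regularization supplied by Lasso, which guarantees a unique and stable limit.
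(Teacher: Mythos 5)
Your proposal is correct and follows essentially the same route as the paper's proof: you reduce optimality to matching the population Gram matrix $\Exp[\vb{x}\vb{x}^{\top}]$ and cross-moment vector $\Exp[y\vb{x}]$ between the 2-design training distribution and the uniform-angle test distribution (the paper's Lemma~1), establish this match gate-by-gate via the second-moment property of $\{0,\pi/2,\pi,3\pi/2\}$ (the paper's rotational 2-design theorem, which you verify equivalently through scalar moments of the affine decomposition $\mathcal{R}_P(\theta)=A_0+\cos\theta\,A_c+\sin\theta\,A_s$ rather than the paper's explicit matrix identity), and conclude by passing the empirical regression to its population limit. The paper additionally records the same argument for Lasso regression and for shot-noise-corrupted features, and, like you, handles a singular Gram matrix via the pseudoinverse.
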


The detailed proof of \cref{thm:convergence_optimal} is provided in
Appendix~B.
We establish this result under both noiseless and shot-noise-affected
expectation value settings.

\paragraph{Limitation of the Clifford training method.}
Since single-qubit rotational gates with parameters are prevalent in many
near-term quantum algorithms, the 2-design training method offers broad
applicability and practical relevance.
Importantly, we prove that the $2$-design training method is provably superior
to the Clifford training method.
The former is not only more efficient and accurate, but also resolves a key
limitation of the latter: In the Clifford training method, the mean squared
error (MSE) obtained on the training dataset fails to converge to the
corresponding value on the target circuit in the average sense, and as a result,
the generalization error of the model is non-negligible, seriously limiting its
performance.

On the other hand, if we generate the training dataset with the 2-design
training method, the MSE of any linear model evaluated on the training dataset
exactly matches the MSE on the target parameterized quantum circuit.
This crucial equivalence ensures the advantage of our approach over the Clifford
training method that is widely applied in the literature.
Furthermore, we would like to remark that this property also allows us to
benchmark the performance of our approach on large-scale target quantum circuits
that are hard to simulate classically, as it only requires us to compute the
ideal expectation values for Clifford training circuits.
The proof of this property can be found in Appendix~D.

\paragraph{Scalability and efficiency of NIL.}
We now analyze the size of the training dataset required to obtain a reliable
linear estimator for practical deployment.
We prove that using Lasso regression with a constraint on the $\ell_1$-norm of
the linear estimator, our model requires only a \emph{logarithmic} number of
training circuits in $N$, the number of neighbor circuits.
This ensures that a good estimator can be learned efficiently for mitigating
noise in the target circuit, supporting scalability to large-scale quantum
circuits.
Additionally, we derive the classical runtime of NIL, showing its time
complexity remains polynomial in $N$.
The detailed derivations are provided in Appendix~C.
These properties collectively establish NIL as a scalable and practical approach
for QEM\@.

NIL possesses several additional noteworthy properties.
First, when examining the effect of shot noise caused by quantum measurements,
we find that its presence introduces an additional helpful regularization term
into the original linear regression problem from a statistical perspective.
This implies that a small amount of shot noise, which is common in practical
scenarios, can actually enhance the numerical stability of the linear
regression.
Moreover, we rigorously prove that to obtain a linear estimator with an additive
error $\varepsilon$ with NIL, the number of measurement shots required for
estimating each noisy neighbor circuit's expectation value is
$\mathcal{O}(1/\varepsilon^2)$.
Combining these facts, we have the following theorem.

\begin{theorem}[informal]\label{thm:convergence_MSE}
  When using the 2-design method to generate the training set in NIL, it
  suffices to choose the size of the training set $T$ to be
  $\mathcal{O}(\ln N / \varepsilon^{2})$ to achieve an error of $\varepsilon$ in
  MSE\@.
\end{theorem}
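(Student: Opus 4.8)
The plan is to recast the learning problem as constrained (Lasso-type) linear regression and then control the generalization gap by a uniform-convergence argument whose dimension dependence is only logarithmic. Write the linear estimator as $f_w = \sum_{j=1}^{N} w_j \navg{\mathcal{C}_j}$ and impose the $\ell_1$ constraint $\|w\|_1 \le B$ coming from the Lasso formulation. Let $R(w) = \Et\bigl[(f_w - \avg{\mathcal{C}})^2\bigr]$ denote the test MSE on the target parameterized circuit, and let $\hat R_T(w)$ be the empirical MSE over the $T$ Clifford training circuits produced by the 2-design method. The starting point is the exact-matching property established earlier, namely that the MSE of any linear model on 2-design training data equals its MSE on the target circuit in the average sense; this gives $\Exp\bigl[\hat R_T(w)\bigr] = R(w)$ for every $w$, so $\hat R_T$ is an unbiased proxy for $R$ and the entire problem reduces to bounding $\sup_{\|w\|_1 \le B} |\hat R_T(w) - R(w)|$.

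The second step is the generalization bound, and this is where the logarithmic dependence on $N$ enters. I would bound the empirical Rademacher complexity of the linear class $\mathcal{F}_B = \{x \mapsto \langle w, x\rangle : \|w\|_1 \le B\}$. Because every noisy expectation value satisfies $|\navg{\mathcal{C}_j}| \le 1$, the feature vectors obey $\|x\|_\infty \le 1$; applying Massart's finite-class lemma to the $2N$ vertices of the scaled $\ell_1$ ball yields a Rademacher complexity of order $B\sqrt{\ln N / T}$ --- the $\sqrt{\ln N}$ factor is precisely what the $\ell_1$ constraint buys us, in contrast to the $\sqrt{N/T}$ one would get from unconstrained least squares. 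Since labels and predictions are bounded (in $[-1,1]$ and $[-B,B]$ respectively), the squared loss is Lipschitz on the relevant domain, so Talagrand's contraction lemma transfers this complexity to the loss class up to a factor $\mathcal{O}(B)$. A standard symmetrization together with a bounded-differences (McDiarmid) inequality then gives, with probability $1-\delta$, $\sup_{\|w\|_1 \le B} |\hat R_T(w) - R(w)| = \mathcal{O}\bigl(B^2\sqrt{\ln N / T}\bigr) + \mathcal{O}\bigl(\sqrt{\ln(1/\delta)/T}\bigr)$. Via the usual empirical-risk-minimization inequality the excess test MSE of the learned estimator is at most twice this deviation, so requiring the right-hand side to be at most $\varepsilon$ and treating $B$ and $\delta$ as constants yields $T = \mathcal{O}(\ln N / \varepsilon^2)$, which is the claimed bound.

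Finally I would fold in measurement shot noise. Modeling each feature as an $M$-shot empirical mean, the observed value is $\navg{\mathcal{C}_j} + \xi_j$ with $\Exp[\xi_j] = 0$, $\Var[\xi_j] = \mathcal{O}(1/M)$, and $|\xi_j|$ bounded. This has two consequences I would track separately: the errors-in-variables structure adds a positive diagonal term to the empirical second-moment matrix --- exactly the beneficial $\ell_2$ (ridge) regularization noted in the paper, which can only improve conditioning --- while propagating the feature noise through $f_w$ perturbs the prediction by a term of size $\mathcal{O}(\|w\|_1/\sqrt{M}) = \mathcal{O}(B/\sqrt{M})$. Choosing $M = \mathcal{O}(1/\varepsilon^2)$ keeps this at the $\mathcal{O}(\varepsilon)$ level, matching the shot count stated above, so the total error remains $\mathcal{O}(\varepsilon)$ with $T = \mathcal{O}(\ln N / \varepsilon^2)$ training circuits and $\mathcal{O}(1/\varepsilon^2)$ shots per feature.

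The hardest part will be cleanly decoupling the shot-noise error from the statistical generalization error while preserving the logarithmic $N$-dependence. Concretely, once features are random one must verify that they remain $\ell_\infty$-bounded with high probability --- so that Massart's lemma, and hence the $\sqrt{\ln N}$ factor, still applies --- and that the diagonal shift induced in the second-moment matrix by errors-in-variables is genuinely benign, a positive ridge term rather than a distortion that enlarges the effective $\ell_1$-radius $B$ and thereby reintroduces polynomial dependence on $N$. A secondary subtlety is maintaining a uniform Lipschitz constant for the squared loss when boundedness of features and labels holds only with high probability rather than surely.
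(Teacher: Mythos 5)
Your proposal is correct and yields the same $T=\mathcal{O}(\ln N/\varepsilon^2)$ scaling, but it takes a genuinely different route from the paper's proof (\cref{subapp:estimation_trainingsize}, \cref{lemma:relation_MSE} and \cref{thmapp:convergence_MSE}). Where you control $\sup_{\|w\|_1\le B}\,\lvert \hat R_T(w)-R(w)\rvert$ by Rademacher complexity---Massart's lemma on the $2N$ vertices of the $\ell_1$ ball, Talagrand contraction for the squared loss, McDiarmid, then the ERM inequality---the paper exploits the quadratic structure of the loss directly: it writes both the training and test risks in terms of the empirical moments $\vb{A}_T^{N_s}$, $\vb{b}_T^{N_s}$, $Y_T$ and their population counterparts, applies Hoeffding's inequality to each of the $N^2+N+1$ entries (each a $T$-sample average of a random variable bounded by $\|O\|^2$), takes a union bound (the sole source of the $\ln(N^2/\delta)$ factor), and converts entrywise deviations into a bound uniform over the $\ell_1$ ball via H\"older's inequality, $\lvert\vb{c}^{\top}(\vb{A}_T^{N_s}-\vb{A}^{N_s})\vb{c}\rvert\le\gamma^2\,\|\vb{A}_T^{N_s}-\vb{A}^{N_s}\|_{\max}$, together with the analogous bounds for the linear and constant terms; the final ERM chaining step is identical to yours. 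Your route is essentially the textbook argument (Chapter 11 of Mohri et al., which the paper itself cites as containing ``a similar result'') and is more modular---it would survive replacing the squared loss by any Lipschitz loss---while the paper's is elementary, self-contained, and gives explicit constants, namely $T\ge\ln(6N^2/\delta)\,{(12\gamma^2\|O\|^2)}^2/\varepsilon^2$. The treatments of shot noise also differ: you propagate an errors-in-variables perturbation of size $\mathcal{O}(B/\sqrt{M})$ and budget the shots separately, whereas the paper bakes the $N_s$-shot averages into the definition of both the training features and the test risk, proves via \cref{lemma:same_variance} and \cref{thm:optimal_shot_noise} that all train and test moments (including the shot-noise-induced ridge term) coincide exactly, and then runs the same Hoeffding argument unchanged for every $N_s$. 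In that formulation the difficulties you flag at the end evaporate: the features are $\ell_\infty$-bounded surely, not merely with high probability, since each measurement outcome is an eigenvalue of $O$ and hence every $N_s$-shot average lies in $[-\|O\|,\|O\|]$; and the ridge term cannot inflate the effective $\ell_1$ radius, because it enters the training and test risks identically and the constraint $\|\vb{c}\|_1\le\gamma$ is imposed by the Lasso algorithm rather than derived from the data.
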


The detailed proofs are provided in Appendix~B and Supplementary
Appendix~C.

\section{Numerical Simulations and Comparison}\label{sub:test_set}

We now illustrate the effectiveness of NIL in QEM by benchmarking it against
existing QEM methods.
First, we provide numerical evidence supporting our theoretical result that
unlike the Clifford training method, NIL consistently converges to the optimal
linear estimator.
Second, motivated by the recent seminal experimental work~\cite{kim2023evidence}
demonstrating the power of the ZNE protocol in QEM, we show that its performance
can be further enhanced by integrating our approach.

We first conduct numerical experiments on several small-scale variational
quantum algorithms, for which the ideal expectation values of the circuits can be
classically simulated.
With access to the exact outputs of these non-Clifford circuits, we are able to
compare the performance of NIL with that of conventional QEM methods such as ZNE\@.
We examine three representative circuit types: variational ansatz circuits for
1D and 2D transverse-field Ising (TFI) models, and unitary coupled-cluster (UCC)
circuits for simulating the LiH molecule.
Unless stated otherwise, the noise model applied is local depolarizing noise with
a strength of 0.001 for single-qubit gates and 0.01 for two-qubit gates.
Importantly, our method does not require prior knowledge of the noise model and thus
remains applicable to arbitrary noise models in practical settings.

\paragraph{1D and 2D transverse-field Ising model.}
Variational Quantum Eigensolver (VQE) is a class of hybrid quantum-classical
algorithms designed to find the minimum eigenvalues of Hamiltonians, which has
been widely applied in quantum chemistry and
physics~\cite{peruzzo_variational_2014,yung_transistor_2014,McClean2016,Kandala2017}.
The basic workflow of VQE involves preparing a quantum state using a quantum
circuit with tunable parameters (called the ansatz circuit), measuring the
energy of the system, and then using a classical optimizer to iteratively adjust
the circuit parameters to minimize the energy.
Here, we use VQE as a testbed for our NIL technique.

Specifically, we consider two classes of target problems: the
transverse-field Ising (TFI) model and a quantum chemistry problem.
Given a graph $G = (V, E)$, the TFI Hamiltonian is defined as
\begin{align*}
  H_{\rm TFI} = -J \sum_{(i, j) \in E} Z_i Z_j - h \sum_{j \in V} X_j,
\end{align*}
where we set $J = 1$ and $h = 2$.
We consider two different graphs: a 1D line and a 2D grid.

For the 1D case, we choose two kinds of hardware-efficient
ansatz~\cite{Kandala2017, strikis_learning-based_2021}.
The first one, labeled as $\mathsf{vqe}$-$n$-$m$ and depicted in
\cref{fig:ansatz_circuits} (a), involves applying the block enclosed by the red
dotted line $m$ times, where $n$ is the number of qubits.
Inside a block, there is a layer of random Pauli rotations in the $X$, $Y$, or
$Z$ direction and a layer of CZ gates.
The second one, labeled as $\mathsf{vqe}$-$R_y$-$n$-$m$ and depicted in
\cref{fig:ansatz_circuits} (b), has a similar structure, but in this case there
are two layers of Pauli-$Y$ rotations and two layers of CZ gates inside a block.

For the 2D case, we choose the so-called Hamiltonian variational ansatz
(HVA)~\cite{wecker_progress_2015}.
For an $n_1 \times n_2$ grid, the ansatz circuit is labeled as
$\vqe$-$(n_1, n_2)$-$m$ and defined by
\begin{align*}
  \prod_{t=1}^m \exp(i \sum_{j \in V} \alpha_{j}^{(t)} X_j)
  \exp(i \sum_{(j,k)\in E} \beta_{jk}^{(t)} Z_j Z_k) \ ;
\end{align*}
see \cref{fig:ansatz_circuits} (c) for an example.
Note that unlike the original HVA, here we let all the angles in the rotation
gates be tunable.

On these problems we first compare the performance of NIL with that of the
learning-based QEM protocol introduced in
Ref.~\cite{strikis_learning-based_2021}.
For this, we adopt the neighbor generation strategy proposed therein: inserting
random Pauli operators at up to three randomly selected positions in both the
training and the target circuits.
We refer to the resulting circuits as \emph{weight-$3$ neighbors}, detailed
construction and discussion are provided in \cref{sec:methods}.
Based on this set of neighbor circuits, we prepare two training datasets of size
5,000 each: one generated via the 2-design training method, and the other via
the Clifford training method that replaces each non-Clifford single-qubit gate
by uniformly random Clifford gates.
Lasso regression is used to produce a linear estimator for QEM separately on
each dataset.

To evaluate and compare the performance of the obtained linear estimators for
QEM, we construct a test set of 1,000 non-Clifford circuits with rotation angles
uniformly sampled from $[0, 2\pi]$.
We generate corresponding neighbor circuits for these test circuits with the
same strategy as for the training circuits, then collect the noisy expectation
values for them as features and obtain the ideal expectation values as labels
via classical simulations.
Then we apply the two linear estimators obtained by 2-design training method and
the Clifford training method on the test dataset, and the corresponding MSEs can
be seen in \cref{fig:weight123_24_9_val}.
Note that to benchmark the performance of the two linear estimators, we also
apply Lasso regression directly to fit the test dataset, and obtain a third
linear estimator.

In \cref{fig:weight123_24_9_val} it can be seen that the MSE of the estimator
learned from NIL (blue curve) consistently decreases as the number of neighbor
circuits increases.
Moreover, the blue curves closely follow the linear estimator learned directly
from the test dataset (green curve), providing strong empirical support for our
theoretical claim that NIL can accurately approximate the optimal linear model.
In contrast, the Clifford training method (orange curve) consistently yields
higher test error.

Meanwhile, it can be observed that using only weight-$1$ neighbors---whose
number equals the number of gates in the target circuit---already yields
sufficiently low MSE, making it practical for QEM\@.
When restricting to weight-$1$ neighbors, NIL achieves approximately half the
MSE of the Clifford training method for VQE circuits for the 1D TFI Hamiltonian,
and an advantage of one order of magnitude for the 2D TFI Hamiltonian.
These results confirm once again that, for the target circuits considered, NIL
outperforms the Clifford training method in QEM performance.

We further evaluate the performance of NIL in combination with the ZNE
protocol~\cite{kim2023evidence}.
For a fair comparison, NIL chooses all the neighbor circuits as ZNE neighbors
(see \cref{sec:methods} for details), using the same 5,000 training circuits generated
previously.
The resulting linear model is then applied to the 1,000 non-Clifford test
circuits generated previously.
For the ZNE approach, we apply the same protocol as in
Ref.~\cite{kim2023evidence} (see \cref{sec:methods} for details) to the same set of test
circuits and compute the MSE between the mitigated and the ideal expectation
values.

The comparison results are summarized in \cref{tab:zne_comparison}.
It can be seen that NIL with ZNE neighbors consistently outperforms the original
ZNE protocol across all the test circuits, further demonstrating the power of
our approach.
It is worth noting that, as a flexible machine-learning-based approach, NIL
enables further improvement of the ZNE protocol by incorporating different types
of neighbor circuits.
A detailed comparison and application of this approach are provided in the
Appendix~H.

\begin{table}[h]
  \centering
  \begin{small}
  \begin{tabular}{lccc}
    \toprule
    Circuit & NIL & ZNE \\
    \midrule
    $\mathsf{vqe}$-$6$-$4$  & $1.82\times 10^{-6}$  & $1.09\times 10^{-4}$  \\
    $\mathsf{vqeRy}$-$6$-$4$ & {$3.80\times 10^{-6}$}  & $1.56\times 10^{-4}$ \\
    $\mathsf{vqe}$-$3$-$2$-$2$ & {$1.31\times 10^{-6}$}  & $1.92\times 10^{-4}$  \\
    $\mathsf{vqe}$-$8$-$4$ & {$2.63\times 10^{-6}$} & $1.58\times 10^{-3}$  \\
    $\mathsf{vqeRy}$-$8$-$4$ & {$5.78\times 10^{-6}$}  & $3.52\times 10^{-3}$ \\
    $\mathsf{vqe}$-$4$-$2$-$2$ & {$4.11\times 10^{-6}$}  & $3.63\times 10^{-3}$\\
    \bottomrule
  \end{tabular}    
  \end{small}
	\caption{Comparison of NIL and ZNE performance.
    Test MSE for NIL using ZNE neighbors versus the standard ZNE protocol on
    various VQE circuits.
    NIL consistently achieves an error reduction of two orders of magnitude.}%
  \label{tab:zne_comparison}
\end{table}

\begin{figure*}[htb]
  \captionsetup[subfigure]{justification=centering, labelfont = bf}
  \centering
  \begin{subfigure}[htb]{0.33\textwidth}
    \centering
    \includegraphics[width = \linewidth]{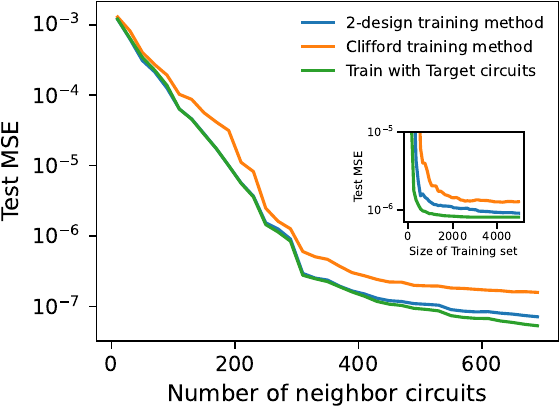}
    \subcaption{$\mathsf{vqe}$-6-4 (6 qubits, 13 layers)}
  \end{subfigure}%
  \hfill
  \begin{subfigure}[htb]{0.33\textwidth}
    \centering
    \includegraphics[width = \linewidth]{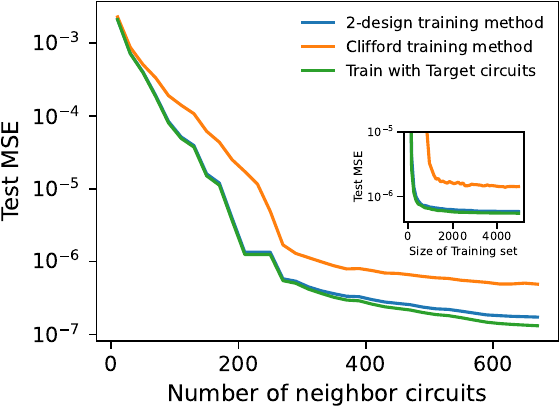}
    \subcaption{$\mathsf{vqe}$-$R_y$-$6$-$4$ (6 qubits, 17 layers)}
  \end{subfigure}%
  \hfill
  \begin{subfigure}[htb]{0.33\textwidth}
    \centering
    \includegraphics[width = \linewidth]{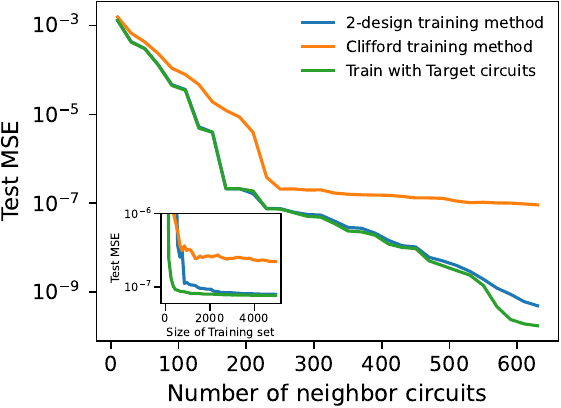}
    \subcaption{$\mathsf{vqe}$-(3, 2)-2 (6 qubits, 20 layers)}
  \end{subfigure}%

  \vspace{\baselineskip}

  \begin{subfigure}[htb]{0.33\textwidth}
    \centering
    \includegraphics[width = \linewidth]{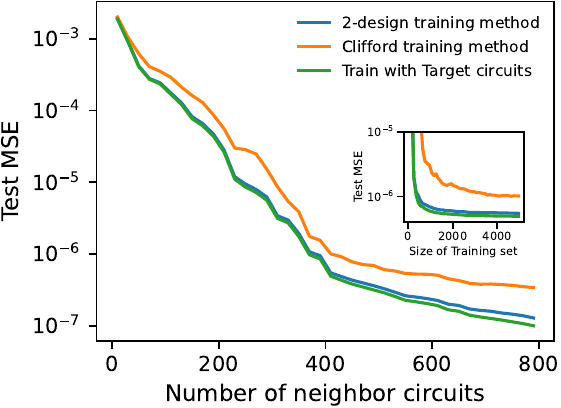}
    \subcaption{$\mathsf{vqe}$-$8$-$4$ (8 qubits, 13 layers)}
  \end{subfigure}%
  \hfill
  \begin{subfigure}[htb]{0.33\textwidth}
    \centering
    \includegraphics[width = \linewidth]{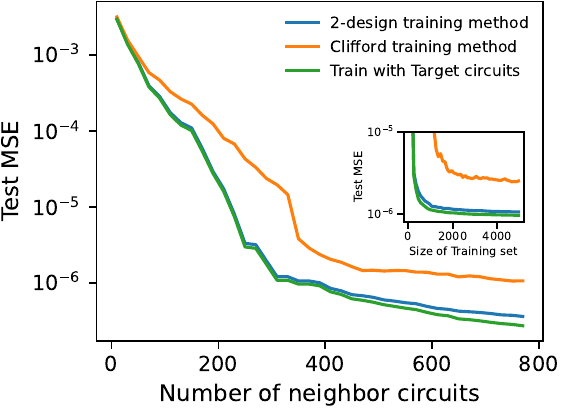}
    \subcaption{$\mathsf{vqe}$-$R_y$-$8$-$4$ (8 qubits, 17 layers)}
  \end{subfigure}%
  \hfill
  \begin{subfigure}[htb]{0.33\textwidth}
    \centering
    \includegraphics[width = \linewidth]{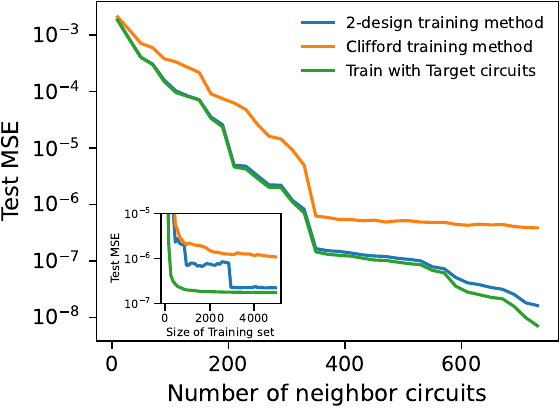}
    \subcaption{$\mathsf{vqe}$-(4, 2)-2 (8 qubits, 20 layers)}
  \end{subfigure}%

  \caption{Performance of learning-based QEM protocols using different sets of
    training circuits.
    The blue curve shows the MSE of the solution obtained via NIL, the green
    curve corresponds to the solution from direct fitting on the test dataset,
    and the orange curve represents the solution using the Clifford training
    method.
    The inset illustrates the decrease in MSE as the size of the training set
    increases, with neighbor circuits selected as all weight-$1$ neighbors.}%
  \label{fig:weight123_24_9_val}
\end{figure*}

\paragraph{UCC ansatz circuits for LiH.}

We also consider a quantum chemistry problem, i.e., we choose the UCC ansatz to
study the LiH and $\text{F}_2$ molecules~\cite{peruzzo_variational_2014} (see
Appendix~I for the circuit diagram).
As in~\cite{guo_experimental_2024}, we adopt a series of simplification
strategies for this problem, such as imposing the point group symmetry
constraint.
The description of the Hamiltonian and the ansatz circuit after simplification
can be found in Appendix~I.
Compared to the TFI Hamiltonian, molecular Hamiltonians from quantum chemistry
are generally more complex, where the number of terms scales as
$O(n^4)$~\cite{inoue_almost_2024}.

For this case, we first employ Pauli-insertion neighbors to compare the
performance of NIL with that of the Clifford training
method~\cite{strikis_learning-based_2021}, following the same procedure as
before.
Since this circuit has only four parameterized rotation gates, NIL requires a
training dataset of size $4^4 = 256$.
In sharp contrast, a complete training dataset of size $24^4$ is required for
the Clifford training method.
Similarly, we generate 1,000 non-Clifford circuits to construct a test set for
evaluating the learned linear estimators.
In addition, we apply Lasso regression directly to these 1,000 test circuits to
obtain a reference linear estimator, aiming to verify whether our method can
learn a model that converges to the actual optimal one.

As shown in \cref{fig:Performance_vqelih6}, the MSE of the linear model obtained
by our method (blue curve) on the test dataset decreases continuously to a value
close to zero as the number of neighbor circuits increases.
Furthermore, its MSE curve almost overlaps with that of the linear estimator
directly fitting the test dataset (green curve), clearly demonstrating the
strong generalization capability and the excellent QEM performance of our
approach.
Remarkably, as we can see in the inset of \cref{fig:Performance_vqelih6}, NIL
achieves a 4 order magnitude improvement over the Clifford training method when
the models employ all the weight-1 neighbor circuits.

\begin{figure}[htb]
  \centering
  \includegraphics[width = 0.4\columnwidth]{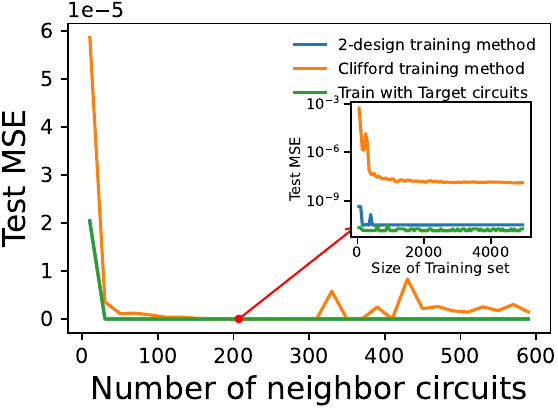}
  \caption{Performance of learning-based QEM protocols using different methods
    for generating training sets.
    The underlying circuit is the 6-qubit UCC ansatz circuit for LiH.}%
  \label{fig:Performance_vqelih6}
\end{figure}

Another crucial observation here is that as the number of neighboring circuits
increases, our approach consistently leads to better QEM performance.
This occurs because our solution converges to the optimal linear estimators, and
thus more neighbor circuits naturally reduce the obtained MSE on the target
circuit.
However, this is not necessarily true for the Clifford training method.
As shown by the orange curve in \cref{fig:Performance_vqelih6}, the performance
of the Clifford training method is not even stable.
This confirms that the results given by the Clifford training method contain
systematic errors.
In fact, we find that the linear estimator trained using the Clifford training
method actually converges to the optimal linear QEM strategy for a different
class of target circuits (see Appendix~G for details), which
explains the observed bias.

Finally, we compare the performance of the ZNE protocol and that of NIL on this
problem.
The ZNE implementation again follows Ref.~\cite{kim2023evidence}, while NIL is
trained using the 2-design method with ZNE neighbors, followed by Lasso
regression to obtain a linear estimator.
For the 1,000 test circuits generated previously, the ZNE method yields an MSE
of $1.23 \times 10^{-4}$, whereas NIL with ZNE neighbors achieves an MSE of
$2.22 \times 10^{-8}$---a four-order-of-magnitude improvement.

Based on our data in this case, achieving chemical accuracy (approximately
$6.6 \times 10^{-4}$~\cite{pople1999nobel,kirkpatrick2021pushing}, corresponding
to a squared error below $4.3 \times 10^{-7}$) requires only four ZNE neighbor
circuits to construct the neighbor map.
This example illustrates that for quantum circuits with relatively few
parameters, our method delivers excellent QEM performance with a small training
cost.
Furthermore, for the 12-qubit UCC ansatz of the $\mathrm{F}_2$ molecule, we
apply our NIL method and find that using only half of the weight-1 Pauli
neighbors suppresses about 99\% of the errors.
Further details are provided in Appendix~I.

\section{Scalability to Large Quantum Circuits}\label{subapp:perform_large_circuit}

A major challenge for learning-based QEM is benchmarking its performance on
circuits that are too large to simulate classically.
Our framework addresses this through a crucial theoretical result (Supplementary
Appendix~D), which guarantees that the average MSE on the training set is
identical to that on the test set.
This equality allows us to reliably benchmark NIL's performance on classically
intractable systems by evaluating it on the efficiently simulable Clifford
training circuits.

We first test our approach on the $\mathsf{vqe}$-(5,4)-2 circuit, which is for
the 2D transverse-field Ising model and consists of 20 qubits and 26 layers.
We employ all weight-$1$ Pauli neighbors as the set of neighbor circuits and
generate 1,000 training circuits using the 2-design training method.
Then we calculate the noisy expectation value for each neighbor circuit with
$10,000$ shots (see \cref{sec:methods} for details) and apply Lasso regression to fit the
training dataset.
The results are shown in \cref{subfig:performance_vqe20}.
It can be seen that the training MSE ($\approx$ test MSE) for this 20-qubit
circuit is very small, demonstrating that our method works well in this case.
\begin{figure}[htb!]
  \captionsetup[subfigure]{position=top, justification=raggedright, singlelinecheck=off, labelfont = bf}
  \centering
  \begin{subfigure}{0.4\textwidth}
    \centering
    \includegraphics[width = \linewidth]{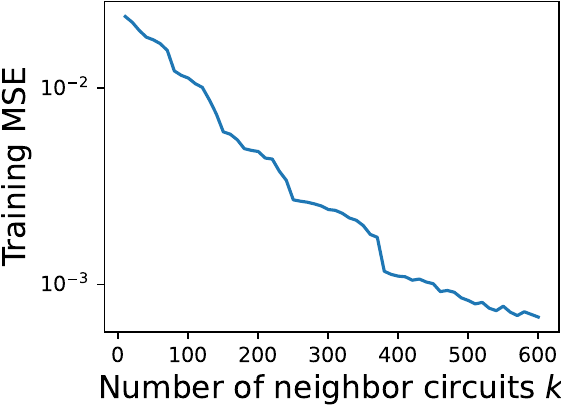}
    \subcaption{The $\mathsf{vqe}$-(5,4)-2 circuit (20 qubits, 26 layers).
      In this case, the MSE of the noisy outputs without mitigation is 0.15.}%
    \label{subfig:performance_vqe20}
  \end{subfigure}%
  \hfill
  \begin{subfigure}{0.4\textwidth}
    \centering
    \includegraphics[width = \linewidth]{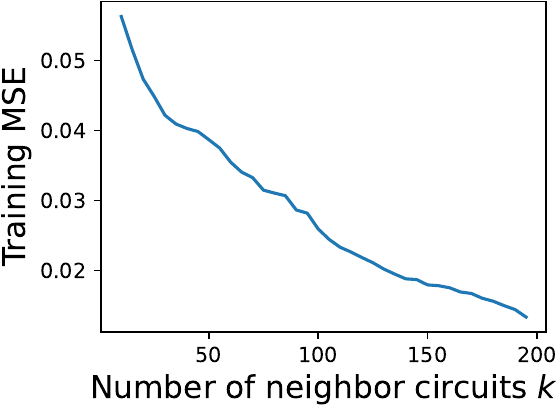}
    \subcaption{The $\mathsf{vqe}$-$R_y$-100-5 circuit (100 qubits, 21 layers).
      In this case, the MSE of the noisy outputs without mitigation is 0.41.}%
    \label{subfig:performance_vqeRy100}
  \end{subfigure}%
  \caption{{Performance of NIL on large-scale quantum
      circuits}.}\label{fig:large_circuits}
\end{figure}

Lastly, we evaluate the performance of our approach on a 100-qubit circuit
$\mathsf{vqe}$-$R_y$-100-5, which consists of 21 layers.
Here we use at most 200 random weight-1 Pauli neighbors and generate 1,000
training circuits via the 2-design training method.
The training procedure remains the same as in previous experiments.
As shown in \cref{subfig:performance_vqeRy100}, our method achieves an excellent
performance, further validating the effectiveness of our approach on large-scale
quantum circuits.

Notably, the overall sampling overhead for these three cases amounts to at most
$6\times 10^9$ shots, which represents a reasonable cost for state-of-the-art
superconducting quantum platforms, as estimated in~\cite{aharonov2025on}.
With this overhead, we achieve at least one order of magnitude of error
mitigation improvement.
In particular, for the 12-qubit UCC ansatz circuit of $\text{F}_2$, the
mitigation gain approaches nearly two orders of magnitude.
These results demonstrate that NIL can deliver excellent error mitigation
performance on large-scale circuits at experimentally feasible costs.

\section{Implementation Details}\label{sec:methods}
\paragraph{Pauli insertion neighbor circuits.}
For a given target quantum circuit, inspired by PEC, a natural strategy to
construct neighbor circuits is to perturb the original circuit by inserting a
set of gates from a certain set $\GG$ into $C$.
For example, when inserting a gate $P\in\GG$ after the single gate $U_j$, it
will be replaced by $P U_j$, while the rest of the circuit remains unchanged.
To recover the ideal expectation value, a learning-based QEM protocol needs to
collect the noisy expectation values from all circuits ${\{\noise{\C}_{j}\}}_j$,
which are obtained by inserting gates from $\GG$ after specific gates in the
original circuit $\C$.

For Pauli insertion neighbor circuits, $\GG := \{ \X, \Y, \Z \}$, the same as
the learning-based PEC approach~\cite{strikis_learning-based_2021}.
In the language of NIL, the original PEC approach uses a number of neighbor
circuits that is exponential in the number of noise positions.
To make the error mitigation scheme resource-efficient,
Ref.~\cite{strikis_learning-based_2021} proposed a truncation strategy that only
keeps the low-weight neighbor circuits, where a weight-$\ell$ neighbor circuit
is constructed by inserting quantum gates after at most $\ell$ positions of the
original circuit.
In what follows, we also generate neighbor circuits by inserting gates after
noisy operations, and we similarly restrict our attention to the low-weight
case.
Specifically, weight-$m$ Pauli neighbors refer to circuits where a Pauli gate is
inserted after at most $m$ noisy gates.
For a quantum circuit $C$ with $m$ possible noise positions, the set of weight-1
neighbor circuits is of size $O(m)$, and the set of weight-2 neighbor circuits
is of size $O(m^2)$.
In practice, the number of weight-2 neighbor circuits can already be
prohibitive, even for modest-sized quantum circuits with $m \sim 10^3$ gates.
Therefore, we need a more flexible strategy to choose neighbor circuits than the
simple truncation strategy.

For this purpose, we introduce a
randomized approach for selecting neighbor circuits.
For instance, to construct a scheme with $s$ neighbor circuits, where $s$ is
smaller than the total number of all weight-1 neighbor circuits, we randomly
select $s$ circuits from the weight-1 neighbor circuits as the chosen set.
This approach enables the error mitigation scheme to adapt to the tolerable
experimental cost.

\paragraph{Lasso regression.}
Lasso regression~\cite{tibshirani1996regression} is a special form of linear
regression that adds an $\ell_1$-norm constraint on the coefficients that need
to be optimized.
Suppose the training set is
\begin{equation*}
  \mathbb{S} = {\left\{ \left( \left( x_1^{(i)}, \ldots, x_N^{(i)} \right),
      y^{(i)} \right) \right\}}_{i=1}^T,
\end{equation*}
then Lasso regression solves the following optimization problem:
\begin{align}\label{eq:lasso_regression}
  \begin{split}
    \min_{c_1, \ldots, c_N}
    & \ \frac{1}{T} \sum_{i=1}^T {\left( \sum_{j=1}^N c_j x_j^{(i)} - y^{(i)} \right)}^2 \\
    \text{subject to} &\ \sum_{j=1}^N |c_j| \leqslant \gamma .
  \end{split}
\end{align}
where $\gamma > 0$ is the regularization.
In our numerical calculations, we solve the above convex optimization problem
using \texttt{CVXPY}~\cite{tibshirani1996regression}, with
\texttt{MOSEK}~\cite{mosek} as the backend solver due to its high precision and
robustness for large-scale convex programs.

To control the sparsity of the learned coefficients, we use Lasso regression
with different regularization strengths depending on the type of neighbor
circuits.
Specifically, we set $\gamma = 5$ when using ZNE-based neighbor circuits, and
$\gamma = 2$ for all other neighbor circuit constructions.
This choice helps to reduce the $\ell_1$ norm of the solution while maintaining
strong QEM performance.

\paragraph{ZNE implementation.}
The ZNE protocols considered include exponential extrapolation and linear
extrapolation, both implemented via \texttt{mitiq}~\cite{mitiq}.
As noted in~\cite{kim2023evidence}, exponential extrapolation suffers from
numerical instability in some cases.
When this is the case, we fall back to using linear extrapolation.

\paragraph{ZNE neighbor circuits.}
We also propose a neighbor map construction method inspired by ZNE\@.
Recall that in ZNE protocols, neighbor circuits are obtained by amplifying all
noise channels, i.e., effectively replacing each noise channel $\Lambda$ with
$\Lambda^{\alpha}$.
In this work, following the noise extrapolation approach
of~\cite{kim2023evidence}, we choose $\alpha \in \{1, 1.1, 1.34, 1.58\}$. 
We refer to the neighbor circuits constructed using this method as \emph{ZNE neighbor circuits}. In subsequent work~\cite{zhang2025quantum}, we systematically compare the performance under different noise amplification rates and demonstrate that our approach substantially improves the accuracy of variational quantum algorithms.

\paragraph{Expectation value estimation.}
For large-scale quantum circuits, we estimate the expectation values of
observables by executing the circuit multiple times and averaging the
measurement outcomes.
To reduce experimental overhead and facilitate implementation, we employ Pauli
basis measurements.
Specifically, the target observable is first decomposed into a linear
combination of Pauli operators.
For example, the TFI Hamiltonian can be written as
\begin{equation}
H_{\rm TFI} = -J \sum_{(i, j) \in E} Z_i Z_j - h \sum_{j \in V} X_j.
\end{equation}
We then group mutually commuting Pauli terms so that each group can be measured
simultaneously using a single measurement basis.
In the case of the TFI Hamiltonian, the decomposition yields two commuting
groups: ${\{Z_i Z_j\}}_{(i,j)\in E}$ and ${\{X_j\}}_{j\in V}$.
To estimate the overall expectation value, we allocate an equal number of
measurement shots to each group.
For instance, using 10,000 total shots, we assign 5,000 shots to each group.
Since the training circuits considered in our study are Clifford circuits
subject to Pauli noise, all measurement data can be efficiently generated using
Clifford simulators \texttt{stim}~\cite{gidney2021stim}.

\section{Discussions}
We have introduced NIL, a general and scalable framework for learning-based QEM\@.
At its core is a principled strategy for constructing sets of training circuits
based on quantum 2-designs.
We have also proven that this strategy yields the optimal linear estimator in an
average sense.
By employing Lasso regression, we are able to reduce the quantum sampling cost
and show a theoretical understanding for scalability of NIL\@: to achieve an
estimator with MSE within $\varepsilon$ of the optimum with confidence
$1-\delta$, it suffices to use $\mathcal{O}(\ln(N/\delta)/\varepsilon^2)$
training circuits, where $N$ is the number of features (neighbor circuits).
This scaling makes the method efficient and realistic even for large-scale
quantum circuits.
Numerical simulations on variational quantum circuits---including 1D and 2D
transverse-field Ising models and molecular Hamiltonians like LiH---demonstrate
the strong empirical performance of NIL\@.
We have also compared different neighbor circuit generation strategies, and find
that Pauli insertion neighbors consistently outperform CPTP-basis constructions.

Interestingly, an important and valuable feature of our framework lies in its
applicability to circuits that are beyond classical simulations.
We have shown that NIL can be reliably evaluated even on quantum circuits with
20 to 100 qubits or more, thanks to a theoretical guarantee equating the average
MSE on the set of Clifford training circuits to the MSE on target circuits,
under mild assumptions.
This enables benchmarking of learning-based QEM schemes on classically
intractable systems---an important step toward practical deployment.

While our analysis has focused on linear models and the MSE loss function,
several questions remain open.
Although we provide theoretical and numerical evidence that a low MSE implies
reliable mitigation with high probability for any specific circuit instance
(Appendix E), the relationship between average-case and worst-case
performance warrants further investigation.
Additionally, while our tests with neural networks did not show an advantage
over linear models (Appendix H), the potential benefits of more
sophisticated non-linear models in different noise regimes or for different
tasks remain an interesting direction for future research.

In conclusion, NIL provides a theoretically grounded, practically efficient, and
scalable framework for quantum error mitigation.
By unifying concepts from existing methods and leveraging a provably optimal
learning strategy, it offers a viable path toward extracting reliable results
from today's noisy quantum hardware and advancing the pursuit of quantum
advantage.


\appendix
\onecolumn
\section*{\centering Supplemental Material}
\appendix

\renewcommand{\contentsname}{\centering Contents}
\tableofcontents

\clearpage

\section{Implementation Details of the 2-Design Training Method}\label{app:2-design_method}

In this section, we present the detailed procedure of the 2-design training method.
Suppose that the target quantum circuit is $C$, and a set of training quantum
circuits $S = \{ C^{(1)}, C^{(2)}, \ldots, C^{(T)} \}$ is constructed, referred
to as ``training circuits''.
For each training circuit $C^{(i)}$, the chosen neighbor circuits is
\begin{equation*}
 \neighbor \bigl(C^{(i)} \bigr) = \bigl(C^{(i)}_{1}, \ldots, C^{(i)}_{N} \bigr).
\end{equation*}
For each $i=1, 2, \ldots, T$ and $j = 1, 2, \ldots, N$, we run the circuit
$\C^{(i)}_{j}$ and obtain the noisy target expectation value
$\bignavg{\C^{(i)}_{j}}$.
Then, the whole training data for $\mathbb{S}$ can be written as
\begin{equation*}
  \mathbb{S} := {\left\{\left(\left(x_1^{(i)}, \ldots, x_N^{(i)}\right)^{\top},
        y^{(i)}\right)\right\}}_{i=1,\dots,T},
\end{equation*}
where
\begin{equation}\label{eq:features}
  x^{(i)}_j = \Bignavg{\C^{(i)}_{j}} ,
\end{equation}
and
\begin{equation}\label{eq:labels}
  y^{(i)} = \avg{\C^{(i)}}.
\end{equation}
That is, for each training quantum circuit $C^{(i)}$ in $S$, the final
training data includes the noisy expectation values of all its neighbors, and
the noiseless expectation value $\avg{\C^{(i)}}$ itself.
Then, one needs to find a proper function by which the
value of $\avg{\C^{(i)}}$ can be predicted well based on those of
$\bignavg{\C^{(i)}_{j}}$ for all the training data.
The hope is that machine learning can capture the pattern of quantum noise,
enabling the learned function to predict the ideal expectation value as well on the
target quantum circuit $C$ by analyzing the noisy expectation values of all its
neighboring circuits.


A key requirement in selecting the training circuits $C^{(i)}$ is ensuring that
the noiseless expectation value $\langle C^{(i)} \rangle$ can be computed both
precisely and efficiently.
A common strategy for this purpose is to choose Clifford training circuits, as
they have the desirable property that the exact expectation values
$\langle C^{(i)} \rangle$ can be computed
efficiently~\cite{gottesman_heisenberg_1999, aaronson_improved_2004}.

In learning-based QEM, proper construction of training data is crucial for the
success of machine learning models.
One of our main contributions in this work is the discovery of an effective
strategy, called \emph{$2$-design training}, that generates training quantum
circuits based on the mathematical tool of $t$-design~\cite{gross_evenly_2007}.
In particular, we demonstrate that this strategy enables us to efficiently
identify the optimal $\combine$ function among all linear options, outperforming
even neural networks in terms of performance.

\subsection{Target Quantum Circuits}\label{subapp:test_set}

Before introducing our new approach, we would like to clarify the target quantum
circuits (sometimes referred to as ``test circuits'') for our QEM task, which
are a family of quantum circuits sharing the same structure.
These circuits can be written as
$C(\vb*{\theta}) = U_m(\theta_m) \cdots U_1(\theta_1)$, where
$\vb*{\theta} = (\theta_1,\ldots,\theta_m)$, and each $\theta_i$ lies in the
interval $[0, 2\pi]$.
Here, we assume that each unitary operation $U_i(\theta_i)$ is either a
single-qubit parameterized rotation gate $R_{\sigma}(\theta_i)$ with the
rotation angle $\theta_i$ or a fixed one- or two-qubit Clifford gate $F_i$
without parameters, where $\sigma = x, y, z$ corresponds to the rotation axis.
See \cref{fig:diagram_target_circuit} for an example.
Additionally, after fixing the circuit structure, we assume that the noise
models of all quantum gates remain unchanged when $\theta_i$'s are varied.

\begin{figure}[htbp!]
  \centering
  \includegraphics[width = 0.4\columnwidth]{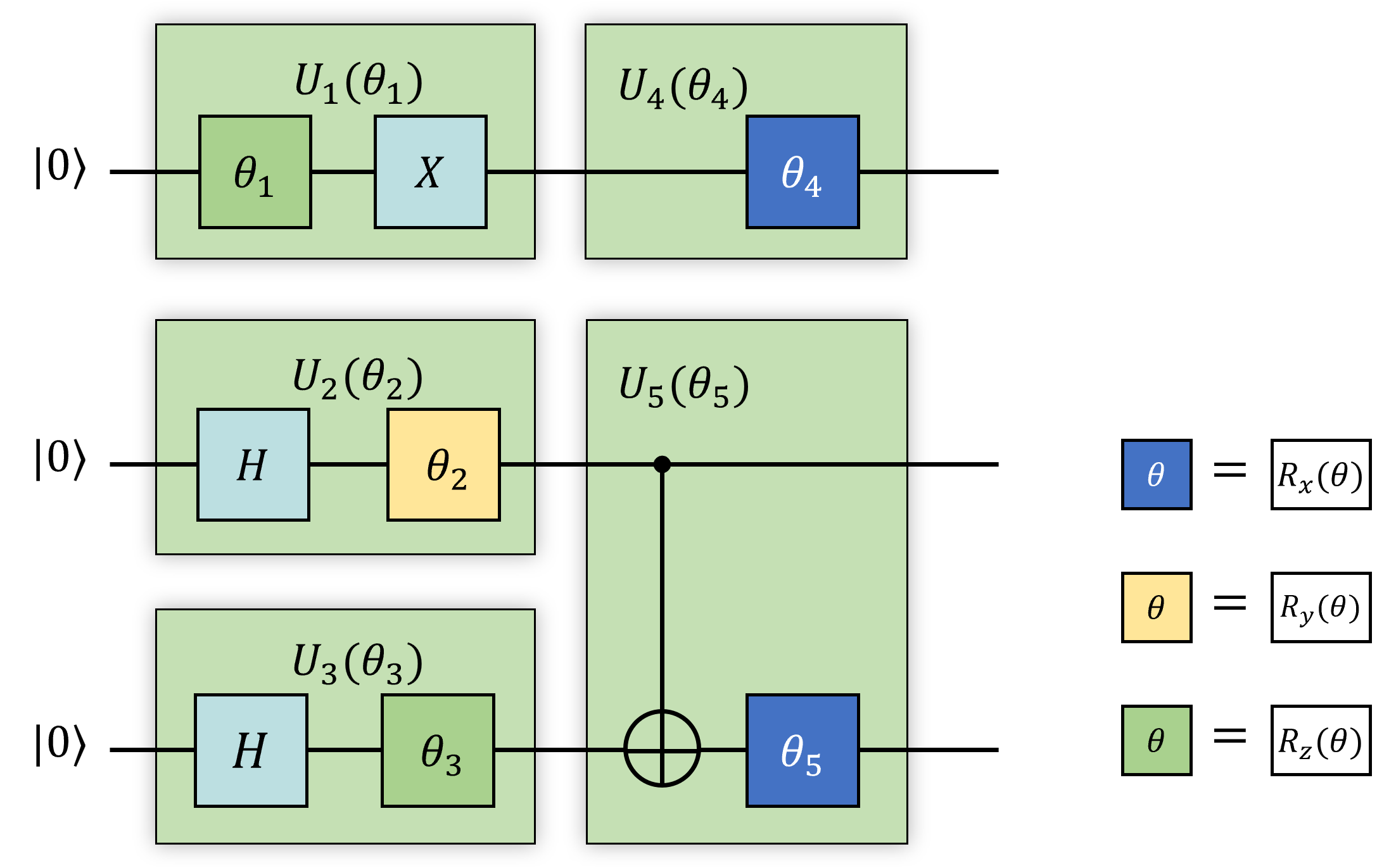}
  \caption{A typical parameterized quantum circuit $C(\vb*{\theta})$.
    This circuit contains six parameters and can be represented in the form of
    $C(\vb*{\theta}) = U_5(\theta_5) \cdots U_1(\theta_1)$, where each
    $U_i(\theta_i)$ is composed of several gates enclosed within a light green
    frame.
    For instance,
    $U_1(\theta_1)= XR_{z}(\theta_1)$.}\label{fig:diagram_target_circuit}
\end{figure}

This type of quantum circuit is very common for near-term quantum algorithms
such as variational quantum algorithms~\cite{peruzzo_variational_2014,
  farhi_quantum_2014}, and covers nearly all hardware-efficient variational
quantum circuits.
In a typical variational quantum algorithm, one often constructs an ansatz
circuit $C(\vb*{\theta})$ and adjusts the parameters $\vb*{\theta}$ such that
the output of the circuit can approximate the ground state of a Hamiltonian $H$.
Since optimal solutions may appear at any value of $\vb*{\theta}$, our objective
is to achieve effective error mitigation for all potential values of
$\vb*{\theta}$ at a tolerable cost, rather than to focus on a specific
individual circuit or a specific value of $\vb*{\theta}$.

\subsection{Data Generation Method}\label{subapp:training_set}

Previous works typically construct the set of training quantum circuits by
maintaining the structure of the target quantum circuit while replacing each
(single-qubit) non-Clifford gate with a Clifford gate uniformly sampled from all
24 single-qubit Clifford gates~\cite{strikis_learning-based_2021}
(Clifford training method), or from a subset of these Clifford
gates~\cite{czarnik2021error}.
We refer to this strategy for constructing training quantum circuits as the
\emph{Clifford training method}, which is a heuristic approach without rigorous
performance analysis.

In contrast, the \emph{$2$-design training} method (2-training method) we
propose replaces each non-Clifford gate with one of the \emph{four} gates
forming a rotational $2$-design, as will be elaborated shortly.
Recall that the single-qubit parameterized gates in the target circuits are
Pauli rotation gates $R_{\sigma}(\theta)$ with $\sigma = x, y, z$.
The four Clifford gates for each gate $R_{\sigma}(\theta)$ are
$R_{\sigma}(\theta_{0})$, where $\theta_{0}$ is chosen from
$\{0, \pi/2, \pi, 3\pi/2\}$.
Besides significantly reducing the computational cost of the training stage, we
will demonstrate that the effectiveness of the $2$-design training method has a
solid theoretical foundation in \cref{app:optimality}, and its performance is numerically verified to
significantly surpass that of the Clifford training method.

\begin{algorithm}[htb!]
  \textbf{Input:} Target parameterized circuit $C(\vb*{\theta})$\\
  \textbf{Output:} A set $S$ of training circuits
  \begin{algorithmic}[1]
    \Repeat%
      \For{each gate in the circuit $C(\vb*{\theta})$}
        \If{the gate is of the form $R_{\sigma}(\theta)$ ($\sigma = x,y,z$)}
          \State{Replace $R_{\sigma}(\theta)$ with $g\in \mathbb{C}_1$ randomly,
            where $\mathbb{C}_1$ is the single-qubit Clifford group.}
        \EndIf{}
      \EndFor{}
      \State{Add the new circuit to $S$}
    \Until{Collect sufficient circuits for training}
  \end{algorithmic}
  \caption{Strategy for constructing training circuits with all single-qubit
    Clifford gates}\label{alg:gen_training_all_clifford}
\end{algorithm}

\begin{algorithm}[htb!]
  \textbf{Input:} Target parameterized circuit $C(\vb*{\theta})$\\
  \textbf{Output:} A set $S$ of training circuits
  \begin{algorithmic}[1]
    \Repeat%
      \For{each gate in the circuit $C(\vb*{\theta})$}
        \If{the gate is of the form $R_{\sigma}(\theta)$ ($\sigma = x,y,z$)}
          \State{$\theta \gets_{\$} \{0, \pi/2, \pi, 3\pi/2\}$}
        \EndIf{}
      \EndFor{}
      \State{Add the new circuit to $S$}
    \Until{Collect sufficient circuits for training}
  \end{algorithmic}
  \caption{Strategy for constructing training circuits with quantum rotation
    $2$-design}\label{alg:gen_training}
\end{algorithm}

In 2-training method, ``$\gets_{\$}$'' means ``sampled uniformly at random
from''.
We set the size of the training set $T$ to be sufficiently large so that the
loss function (defined later) remains relatively stable during training and
converges eventually.
In practice, we can gradually increase the size of $T$, monitor the loss
function fluctuations, and add more training circuits if needed.

Additionally, we consider a training set generation method that produces
training circuits with a shallow non-Clifford layer, bringing the training
circuits closer to the test circuit.
See \cref{app:mixed_non_clf} for more details.


\subsection{Loss Function and Learning Models}\label{subapp:learning_algorithms}

After preparing the training set, we need to train an appropriate function
$\combine$ that fits the training set.
The loss function is chosen to be the mean squared error (MSE), which is a
widely used metric to quantify the performance of machine learning models and is
defined as
\begin{equation}\label{MSE}
  {\rm MSE} := \frac{1}{T}\sum_{i=1}^{T}{\left(y^{(i)}-\hat{y}^{(i)}\right)}^2.
\end{equation}
Here, $y^{(i)}$ is the noiseless expectation value of the $i$-th training
circuit, $\hat{y}^{(i)}$ is the corresponding value predicted by the model, and
$T$ is the size of the training set.


In this work, we analyze and compare three learning models designed to minimize
the MSE on the training set.
We first consider two linear models for $\combine$: linear
regression and Lasso regression.
Specifically, given the training dataset $\mathbb{S}$, a linear regression
solves the following optimization problem:
\begin{equation}\label{eq:linear_regression}
  \min _{c_1, \ldots, c_N} \frac{1}{T} \sum_{i=1}^{T}
  {\left(\sum_{j=1}^N c_j x_j^{(i)}-y^{(i)}\right)}^2,
\end{equation}
where $c_j$'s are the parameters to learn.
We solve this problem using the \texttt{LinearRegression} function from
\texttt{sklearn}~\cite{scikit-learn}.

It is straightforward to see that \cref{eq:linear_regression} is a Least Squares
(LS) problem~\cite{bjorck1996numerical}.
However, the $\ell_1$ norm of the solution obtained from
\cref{eq:linear_regression} can be too large, making the obtained $\combine$
impractical.
To overcome this problem, we use Lasso
regression~\cite{tibshirani1996regression}, which imposes an $\ell_1$ norm
constraint on the parameters:
\begin{align}
\label{eqapp:lasso_regression}
\begin{split}
  \min _{c_1, \ldots, c_N} &\ \frac{1}{T} \sum_{i=1}^{T}
  {\left(\sum_{j=1}^N c_j x_j^{(i)}-y^{(i)}\right)}^2 \\
   \text{subject to } &\  \sum_i |c_i| \leqslant \gamma,
\end{split}
\end{align}
where $\gamma > 0$.
In our numerical calculations, we solve the convex optimization problem in
\cref{eqapp:lasso_regression} using \texttt{CVXPY}~\cite{tibshirani1996regression},
with \texttt{MOSEK}~\cite{mosek} as the backend solver due to its high precision
and robustness for large-scale convex programs.

Using Lasso regression, we can reduce the $\ell_1$ norm of the solution by
appropriately choosing $\gamma$.
For instance, in linear regression, the value of $\sum_i |c_i|$ can reach as
high as $10^4$ in some cases we studied, whereas with Lasso regression, it can
be reduced to less than 5 while the MSE only increases slightly.
As previously discussed, this reduction in the $\ell_1$ norm implies that the
computational cost of the QEM protocol will decrease by orders of magnitude.
Another advantage of Lasso regression is its tendency to set many coefficients
to zero, meaning that in practical applications, we can omit constructing
neighboring circuits of the target circuit for those coefficients that vanish.
\section{Optimality of the $2$-Design Training Method}\label{app:optimality}

The mathematical properties of the $2$-design training method are analyzed
below.
Specifically, we will prove that the function $\combine$ obtained by the linear
regression or the Lasso regression based on the training set generated by
2-training method is, in some sense, optimal among all linear functions as
$T \to \infty$.
To achieve this, we first define a concept called ``\emph{quantum rotation
  $t$-design}'' and show that the gate set produced in 2-training method forms a
quantum rotation $2$-design.
Then, we will prove that only $\mathcal{O}(\ln{(N/\delta)}/\varepsilon^2)$
training circuits are needed to obtain a solution within $\varepsilon$ distance
from the optimal linear solution {with probability at least $1-\delta$}, where
$N$ is the number of neighbor circuits.
Lastly, we will argue that achieving a small MSE on the training set guarantees
that the worst-case error on the test set remains small with high probability.

\subsection{Quantum Rotation $t$-design}\label{subapp:rotation_design}

Let $R_{\hat{n}}(\theta) := \exp (-i \theta \hat{n} \cdot \va{\sigma} / 2)$,
where $\hat{n} = \left(n_x, n_y, n_z\right)$ is a unit vector and
$\va{\sigma} = (X, Y, Z)$.
It is not hard to see that the set
$\mathcal{G}_{\hat{n}} := {\{ R_{\hat{n}}(\theta) \}}_{\theta \in [0, 2\pi]}$
forms a group, which is a subgroup of the single-qubit unitary group
$\mathbb{U}(2)$.
Similar to unitary $t$-design~\cite{gross_evenly_2007}, here we consider the
$t$-design over the group $\mathcal{G}_{\hat{n}}$, which we call the
\emph{quantum rotation $t$-design}.

\begin{definition}
  A set of  unitary matrices ${\left\{A_i\right\}}_{i=1}^K$ on $\CC^{2\times 2}$
  is called a \emph{quantum rotation $t$-design} with respect to
  $R_{\hat{n}}(\theta)$, if
  \begin{align*}
    \frac{1}{K} \sum_{i=1}^K {\left(A_i \otimes A_i^{\dagger} \right)}^{\otimes t} =
    \frac{1}{2\pi} \int_{0}^{2\pi} {\left(R_{\hat{n}}(\theta)
    \otimes R_{\hat{n}}(-\theta) \right)}^{\otimes t} d\theta.
  \end{align*}
\end{definition}

We now prove that the gate set utilized in 2-training method actually
forms a quantum rotation $2$-design.
\begin{theorem}\label{thm:two_design}
  ${\left\{R_{\hat{n}}(\theta)\right\}}_{\theta=0, \pi/2, \pi, 3\pi/2}$ is a
  quantum rotation $2$-design with respect to
  ${\{ R_{\hat{n}}(\theta) \}}_{\theta \in [0, 2\pi]}$.
\end{theorem}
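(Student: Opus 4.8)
The plan is to reduce the identity to a diagonal, entrywise computation by exploiting rotational covariance, and then to finish with an elementary character sum whose exactness is the real content of the statement. Throughout write $M_{\hat n}(\theta) \defeq {\bigl(R_{\hat{n}}(\theta) \otimes R_{\hat{n}}(-\theta)\bigr)}^{\otimes 2}$, so that the claim is $\tfrac14 \sum_{\theta \in \{0,\pi/2,\pi,3\pi/2\}} M_{\hat n}(\theta) = \tfrac{1}{2\pi}\int_0^{2\pi} M_{\hat n}(\theta)\,d\theta$.

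First I would remove the dependence on the axis $\hat n$. Choosing $V \in \mathrm{SU}(2)$ with $V(\hat{n}\cdot\va{\sigma})V^{\dagger} = Z$ — which exists because the adjoint action of $\mathrm{SU}(2)$ on the Pauli vector realizes all of $\mathrm{SO}(3)$ — gives $V R_{\hat{n}}(\theta) V^{\dagger} = R_{\hat z}(\theta)$ and $V R_{\hat{n}}(-\theta) V^{\dagger} = R_{\hat z}(-\theta)$ for every $\theta$. Conjugating the operator on the four tensor factors by $V^{\otimes 4}$ therefore sends $M_{\hat n}(\theta)$ to $M_{\hat z}(\theta)$ pointwise in $\theta$, hence commutes through both the finite sum and the integral. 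Since conjugation by a fixed unitary is an invertible linear map, the identity for a general $\hat n$ is equivalent to the identity for $\hat n = \hat z$, and I may assume $\hat n = \hat z$ from here on.

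With $\hat n = \hat z$ the matrix $R_{\hat z}(\theta) = \diag\bigl(e^{-i\theta/2}, e^{i\theta/2}\bigr)$ is diagonal in the computational basis, so $M_{\hat z}(\theta)$ is diagonal on ${(\CC^2)}^{\otimes 4}$ for every $\theta$; both sides of the target identity are thus diagonal and it suffices to compare diagonal entries. Indexing the diagonal by $\ket{b_1 b_2 b_3 b_4}$ and setting $s_j \defeq (-1)^{b_j} \in \{\pm 1\}$, a direct evaluation of the four phases (with the middle and fourth factors carrying $-\theta$) shows the corresponding entry equals $e^{i m \theta}$, where $m = \tfrac12\bigl(-s_1 + s_2 - s_3 + s_4\bigr)$ is an integer lying in the range $\{-2,-1,0,1,2\}$.

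The computation then reduces to a single observation about characters: the continuous average $\tfrac{1}{2\pi}\int_0^{2\pi} e^{im\theta}\,d\theta$ equals $\delta_{m,0}$, whereas the four-point average $\tfrac14\sum_{j=0}^{3} e^{i m \pi j/2} = \tfrac14\sum_{j=0}^{3} (i^{m})^{j}$ equals $1$ when $m \equiv 0 \pmod 4$ and $0$ otherwise. These two quantities agree entry by entry precisely because every frequency arising above satisfies $|m| \le t = 2$, so the only multiple of $4$ in range is $m = 0$. This final range check is the main point I expect to require care — it is exactly the place where the second-moment ($t=2$) hypothesis and the specific four angles $\{0,\pi/2,\pi,3\pi/2\}$ enter, since the equispaced four-point quadrature reproduces all characters with $|m|\le 3$ but would fail for the first frequency $|m|=4$ that a higher design would probe. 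Having matched all diagonal entries, the two operators coincide and the theorem follows.
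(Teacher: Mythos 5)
Your proof is correct, but it takes a genuinely different route from the paper's. The paper works directly in the operator basis: writing $M = \hat{n}\cdot\va{\sigma}$, it expands ${\bigl(R_{\hat{n}}(\theta)\otimes R_{\hat{n}}(-\theta)\bigr)}^{\otimes 2}$ into the $16$ terms $\bigotimes_{j=1}^{4} M^{i_j}$, evaluates the trigonometric moments $\frac{1}{2\pi}\int_0^{2\pi}\cos^4\frac{\theta}{2}\,d\theta = \frac{1}{2\pi}\int_0^{2\pi}\sin^4\frac{\theta}{2}\,d\theta = \frac38$, $\frac{1}{2\pi}\int_0^{2\pi}\cos^2\frac{\theta}{2}\sin^2\frac{\theta}{2}\,d\theta = \frac18$ (odd moments vanishing), and then verifies that the four-point average reproduces each coefficient, so both sides equal the explicit operator $\frac38 I^{\otimes 4}+\frac38 M^{\otimes 4}+\frac18\sum_{i_1+\cdots+i_4=2} i^{-i_1-i_2+i_3+i_4}\bigotimes_j M^{i_j}$. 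You instead (i) use $SU(2)$ covariance to rotate the axis to $\hat{z}$ (legitimate, since the adjoint action surjects onto $SO(3)$ and conjugation by $V^{\otimes 4}$ is an invertible linear map commuting with sums and integrals), (ii) observe that both sides are then diagonal with entries given by characters $e^{im\theta}$, $m\in\{-2,\dots,2\}$, and (iii) reduce the theorem to the fact that the four equispaced angles average $e^{im\theta}$ to $[\,m\equiv 0 \pmod 4\,]$ while the continuous average is $\delta_{m,0}$, and these agree because $|m|\le 2 < 4$. Both arguments are complete; the trade-off is that the paper's computation is self-contained and yields the explicit second-moment operator (potentially reusable elsewhere), whereas yours isolates the underlying mechanism --- an aliasing/quadrature argument on the circle --- which explains \emph{why} four points suffice and generalizes for free: the same reasoning shows these four gates even form a rotational $3$-design, and that $K$ equispaced angles form a rotational $t$-design whenever $t<K$. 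One cosmetic point: your phrase ``the middle and fourth factors carry $-\theta$'' should read ``the second and fourth factors,'' matching the definition ${(A\otimes A^{\dagger})}^{\otimes 2}$; your formula $m=\frac12(-s_1+s_2-s_3+s_4)$ is already consistent with this, and the paper's own reordering into $R_{\hat{n}}(\theta)^{\otimes 2}\otimes R_{\hat{n}}(-\theta)^{\otimes 2}$ is equally harmless since a permutation of tensor factors does not affect the identity.
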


\begin{proof}
  Define the matrix $M$ as $M = \hat{n} \cdot \va{\sigma}$, and then
  $R_{\hat{n}}(\theta) = \exp (-i \theta M / 2)$.
  Using the relations
  \begin{align*}
    &\frac{1}{2 \pi} \int_0^{2 \pi} \cos ^4 \frac{\theta}{2} {\rm d}\theta
      = \frac{1}{2 \pi} \int_0^{2 \pi} \sin ^4 \frac{\theta}{2} {\rm d}\theta
      = \frac{3}{8},\\
    & \frac{1}{2 \pi} \int_0^{2 \pi} \cos \frac{\theta}{2}
      \sin ^3 \frac{\theta}{2} {\rm d}\theta
      = \frac{1}{2 \pi} \int_0^{2 \pi} \cos ^3 \frac{\theta}{2}
      \sin \frac{\theta}{2} {\rm d}\theta=0, \\
    & \frac{1}{2 \pi} \int_0^{2 \pi} \cos ^2 \frac{\theta}{2}
      \sin ^2 \frac{\theta}{2} {\rm d}\theta=\frac{1}{8},
  \end{align*}
  we have
  \begin{align*}
    & \frac{1}{2\pi}\int_{0}^{2\pi} {R_{\hat{n}}(\theta)}^{\otimes 2} \otimes
      {R_{\hat{n}}(-\theta)}^{\otimes 2} {\rm d}\theta \\
    = & \sum_{i_1,\ldots, i_4=0}^1 \frac{1}{2 \pi} \int_0^{2 \pi} i^{-i_1-i_2+i_3+i_4}
        {\Bigl(\cos \frac{\theta}{2}\Bigr)}^{4-\sum_j i_j}\times\\
    & \qquad \qquad {\Bigl(\sin \frac{\theta}{2}\Bigr)}^{\sum_j i_j}
      \times \biggl(\bigotimes_{j=1}^{4} M^{i_{j}} \biggr)\,
      {\rm d}\theta\\
    = & \frac{3}{8} I^{\otimes 4} + \frac{3}{8} M^{\otimes 4} +
        \frac{1}{8}  \sum_{\substack{i_1, \ldots, i_4=0 \\
      i_1+\cdots+i_4=2}}^1 i^{-i_1-i_2+i_3+i_4}
      \bigotimes_{j=1}^{4} M^{i_{j}}.
  \end{align*}

  Meanwhile, it can be verified that
  \begin{align*}
    & \frac{1}{4} \sum_{k=0}^3 {R_{\hat{n}} \Bigl( \frac{k\pi}{2}
      \Bigr)}^{\otimes 2} \otimes {R_{\hat{n}} \Bigl(\frac{-k\pi}{2}
      \Bigr)}^{\otimes 2} \\
    = & \frac{1}{4} \left(1+\frac{1}{4}+\frac{1}{4}\right)I^{\otimes 4}+
        \frac{1}{4} \left(1+\frac{1}{4}+\frac{1}{4}\right)M^{\otimes 4}\\
    & + \frac{1}{4} \left(\frac{1}{4}+\frac{1}{4}\right)
       \sum_{\substack{i_1, \ldots, i_4=0 \\
      i_1+\cdots+i_4=2}}^1 i^{-i_1-i_2+i_3+i_4}
      \bigotimes_{j=1}^{4} M^{i_{j}}\\
    = & \frac{3}{8} I^{\otimes 4} + \frac{3}{8} M^{\otimes 4} +
        \frac{1}{8}  \sum_{\substack{i_1, \ldots, i_4=0 \\
      i_1+\cdots+i_4=2}}^1 i^{-i_1-i_2+i_3+i_4}
      \bigotimes_{j=1}^{4} M^{i_{j}},
  \end{align*}
  which concludes the proof.
\end{proof}

Now we are ready to prove the optimality of the solutions obtained by linear
regression and Lasso regression on the training circuits generated by
2-training method.

\subsection{{Asymptotic Optimality}}\label{app:prove_optimal}

We prove that under the 2-design training method, both linear regression and
Lasso regression achieve optimal solutions on the training set as its size $T$
approaches infinity.
Interestingly, these solutions coincide with the optimal solutions on the test
set on average.
This demonstrates the strong generalization capability of the 2-design training
method.
We first analyze the idealized setting without quantum measurement shot noise,
then extend our proof to account for shot noise effects.

\subsubsection{Optimality without Shot Noise}

Firstly, recall that the training set is
\begin{equation*}
  \mathbb{S} = {\left\{\left(\left(x_1^{(i)}, \ldots, x_N^{(i)}\right)^{\top},
        y^{(i)}\right)\right\}}_{i=1,\ldots,T},
\end{equation*}
where the feature vector
$\vb{x}^{(i)} = \bigl(x_1^{(i)}, \ldots, x_N^{(i)}\bigr)^{\top}$ records the
noisy expectation values for the $N$ neighbor circuits of the $i$-th training
circuit {$C^{(i)}$}, $y^{(i)}$ represents the ideal expectation value for
this training circuit, and $T$ is the number of training circuits.
We denote by $P_{\mathbb{S}}$ the underlying probability distribution of the
training circuit $C^{(i)}$, which is determined by how we generate the training
set $\mathbb{S}$, i.e., how the single-qubit Pauli rotation gates are replaced
by Clifford gates.
In the 2-design training method, the rotation angles of the Pauli gates are randomly reset to one of four distinct values.

With a slight abuse of notation, we also denote $P_{\mathbb{S}}$ as the
probability distribution of all types of data generated from the training
circuit $C^{(i)}$, for example, the feature vector $\vb{x}^{(i)}$ and the label
$y^{(i)}$.
Similarly, we define $P_{\rm test}$ as the underlying distribution of the target
circuit, where all parameters in the parameterized circuits are sampled
uniformly in $[0, 2\pi)$.

\paragraph{Linear regression.}
The goal of a linear regression model is to learn a linear function such that, for each training
circuit, it can simultaneously map the $N$ noisy expectation values from
neighbor circuits to the ideal expectation value, i.e., the mapping has the form
  $\hat{y}=\sum_{j=1}^N c_j x_j$,
where $c_j$'s are the parameters we need to learn.
For this, we aim to minimize the average MSE on the training set
\begin{equation*}
  \min _{c_1, \ldots, c_N} \frac{1}{T} \sum_{i=1}^{T}
  {\left(\sum_{j=1}^N c_j x_{j}^{(i)} - y^{(i)}\right)}^2 .
\end{equation*}
When the size of the training set tends to infinity, i.e., $T \to \infty$, the
solution we obtained is actually the solution to the following optimization
problem:
\begin{equation}\label{eq:training_problem}
  \min _{c_1, \ldots, c_N} \underset{\left(\left(x_1, \ldots, x_N\right), y\right)
    \sim P_{\mathbb{S}}}{\Exp}{\left(\sum_{j=1}^N c_j x_j-y\right)}^2.
\end{equation}
We now formulate \cref{eq:training_problem} as a quadratic programming problem.
Define
$\vb{c}:={(c_1, c_2, \ldots, c_N)}^{\top}, \vb{x}:={(x_1, x_2, \ldots, x_N)}^{\top}$,
then \cref{eq:training_problem} can be rewritten as
\begin{align*}
  & \min_{\vb{c}} \ExyS {\left(\vb{c}^{\top}  \vb{x} - y \right)}^2 \\
  = & \min_{\vb{c}} \ExyS {\left(\vb{c}^{\top}  \vb{x} - y\right)
      \left(\vb{c}^{\top} \vb{x} - y\right)}^{\top}\\
  = & \min_{\vb{c}} \vb{c}^{\top} \ExyS
      \left[\vb{x} \vb{x}^{\top} \right] \vb{c} - 2\vb{c}^{\top}
      \ExyS \left[y\vb{x}\right] + \ExyS y^2.
\end{align*}
Since the coefficient matrix $\ExyS \left[\vb{x} \vb{x}^{\top}\right]$ is
positive semidefinite, \cref{eq:training_problem} is a quadratic programming
problem with a semidefinite coefficient matrix~\cite{boyd2004convex}, whose
solution is given by
\begin{align}\label{eq:solution_train}
  \vb{c}^* = {\begin{pmatrix}
    \ES x_1 x_1
    & \cdots
    & \ES x_1 x_N \\
    \vdots
    & \ddots
    & \vdots \\
    \ES x_N x_1
    & \cdots
    & \ES x_N x_N
  \end{pmatrix}}^{-1}
  \begin{pmatrix}
    \ES x_1 y \\
    \vdots \\
    \ES x_N y
  \end{pmatrix},
\end{align}
where we write $\ExyS$ as $\ES$ for simplicity.
If $\ExyS \left[\vb{x} \vb{x}^{\top} \right]$ is not invertible, then the
pseudoinverse should be used instead~\cite{penrose1955generalized}.
For the convenience of later discussion, we introduce the following notations:
\begin{align}\label{eq:coefficient_matrix}
  \vb{A} =
  \begin{pmatrix}
     x_1 x_1 & \cdots &  x_1 x_N \\
    \vdots & \ddots & \vdots\\
     x_N x_1 & \cdots &x_N x_N
  \end{pmatrix},\;
  \vb{a} =
  \begin{pmatrix}
    x_1 y \\
    \vdots \\
    x_N y
  \end{pmatrix}.
\end{align}
Then \cref{eq:solution_train} becomes
$\vb{c}^* = {\left(\ES [\vb{A}] \right)}^{-1} \ES [\vb{a}]$.

After training, we hope that the obtained model can effectively mitigate errors
in the target parameterized quantum circuits.
Next, we show that using 2-training method to construct the training set,
the optimal solution $\vb{c}^*$ obtained on the training set is also an optimal
solution on the test set.

To see this, we first estimate the smallest average MSE that a linear model can
possibly achieve on the target parameterized quantum circuits, which is the
solution to the optimization problem
\begin{equation}\label{eq:test_problem}
  \min_{\vb{c}} \Exyt {\left(\vb{c}^{\top} \vb{x} - y\right)}^2,
\end{equation}
where $P_{\rm test}$ is the underlying distribution of the test dataset.
Here, each data point also includes the noisy expectation values of all the
neighbor circuits and the noiseless expectation value of the target circuit.
Then the solution to \cref{eq:test_problem} is actually
\begin{align}\label{eq:solution_test}
  {\begin{pmatrix}
    \Et x_1 x_1
    & \cdots & \Et x_1 x_N \\
    \vdots & \ddots & \vdots \\
    \Et x_N x_1
    & \cdots & \Et x_N x_N
  \end{pmatrix}}^{-1}
  \begin{pmatrix}
    \Et x_1 y \\
    \vdots \\
    \Et x_N y
  \end{pmatrix},
\end{align}
which can also be written as ${\left(\Et [\vb{A}] \right)}^{-1} \Et [\vb{a}]$.
Here, we have used the shorthand notation $\Et$ for $\Exyt$.

The following lemma builds a key connection between the data of computing
produced by the training circuits and that by the target circuits.
\begin{lemma}\label{lemma:two_design}
  Suppose the training circuits are generated by
  2-training method.
  If we apply the same neighbor map to the training circuits and the target
  quantum circuit, we will have that
  $\ES\left[\vb{A}\right] = \Et\left[\vb{A}\right]$,
  $\ES\left[\vb{a}\right]=\Et\left[\vb{a}\right]$, {and $\ES\left[y^2\right] = \Et\left[y^2\right]$}.
\end{lemma}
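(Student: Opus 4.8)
The plan is to reduce all three equalities to the single-gate $2$-design identity established in \cref{thm:two_design}, exploiting the fact that each of the quantities $\vb{A}_{ij} = x_i x_j$, $\vb{a}_i = x_i y$, and $y^2$ is a product of exactly \emph{two} expectation values. The crucial structural observation is that any such expectation value---a noisy neighbor value $x_k = \operatorname{tr}(O\,\mathcal{E}_k(\rho_0))$ or the ideal value $y = \operatorname{tr}(O\,\mathcal{U}(\rho_0))$---depends on a given rotation angle $\theta_l$ only through the rotation-gate channel $\rho \mapsto R_{\sigma_l}(\theta_l)\,\rho\,R_{\sigma_l}(\theta_l)^{\dagger}$. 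Tracking the left multiplier and the right multiplier separately and using $R_{\sigma_l}(\theta_l)^{\dagger} = R_{\sigma_l}(-\theta_l)$, this channel is represented by the operator $R_{\sigma_l}(\theta_l)\otimes R_{\sigma_l}(-\theta_l)$---precisely the $t=1$ object appearing in the definition of the quantum rotation design. Every remaining ingredient---the observable $O$, the initial state $\rho_0$, the fixed Clifford gates, the inserted gates of the neighbor map, and the ($\theta$-independent) noise channels---carries no dependence on $\theta_l$.

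First I would write each of the three quantities, as a function of the rotation angles $(\theta_1,\dots,\theta_m)$, as the contraction of a single fixed ($\theta$-independent) tensor against $\bigotimes_{l}\bigl(R_{\sigma_l}(\theta_l)\otimes R_{\sigma_l}(-\theta_l)\bigr)^{\otimes 2}$. The exponent $2$ is exactly right: each of the two expectation-value factors contributes one copy of $R_{\sigma_l}(\theta_l)\otimes R_{\sigma_l}(-\theta_l)$ for every rotation gate, so each gate appears with the \emph{second} moment---the highest that a rotation $2$-design reproduces. Because the \emph{same} neighbor map is applied to the training and target circuits, this fixed contraction tensor is identical in the two settings; only the distribution of the angles differs between $\ES$ and $\Et$.

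Next I would use the fact that in both $P_{\mathbb{S}}$ and $P_{\rm test}$ the angles are drawn \emph{independently} across gates, so the expectation of this multilinear expression factorizes over the gates, leaving the fixed tensor contracted against $\bigotimes_l \Exp\bigl[(R_{\sigma_l}(\theta_l)\otimes R_{\sigma_l}(-\theta_l))^{\otimes 2}\bigr]$. For each gate the per-gate expectation is $\frac{1}{2\pi}\int_0^{2\pi}(R_{\sigma_l}(\theta)\otimes R_{\sigma_l}(-\theta))^{\otimes 2}\,d\theta$ under $P_{\rm test}$ and $\frac{1}{4}\sum_{k=0}^{3}(R_{\sigma_l}(k\pi/2)\otimes R_{\sigma_l}(-k\pi/2))^{\otimes 2}$ under $P_{\mathbb{S}}$, and \cref{thm:two_design} asserts that these coincide. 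Substituting gate by gate therefore leaves the contraction unchanged, giving $\ES[\vb{A}] = \Et[\vb{A}]$; the identical argument applied to $x_i y$ and to $y^2$ yields $\ES[\vb{a}] = \Et[\vb{a}]$ and $\ES[y^2] = \Et[y^2]$.

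The hard part will be the bookkeeping in the first two steps: setting up the double-copy (left--right) representation so that every rotation gate contributes exactly the operator $(R_{\sigma_l}(\theta_l)\otimes R_{\sigma_l}(-\theta_l))^{\otimes 2}$ that \cref{thm:two_design} is phrased in terms of, and verifying that the observable, the noise, the inserted neighbor gates, and the neighbor-map structure genuinely collapse into one $\theta$-independent tensor shared between training and target. One must also be careful with the $Y$-rotation, where the complex conjugate $\overline{R_Y(\theta)}$ does \emph{not} equal $R_Y(-\theta)$: the correct object is the ordered pair of left/right multipliers $R_{\sigma_l}(\theta)$ and $R_{\sigma_l}(\theta)^{\dagger}=R_{\sigma_l}(-\theta)$, not a complex-conjugated vectorization, which is exactly why the design is defined with $R_{\hat n}(-\theta)$ rather than $\overline{R_{\hat n}(\theta)}$.
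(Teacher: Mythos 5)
Your proposal is correct and takes essentially the same route as the paper: both reduce all three equalities to the single-gate identity of \cref{thm:two_design}, using the facts that $x_i x_j$, $x_i y$, and $y^2$ are second-moment expressions in each rotation gate (each factor contributes one copy of $R_{\sigma_l}(\theta_l)\otimes R_{\sigma_l}(-\theta_l)$) and that the angles are independent across gates. The only difference is presentational: you factorize the expectation globally via multilinearity of a fixed contraction tensor, whereas the paper performs the identical per-gate substitution sequentially, conditioning on the remaining gates and swapping the continuous uniform distribution for the four-point distribution one angle at a time.
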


\begin{proof}
  Assume that $U_1$ is a single-qubit gate acting on the first qubit {of
    neighbor circuits, and the corresponding rotation angle is $\theta_1$}.
  Let $\Phi_1$ ($\Phi_1^{\prime}$) denote the noisy channel corresponding to the
  operations of all the gates before $U_1$ in the $i$-th ($j$-th) neighbor
  circuit.
  Define $\rho=\Phi_1\left(\rho_{\text {in }}\right)$ and
  $\rho' = \Phi'_1(\rho_{\rm in})$, where $\rho_{\text {in }}$ is the same input
  state of all the circuits.
  Similarly, let $\Phi_2$ ($\Phi_2^{\prime}$) be the channel corresponding to
  the operations of all the gates after $U_1$ in the $i$-th ($j$-th) neighbor
  circuit.
  See \cref{fig:i_j_neighbor} for an illustration.

  Suppose
  $U_1 = R_{\hat{n}}(\theta_1) = \exp (-i \theta_1 M / 2)$, where $M = \hat{n} \cdot \va{\sigma}$. Then
  \begin{align*}
    & \underset{U_1}{\Exp} \tr \left(\left( U_1 \rho U_1^{\dagger} \right)
      \Phi_2^{\dagger}(O)\right)
      \tr\left(\left(U_1 \rho^{\prime} U_1^{\dagger} \right)
      \Phi_2^{\prime\dagger}(O)\right) \\
    = & \underset{\theta_1 \in[0,2\pi]}{\Exp}
        \tr \Bigl( \bigl(R_{\hat{n}}(\theta_1)
        \rho R_{\hat{n}}(-\theta_1) \bigr) \Phi_2^{\dagger}(O) \Bigr)
        \tr\Bigl(\bigl( R_{\hat{n}}(\theta_1) \rho^{\prime} R_{\hat{n}}(-\theta_1) \bigr) \Phi_2^{'\dagger}(O)\Bigr) \\
    = &\underset{\theta_1 \in\left\{0, \frac{\pi}{2}, \pi, \frac{3}{2} \pi\right\}}{\Exp}
        \tr \Bigl( \bigl(R_{\hat{n}}(\theta_1)
        \rho R_{\hat{n}}(-\theta_1) \bigr) \Phi_2^{\dagger}(O) \Bigr)
        \tr\Bigl(\bigl( R_{\hat{n}}(\theta_1) \rho^{\prime} R_{\hat{n}}(-\theta_1) \bigr) \Phi_2^{'\dagger}(O)\Bigr),
  \end{align*}
  where the last equality holds because
  ${\left\{R_{\hat{n}}(\theta_1)\right\}}_{\theta_1=0, \pi/2, \pi, 3\pi/2}$
  forms a quantum rotation $2$-design.
  Suppose all the other parameterized gates in the circuits are
  $U_2, U_3, \ldots, U_M$ with the corresponding rotation angles
  $\theta_2, \theta_3, \ldots, \theta_M$, which are contained in $\Phi_2$ and
  $\Phi_2'$.
  It follows that
  \begin{align*}
    &\underset{P_{\text{test}}}{\Exp}\, x_i x_j
     = \underset{U_M}{\Exp} \cdots \underset{U_1}{\Exp}\, x_i x_j\\
     =& \underset{\theta_{1}, \ldots, \theta_{m} \in [0,2\pi]}{\Exp}
      \tr\Bigl( \bigl( R_{\hat{n}}(\theta_1) \rho R_{\hat{n}}(-\theta_1) \bigr)
      \Phi_2^{\dagger}(O)\Bigr) \tr\Bigl(\bigl( R_{\hat{n}}(\theta_1) \rho^{\prime}
      R_{\hat{n}}(-\theta_1) \bigr) \Phi_2^{'\dagger}(O)\Bigr)\\
     =& \underset{\substack{\theta_2, \ldots, \theta_{M} \in[0,2\pi]\\
    \theta_1 \in\left\{0, \frac{\pi}{2}, \pi, \frac{3}{2} \pi\right\}}}{\Exp}
    \tr\Bigl(\bigl( R_{\hat{n}}(\theta_1) \rho R_{\hat{n}}(-\theta_1) \bigr)
    \Phi_2^{\dagger}(O) \Bigr) \tr\Bigl(\bigl( R_{\hat{n}}(\theta_1) \rho^{\prime}
      R_{\hat{n}}(-\theta_1) \bigr) \Phi_2^{'\dagger}(O)\Bigr)\\
     =& \underset{\substack{\theta_2, \ldots, \theta_{M} \in[0,2\pi]\\
    \theta_1 \in\left\{0, \frac{\pi}{2}, \pi, \frac{3}{2} \pi\right\}}}{\Exp}
    x_i x_j.
  \end{align*}
  We conclude the proof by repeating the above procedure for
  $\theta_2, \theta_3, \ldots, \theta_M$ sequentially.

  \begin{figure}[tb!]
    \centering
    \includegraphics[width = 0.4\columnwidth]{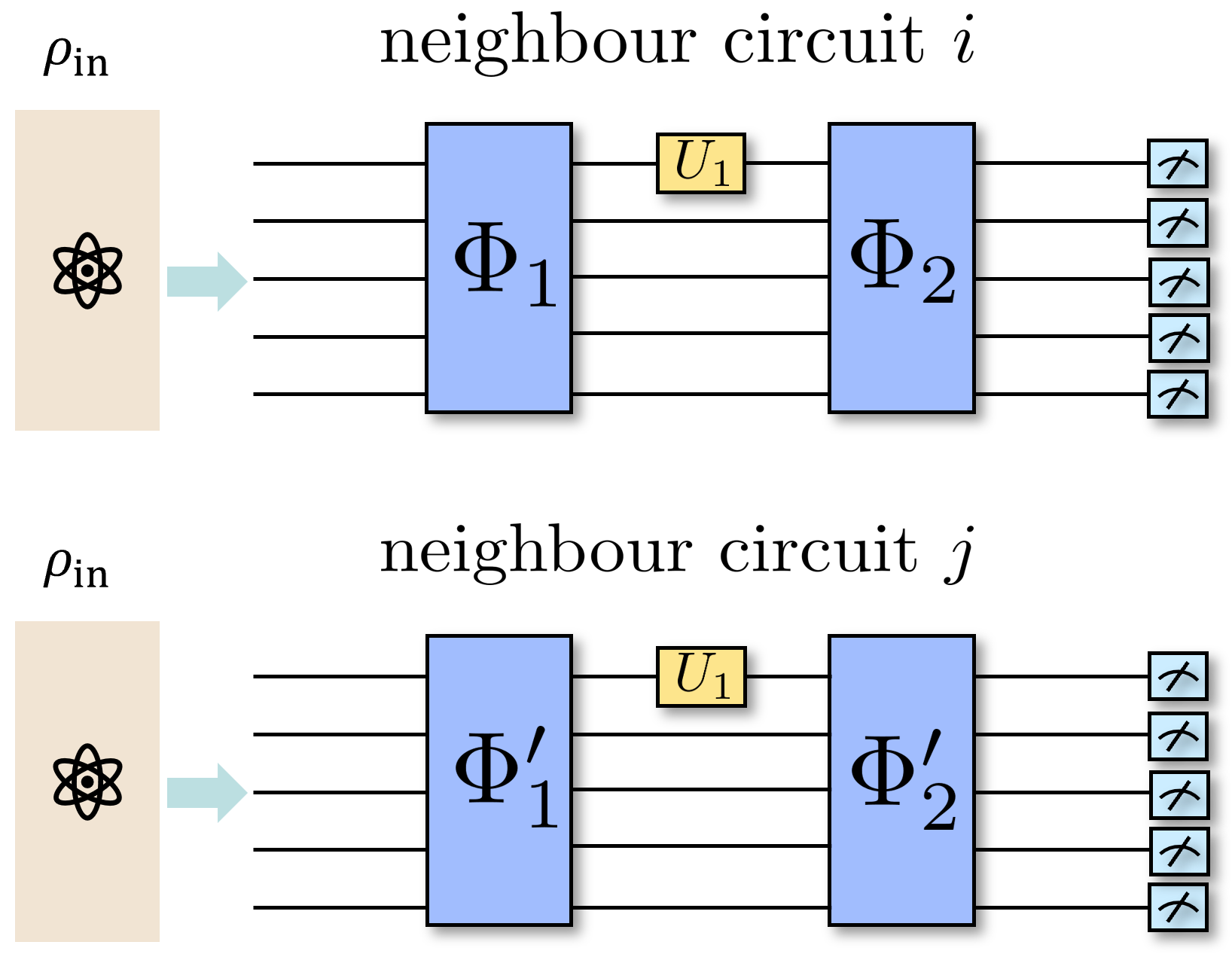}
    \caption{The $i$-th and the $j$-th neighbor circuits}\label{fig:i_j_neighbor}
  \end{figure}

\end{proof}

Together with \cref{eq:solution_train} and \cref{eq:solution_test},
\cref{lemma:two_design} leads to the following conclusion.

\begin{theorem}[Optimality, linear regression]
\label{thm:optimal}
  Suppose the training set
  \begin{equation*}
    \mathbb{S} = {\left\{ \left( \left(x_1^{(i)}, \ldots,
            x_N^{(i)} \right)^{\top}, y^{(i)} \right) \right\}}_{i=1,\ldots,T}
  \end{equation*}
  is produced by 2-training method.
  Then the solution obtained from the linear regression in
  \cref{eq:linear_regression} {is also a} solution to the linear regression
  \cref{eq:test_problem} on the target circuit when $T \to \infty$.
\end{theorem}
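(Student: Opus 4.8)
The plan is to pass to the infinite-sample (population) limit of the least-squares problem and then transport the solution from the training distribution to the test distribution using \cref{lemma:two_design}. First I would record that, as noted in the discussion preceding the theorem, the finite-sample objective in \cref{eq:linear_regression} converges as $T \to \infty$ to the population objective \cref{eq:training_problem}: writing the empirical second moments $\widehat{\vb{A}} = \frac{1}{T}\sum_i \vb{x}^{(i)} {(\vb{x}^{(i)})}^{\top}$ and $\widehat{\vb{a}} = \frac{1}{T}\sum_i y^{(i)}\vb{x}^{(i)}$, the strong law of large numbers gives $\widehat{\vb{A}} \to \ES[\vb{A}]$ and $\widehat{\vb{a}} \to \ES[\vb{a}]$ almost surely, since the $C^{(i)}$ are i.i.d.\ under $P_{\mathbb{S}}$ and all expectation values are bounded (hence have finite second moments). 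Thus the limiting training solution is the closed form \cref{eq:solution_train}, $\vb{c}^* = (\ES[\vb{A}])^{-1}\ES[\vb{a}]$, with the pseudoinverse in place of the inverse when $\ES[\vb{A}]$ is singular.

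The decisive step is to invoke \cref{lemma:two_design}, which asserts $\ES[\vb{A}] = \Et[\vb{A}]$ and $\ES[\vb{a}] = \Et[\vb{a}]$. Since both the population training solution \cref{eq:solution_train} and the population test solution \cref{eq:solution_test} are built from exactly these two quantities, the two closed forms are literally identical in the invertible case, so $\vb{c}^*$ solves \cref{eq:test_problem}. To cover the general case without committing to invertibility, I would instead characterize the minimizers of each population quadratic program by its normal equations, $\ES[\vb{A}]\,\vb{c} = \ES[\vb{a}]$ for the training problem and $\Et[\vb{A}]\,\vb{c} = \Et[\vb{a}]$ for the test problem. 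These systems are consistent: writing $\ES[\vb{A}] = \ES[\vb{x}\vb{x}^{\top}]$ and $\ES[\vb{a}] = \ES[y\vb{x}]$, any $\vb{v}$ in the null space of the PSD matrix $\ES[\vb{A}]$ satisfies $\ES[{(\vb{v}^{\top}\vb{x})}^2] = 0$, hence $\vb{v}^{\top}\vb{x} = 0$ almost surely and $\vb{v}^{\top}\ES[\vb{a}] = 0$, so $\ES[\vb{a}]$ lies in the range of $\ES[\vb{A}]$. By \cref{lemma:two_design} the two normal systems coincide, so their solution sets---affine subspaces---are identical, and any training minimizer is automatically a test minimizer.

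I expect the only genuine obstacle to be the degenerate case in which $\ES[\vb{A}]$ is singular. There the least-squares minimizer is not unique, which is precisely why the theorem claims that $\vb{c}^*$ is \emph{a} solution, not \emph{the} solution, of \cref{eq:test_problem}; the normal-equation formulation sidesteps the need to track which particular minimizer the finite-sample solver returns, since the full minimizer sets agree. A secondary point I would verify is the minimizer-convergence step itself---that the limit of the finite-sample solutions lands in the population minimizer set---which follows from the almost-sure convergence of $\widehat{\vb{A}}$ and $\widehat{\vb{a}}$ together with the (lower semi)continuity of the quadratic objectives.
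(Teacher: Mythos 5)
Your proposal is correct and follows essentially the same route as the paper, which likewise passes to the population limit, writes the training and test solutions in the closed forms \cref{eq:solution_train} and \cref{eq:solution_test}, and identifies them via \cref{lemma:two_design}. Your normal-equations treatment of the singular case is a welcome extra degree of rigor (the paper only remarks that the pseudoinverse should be used), but it does not change the underlying argument.
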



\paragraph{Lasso regression.}
Recall that to reduce the computational cost of learning-based QEM, one can
replace linear regression with Lasso regression to fit the training data.
Below, as a counterpart to \cref{thm:optimal}, we prove that
2-training method also enables Lasso regression to achieve optimal
solutions on both the training set and test set.

\begin{theorem}[Optimality, Lasso regression]\label{thm:convergence_lasso}
  The solution to the Lasso regression on the training circuit set generated by
  2-training method is also a solution for the target parameterized
  circuits as the size of the training set $T \to \infty$.
\end{theorem}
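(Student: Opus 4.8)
The plan is to follow the same logic as in the proof of \cref{thm:optimal}, but since Lasso regression is a constrained convex program with no closed-form solution, I cannot simply compare the explicit formulas \cref{eq:solution_train} and \cref{eq:solution_test}. Instead I will argue directly at the level of the optimization problems, showing that in the limit $T\to\infty$ the Lasso problem on the training set and the Lasso problem on the target circuits become the identical convex program, so that their solution sets coincide and, in particular, any training solution is a target solution.

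First I would take the large-sample limit of the training objective. By the strong law of large numbers, for each fixed $\vb{c}$ the empirical loss $\frac{1}{T}\sum_{i=1}^T{\left(\vb{c}^\top\vb{x}^{(i)}-y^{(i)}\right)}^2$ converges almost surely to the population loss $\ExyS{\left(\vb{c}^\top\vb{x}-y\right)}^2$. Expanding the square exactly as in the linear-regression argument gives
\[
  \ExyS{\left(\vb{c}^\top\vb{x}-y\right)}^2
  = \vb{c}^\top \ES[\vb{A}]\,\vb{c} - 2\,\vb{c}^\top \ES[\vb{a}] + \ES[y^2],
\]
and likewise the population test loss equals $\vb{c}^\top\Et[\vb{A}]\,\vb{c}-2\,\vb{c}^\top\Et[\vb{a}]+\Et[y^2]$. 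The crucial input is \cref{lemma:two_design}: because the training circuits form a rotation $2$-design, we have $\ES[\vb{A}]=\Et[\vb{A}]$, $\ES[\vb{a}]=\Et[\vb{a}]$, and $\ES[y^2]=\Et[y^2]$, so these two population objectives are \emph{identical functions} of $\vb{c}$. Since the $\ell_1$ constraint $\sum_i|c_i|\le\gamma$ is the same in both problems, the two limiting Lasso programs minimize the same objective over the same feasible set, hence have the same set of minimizers.

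The remaining—and main—technical step is to justify the interchange between ``solving Lasso at finite $T$ and then letting $T\to\infty$'' and ``solving the limiting program.'' Here I would use that the feasible region $\{\vb{c}:\sum_i|c_i|\le\gamma\}$ is compact and that the empirical objective is a quadratic form whose coefficients converge almost surely to their population counterparts. Convergence of the coefficients of a quadratic upgrades pointwise convergence to \emph{uniform} convergence over the compact feasible set, and uniform convergence of continuous objectives on a fixed compact constraint set implies convergence of the minimal value together with stability of the $\argmin$ set (an epi-convergence argument), so the finite-$T$ Lasso minimizers accumulate only at minimizers of the common limiting program. I expect this limiting/compactness argument to be the only delicate point: the algebraic heart of the theorem is already supplied by \cref{lemma:two_design}, exactly as in the linear-regression case, and becomes essentially immediate once the two objectives are matched.
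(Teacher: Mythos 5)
Your proposal is correct and follows essentially the same route as the paper: both pass to the $T\to\infty$ population-level quadratic objective, invoke \cref{lemma:two_design} to identify $\ES[\vb{A}]=\Et[\vb{A}]$, $\ES[\vb{a}]=\Et[\vb{a}]$, $\ES[y^2]=\Et[y^2]$, and conclude that the two Lasso programs coincide over the common $\ell_1$ ball. The only difference is that you explicitly justify the interchange of limit and minimization (uniform convergence of the quadratic over the compact feasible set and stability of the argmin), a step the paper's proof simply asserts; this added rigor mirrors the finite-sample bound the paper establishes separately in \cref{lemma:relation_MSE}.
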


\begin{proof}
  Recall that the optimization problem that the Lasso regression tries
  to solve on the training set is
  \begin{align}\label{eq:lasso_appendix}
    \min _{c_1, \ldots, c_N} &\frac{1}{T} \sum_{i=1}^{T}
    {\left(\sum_{j=1}^N c_j x_j^{(i)}-y^{(i)}\right)}^2 \\
     \text{subject to } & \sum_i |c_i| \leqslant \gamma.\nonumber
  \end{align}
  When $T \to \infty$, the above problem can also be written as the following
  optimization problem:
  \begin{align*}
    \min_{\vb{c}}\ & \vb{c}^{\top} \ES\left[\vb{A}\right] \vb{c} - 2\vb{c}^{\top}
    \ES\left[\vb{a}\right] + \ES y^2\\
     \text {subject to } & \|\vb{c}\|_1 \leq \gamma.
  \end{align*}
  According to \cref{lemma:two_design}, its solution is also the solution to the
  following optimization problem:
  \begin{align*}
    \min_{\vb{c}}\ & \vb{c}^{\top} \Et\left[\vb{A}\right] \vb{c} - 2\vb{c}^{\top}
    \Et\left[\vb{a}\right] + \Et y^2\\
     \text {subject to } & \|\vb{c}\|_1 \leq \gamma,
  \end{align*}
  which is exactly the Lasso regression on the target parameterized circuit.
\end{proof}

\subsubsection{{Optimality with Shot Noise}}\label{subapp:optimal_shot_noise}

In this subsection, we discuss the impact of shot noise on our protocol.
Previously, we have proven that our method yields the optimal linear solutions
if we apply either linear regression or Lasso regression on a training set
generated by 2-training method, where the noisy expectation values
obtained are exact.
However, in practice, all the noisy expectation values in the feature vectors
are obtained by quantum measurements, which introduce statistical errors for
each expectation value $x_j^{(i)}$ defined in \cref{eq:features}.
Specifically, if we measure each neighbor circuit $N_s$ times and average the
outcomes to get the expectation value, the actual training set takes the
following form:
\begin{align}\label{eq:shots_training_set}
  \mathbb{S}^{N_s} = {\left\{\left(\vb{x}^{(i)}_{N_s},y^{(i)}\right)\right\}}_{i\in [T]},
\end{align}
where $\vb{x}^{(i)}_{N_s}$ denotes the feature vector of the $i$-th training circuit.
The $j$-th element of $\vb{x}_{N_s}^{(i)}$ is
\begin{align}
  \left(\vb{x}^{(i)}_{N_s} \right)_j = \frac{1}{N_s}\sum_{k=1}^{N_s} x_{j,k}^{(i)} \ ,
\end{align}
where $x_{j,k}^{(i)}$ denotes the $k$-th measurement outcome on the
$j$-th neighbor circuit of the $i$-th training circuit.
As $N_s$ increases, we have $\vb{x}^{(i)} = \lim_{N_s \to \infty} \vb{x}^{(i)}_{N_s}$.

We now prove that the linear (Lasso) regression problem defined in
\cref{eq:linear_regression} (\cref{eqapp:lasso_regression}) on the training set
$\mathbb{S}^{N_s}$ is equivalent to the linear (Lasso) regression problem with
an additional $\ell_2$ regularization term defined on the test
dataset~\cite{ng2004feature}, when $T\to \infty$.
In particular, the strength of this regularization term is inversely
proportional to $N_s$.
In many linear regression protocols, an $\ell_2$ regularization term is added to
avoid overfitting~\cite{hoerl1970ridge} and improve numerical
stability~\cite{ng2004feature,golub1999tikhonov}.
Interestingly, in our task, shot noise naturally introduces an $\ell_2$ penalty
term, making the optimization problem more stable and easier to solve.

To prove this statement, we require the following lemma, which establishes the
relationship between the average variance of the measurement outcomes on the
training set and that on the test set.


\begin{lemma}\label{lemma:same_variance}
  Let $C$ be randomly sampled from the test circuits and let $C^{(i)}$ be a
  training circuit generated by the 2-design method (2-training method).
  Apply the same neighbor map to $C$ and $C^{(i)}$ to obtain the corresponding
  neighbor circuits $(C_1, \ldots, C_N)$ and $(C_1^{(i)}, \ldots, C_N^{(i)})$,
  respectively.
  Denote by $x_{j, 1}^{(i)}$ and $o_{j, 1}$ the measurement outcomes of
  measuring the observable $O$ on the $j$-th neighbor circuit of $C^{(i)}$ and
  $C$, respectively.
  Then, we have
  \[
  \ES\left[\Var[x_{j,1}^{(i)}]\right] = \Et\left[\Var[o_{j,1}]\right]
  \]
  for any $j\in [N]$, where the randomness of $\,\Var$ is from quantum
  measurements.
\end{lemma}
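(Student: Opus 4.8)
The plan is to reduce the single-shot variance to circuit-level quantities and then match the two distributions term by term, using \cref{lemma:two_design} for the part that is quadratic in the circuit output and a weaker rotation $1$-design argument for the part that is linear.

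First I would write out the single-shot variance explicitly. Because every outcome probability obeys the Born rule and is therefore linear in the noisy output state $\rho_j$ of the $j$-th neighbor circuit, both the first and the second moment of a single shot are linear functionals of $\rho_j$: $\Ex[o_{j,1}] = \tr(O\rho_j) = x_j$ and $\Ex[o_{j,1}^2] = \tr(E_j\rho_j)$ for a fixed operator $E_j$ determined only by the observable $O$ and the (common) measurement protocol, not by $\rho_j$ (in the simplest case of measuring $O$ directly, $E_j = O^2$). Hence
\begin{equation*}
  \Var[o_{j,1}] = \tr(E_j\rho_j) - x_j^2 ,
\end{equation*}
and the analogous identity holds for $x_{j,1}^{(i)}$ with $\rho_j$ replaced by the training-circuit output state $\rho_j^{(i)}$. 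Taking expectations over $P_{\rm test}$ and $P_{\mathbb{S}}$ respectively splits the claim into two matching problems: a squared-mean term and a second-moment term.

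For the squared-mean term I would simply invoke \cref{lemma:two_design}: $\Ex[x_j^2]$ is exactly the $(j,j)$ entry of $\Ex[\vb{A}]$, so $\ES[x_j^2] = \Et[x_j^2]$ follows immediately from $\ES[\vb{A}] = \Et[\vb{A}]$. The remaining, genuinely new, ingredient is to show $\ES[\tr(E_j\rho_j)] = \Et[\tr(E_j\rho_j)]$. Since $\tr(E_j\rho_j)$ is linear in the single output state $\rho_j$ (a first moment, in contrast to the product $x_ix_j$ handled in \cref{lemma:two_design}), the relevant design property is the rotation $1$-design identity
\begin{equation*}
  \frac{1}{2\pi}\int_0^{2\pi} R_{\hat{n}}(\theta)\,\rho\,R_{\hat{n}}(-\theta)\,d\theta
  = \frac14\sum_{k=0}^3 R_{\hat{n}}\!\Bigl(\tfrac{k\pi}{2}\Bigr)\,\rho\,R_{\hat{n}}\!\Bigl(\tfrac{-k\pi}{2}\Bigr),
\end{equation*}
which I would verify either directly from the degree-two moment integrals $\tfrac{1}{2\pi}\int\cos^2(\theta/2)\,d\theta = \tfrac{1}{2\pi}\int\sin^2(\theta/2)\,d\theta = \tfrac12$ and $\tfrac{1}{2\pi}\int\cos(\theta/2)\sin(\theta/2)\,d\theta = 0$ (all matched by their four-point averages), or as an immediate corollary of \cref{thm:two_design}, since a rotation $2$-design is automatically a rotation $1$-design. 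With this in hand, I would repeat the gate-by-gate replacement argument of \cref{lemma:two_design}: averaging one parameterized gate at a time and using the assumption that the noise channels are independent of the rotation angles, each continuous average over $\theta_\ell$ can be swapped for the four-point average without changing $\tr(E_j\rho_j)$, because this quantity is linear in the twirl of each individual gate.

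The main obstacle is conceptual rather than computational: one must recognize that the two contributions to the variance require different strengths of design. The squared-mean term, being quadratic in the circuit output, needs the full rotation $2$-design (via \cref{lemma:two_design}), whereas the second-moment term is only linear and is therefore already matched by the weaker $1$-design property, which the same four-angle set happily satisfies. A secondary point to handle with care is the generality of $E_j$: the argument never uses $E_j = O^2$ specifically, only that $E_j$ is a fixed, angle-independent operator and that the identical measurement protocol is applied to the training and test neighbor circuits, so that $\tr(E_j\rho_j)$ is a bona fide first-moment quantity to which the $1$-design replacement applies.
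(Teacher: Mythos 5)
Your proposal is correct and follows essentially the same route as the paper's proof: decompose the single-shot variance as a second moment minus a squared mean, handle the squared-mean term via \cref{lemma:two_design}, and handle the second-moment term (linear in the output state) by the rotation $1$-design property of the four-angle set together with the same gate-by-gate replacement argument. Your only departure is the harmless generalization from $O^2$ to an abstract angle-independent operator $E_j$; the paper works directly with $\tr(\rho O^2)$ and likewise notes that the $2$-design is automatically a $1$-design.
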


\begin{proof}
  The proof technique is similar to that in the proof for
  \cref{lemma:two_design}.
  Suppose that the noisy output state of the $j$-th neighbor circuit of
  $C^{(i)}$ is $\rho$.
  Then we have
  \[
  \Var[x_{j,1}^{(i)}] = \tr(\rho O^2) - \tr(\rho O)^2.
  \]
  As in the proof for \cref{lemma:two_design}, we examine
  $\ES\left[\Var[x_{j,1}^{(i)}]\right]$ gate by gate.
  Suppose that
  $\rho=\Phi_2(R_{\hat{n}}(\theta_1)\\\Phi_1\left(\rho_{\text {in }}\right)R_{\hat{n}}(-\theta_1))$,
  where $\rho_{\text {in }}$, $R_{\hat{n}}(\theta_1)$, and $\Phi_i$ ($i=1,2$)
  are defined similarly as in \cref{lemma:two_design}.
  Then it holds that
  \begin{align*}
    &\underset{\theta_1 \in\left\{0, \frac{\pi}{2}, \pi, \frac{3}{2} \pi\right\}}{\Exp} \tr(\rho O^2)\\
    = &\underset{\theta_1 \in\left\{0, \frac{\pi}{2}, \pi, \frac{3}{2} \pi\right\}}{\Exp}
        \tr \left( \bigl(R_{\hat{n}}(\theta_1)
        \Phi_1\left(\rho_{\text {in }}\right) R_{\hat{n}}(-\theta_1) \bigr) \Phi_2^{\dagger}(O^2) \right)\\
    = &\underset{\theta_1 \in[0,2\pi]}{\Exp}
        \tr \left( \bigl(R_{\hat{n}}(\theta_1)
        \Phi_1\left(\rho_{\text {in }}\right) R_{\hat{n}}(-\theta_1) \bigr) \Phi_2^{\dagger}(O^2) \right),
  \end{align*}
  where the last equality holds because
  ${\left\{R_{\hat{n}}(\theta)\right\}}_{\theta=0, \pi/2, \pi, 3\pi/2}$ is a
  quantum rotation $2$-design, implying that it is also a quantum rotation
  1-design.
  Taking the expectation over all random rotation gates, denoted as
  $U_1, \ldots, U_M$, we obtain that
  \begin{align*}
    \ES\tr(\rho O^2)&= \underset{U_M}{\Exp} \cdots \underset{U_1}{\Exp}\tr(\rho O^2)\\
    &= \underset{
    \theta_1\ldots \theta_M \in\left\{0, \frac{\pi}{2}, \pi, \frac{3}{2} \pi\right\}}{\Exp}\tr \left(\bigl(R_{\hat{n}}(\theta_1)
    \Phi_1\left(\rho_{\text {in }}\right) R_{\hat{n}}(-\theta_1) \bigr) \Phi_2^{\dagger}(O^2) \right)\\
    &= \underset{\substack{ \theta_1\in[0,2\pi]\\
     \theta_2, \ldots, \theta_{M}\in\left\{0, \frac{\pi}{2}, \pi, \frac{3}{2} \pi\right\}}}{\Exp}\tr \left(\bigl(R_{\hat{n}}(\theta_1)
     \Phi_1\left(\rho_{\text {in }}\right) R_{\hat{n}}(-\theta_1) \bigr) \Phi_2^{\dagger}(O^2) \right),
  \end{align*}    %
  where the parameters $\theta_2, \ldots, \theta_M$ are contained in $\Phi_2$.
  Repeating the above procedure for $U_2, \ldots, U_M$ sequentially yields that
  \begin{align*}
    \ES\tr(\rho O^2)&= \underset{U_M}{\Exp} \cdots \underset{U_1}{\Exp}\tr(\rho O^2)\\
    &= \underset{\theta_1\ldots, \theta_{M}\in[0,2\pi]}{\Exp}\tr \left(\bigl(R_{\hat{n}}(\theta_1)
     \Phi_1\left(\rho_{\text {in }}\right) R_{\hat{n}}(-\theta_1) \bigr) \Phi_2^{\dagger}(O^2) \right)\\
     &= \Et \tr \left(\bigl(R_{\hat{n}}(\theta_1)
     \Phi_1\left(\rho_{\text {in }}\right) R_{\hat{n}}(-\theta_1) \bigr) \Phi_2^{\dagger}(O^2) \right) \\
     &= \Et \tr(\rho' O^2),
  \end{align*}%
  where we denoted the output state of the $j$-th neighbor of the target circuit as $\rho'$.
  According to \cref{lemma:two_design}, we know that $\ES \tr(\rho O)^2 = \Et \tr(\rho' O)^2$.
  Putting them together, we have
  \begin{align*}
    \ES\left[\Var[x_{j,1}^{(i)}]\right]&= \ES \left[\tr(\rho O^2) - \tr(\rho O)^2\right]\\
    &= \Et \left[\tr(\rho' O^2) - \tr(\rho' O)^2\right]\\
     &= \Et\left[\Var[o_{j,1}]\right].
  \end{align*}
\end{proof}

Now we are ready to prove the asymptotic optimality of the 2-design training method in the presence of
shot noise.
Specifically, we will show that the shot noise introduces an additional $\ell_2$
regularization term to the optimization problem we need to address.
Recall that
\begin{align*}
  \vb{A} = \vb{x}\vb{x}^\top,\;
  \vb{a} =
  \begin{pmatrix}
    x_1 y \\
    \vdots \\
    x_N y
  \end{pmatrix},
\end{align*}
where $\vb{x}=\left(x_1, \ldots, x_N\right)^{\top}$.
For linear regression, we have the following theorem.

\begin{theorem}\label{thm:optimal_shot_noise}
  Consider the training dataset
  \begin{equation*}
    \mathbb{S}^{N_s}  = {\left\{\left(\left(\frac{1}{N_s}\sum_{j=1}^{N_s} x_{1,j}^{(i)},
            \ldots, \frac{1}{N_s}\sum_{j=1}^{N_s} x_{N,j}^{(i)}\right)^{\top},
          y^{(i)}\right)\right\}}_{i\in [T]} \ ,
  \end{equation*}
  where the training circuits are generated with 2-training method.
  When $T \to \infty$, the solution obtained from the linear regression problem
  in \cref{eq:linear_regression} is also a solution to the following regression
  problem on the target circuit:
  \begin{equation*}
    \min_{\vb{c}} \vb{c}^{\top} \Et\left[\vb{A} + \ReNs\right] \vb{c} - 2 \Et\left[\vb{a}\right]^{\top} \vb{c} + \Et y^2,
  \end{equation*}
  The matrix \( \ReNs \) is defined as
\begin{equation*}
  \ReNs \coloneqq \frac{\mathrm{diag}(X_1, \ldots, X_N)}{N_s},
\end{equation*}
where $X_i$ is the variance of observable $O$ associated with the $i$-th
neighbor circuit of the target circuit.
\end{theorem}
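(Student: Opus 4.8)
The plan is to push the $T \to \infty$ limit first. By the law of large numbers, the empirical objective of \cref{eq:linear_regression} evaluated on $\mathbb{S}^{N_s}$ converges to the population objective $\min_{\vb{c}}\, \vb{c}^{\top} \ES[\vb{x}_{N_s} \vb{x}_{N_s}^{\top}] \vb{c} - 2\vb{c}^{\top} \ES[y\, \vb{x}_{N_s}] + \ES[y^2]$, exactly as in the noiseless case of \cref{thm:optimal}. The whole task then reduces to evaluating the three population quantities $\ES[\vb{x}_{N_s}\vb{x}_{N_s}^{\top}]$, $\ES[y\,\vb{x}_{N_s}]$, and $\ES[y^2]$, and re-expressing them through test-set expectations.

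I would compute each expectation with a tower-rule decomposition: first condition on the randomly drawn training circuit $C^{(i)}$ (fixing all of its neighbors), then average over $P_{\mathbb{S}}$. Conditioned on $C^{(i)}$, each shot-averaged feature $(\vb{x}_{N_s})_j = \frac{1}{N_s}\sum_{k} x_{j,k}^{(i)}$ has conditional mean equal to the exact noisy expectation $x_j$ and conditional variance $X_j^{(i)}/N_s$, where $X_j^{(i)} := \Var[x_{j,1}^{(i)}]$ is the single-shot variance. The crucial point is that the $N$ neighbor circuits are estimated by independent runs, so conditionally the features are independent across $j$. Hence for $j \ne k$ the conditional cross-moment factorizes as $x_j x_k$, while for $j=k$ the conditional second moment acquires the extra term $X_j^{(i)}/N_s$. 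Averaging over circuits then gives $\ES[\vb{x}_{N_s}\vb{x}_{N_s}^{\top}] = \ES[\vb{A}] + \frac{1}{N_s}\diag\!\bigl(\ES[X_1^{(i)}], \ldots, \ES[X_N^{(i)}]\bigr)$. Because the labels $y^{(i)}$ are computed classically and carry no shot noise, the remaining terms are unchanged: $\ES[y\,\vb{x}_{N_s}] = \ES[\vb{a}]$ and $\ES[y^2]$ is as before.

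To finish, I would invoke the two structural lemmas to move everything onto the test distribution. \cref{lemma:two_design} supplies $\ES[\vb{A}] = \Et[\vb{A}]$, $\ES[\vb{a}] = \Et[\vb{a}]$, and $\ES[y^2] = \Et[y^2]$, matching the noiseless contributions. \cref{lemma:same_variance} supplies $\ES[X_j^{(i)}] = \Et[\Var[o_{j,1}]]$ for each $j$, so the diagonal correction is precisely $\Et[\ReNs]$ with $\ReNs = \diag(X_1, \ldots, X_N)/N_s$. Substituting these identities converts the population objective into $\min_{\vb{c}}\, \vb{c}^{\top} \Et[\vb{A}+\ReNs]\vb{c} - 2\Et[\vb{a}]^{\top}\vb{c} + \Et[y^2]$, which is exactly the regularized regression problem on the target circuit asserted in the statement.

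The step demanding the most care is the conditional-moment bookkeeping: justifying that the off-diagonal entries ($j \ne k$) are unaffected by shot noise while only the diagonal picks up the $1/N_s$ variance contribution. This rests on two facts that must be stated explicitly---the independence of the measurement processes across distinct neighbor circuits, and the exactness of the classically computed labels $y^{(i)}$. Once this decomposition is laid out, matching to the test distribution is an immediate application of \cref{lemma:two_design} and \cref{lemma:same_variance}, and the argument transfers verbatim to the Lasso case (with the $\ell_1$ constraint left untouched), paralleling \cref{thm:convergence_lasso}.
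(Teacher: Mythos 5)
Your proposal is correct and follows essentially the same route as the paper's proof: condition on the sampled training circuit, compute the conditional first and second moments of the shot-averaged features (zero-mean noise, variance $\Var[x_{j,1}^{(i)}]/N_s$ on the diagonal only, exact classical labels), then transfer all population quantities to the test distribution via \cref{lemma:two_design} and \cref{lemma:same_variance}. The only cosmetic difference is that the paper introduces explicit noise variables $\varepsilon_{i,j}$ while you work directly with conditional moments, and you are in fact slightly more careful than the paper in explicitly naming the cross-neighbor measurement independence needed for the off-diagonal terms to remain unperturbed.
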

\begin{proof}
  We first introduce some random variables $\varepsilon_{i,j}$'s to represent
  the shot noise caused by quantum measurements.
  Specifically, we define
  \[
    \varepsilon_{i,j} := \frac{1}{N_s}\sum_{k=1}^{N_s} x_{j,k}^{(i)} - x_{j}^{(i)},
  \]
  for $i \in [T]$ and $j \in [N]$.

  Next, we investigate the statistical properties of $\varepsilon_{i,j}$.
  Denote $\Em$ as the expectation with respect to the measurement outcomes.
  Then, conditioned on the training circuit $C^{(i)}$ being sampled, we have
  \begin{align*}
  \Em[ \varepsilon_{i,j} | C^{(i)}] = \frac{1}{N_s}\sum_{k=1}^{N_s} x_{j}^{(i)} - x_{j}^{(i)} =  0.
  \end{align*}
  Similarly, we compute its second moment:
  \begin{align*}
    &\Em[{\varepsilon_{i,j}^2} | C^{(i)}]\\
    =& \Em\left[\frac{1}{N_s^2}\left(\sum_{k=1}^{N_s} \left(x_{j,k}^{(i)}\right)^2 + \sum_{k_1\neq k_2}x_{j,k_1}^{(i)}x_{j,k_2}^{(i)}\right)\right.\left. - \frac{2x_{j}^{(i)}}{N_s}\sum_k x_{j,k}^{(i)} + \left(x_{j}^{(i)}\right)^2 \Big| C^{(i)} \right]\\
    =&\Em\left[\frac{1}{N_s^2}\sum_{k=1}^{N_s} \left(x_{j,k}^{(i)}\right)^2 \Big| C^{(i)} \right] + \frac{(N_s-1)}{N_s}\left(x_{j}^{(i)}\right)^2 - \left(x_{j}^{(i)}\right)^2\\
    =&\Em \left[\frac{1}{N_s}\left(x_{j,1}^{(i)}\right)^2 \Big| C^{(i)} \right]-\frac{1}{N_s}\left(x_{j}^{(i)}\right)^2 \\
    =&\frac{1}{N_s}\Var\left[x_{j,1}^{(i)}\right] .
  \end{align*}%
  Here, the penultimate equality holds because $x_{j,k}^{(i)}$ is sampled from
  the same probability distribution independently for any $k\in[N_s]$.
  We now rewrite the feature vector of the $i$-th training circuit as
  \begin{align*}
    \vb{x}^{(i)}_{N_s}= &\left(\frac{1}{N_s}\sum_{j=1}^{N_s} x_{1,j}^{(i)}, \ldots, \frac{1}{N_s}\sum_{j=1}^{N_s} x_{N,j}^{(i)}\right)^{\top}\\
    = &\left(x_{1}^{(i)} + \varepsilon_{i,1}, \ldots, x_{N}^{(i)} + \varepsilon_{i,N}\right)^{\top}.
  \end{align*}
  Then, the linear regression defined on the training set $\mathbb{S}^{N_s}$ can be expressed as
  \begin{align*}
    & \min _{\vb{c}} \frac{1}{T} \sum_{i=1}^{T}
      {\left( \vb{c}^{\top} \vb{x}^{(i)}_{N_s}-y^{(i)}\right)}^2\\
      = & \min_{\vb{c}} \vb{c}^{\top} \Bigl(\frac{1}{T} \sum_{i=1}^{T} \vb{x}^{(i)}_{N_s} \vb{x}_{N_s}^{(i) \top}\Bigr) \vb{c} -
      2 \frac{1}{T} \sum_{i=1}^{T} y^{(i)} \vb{x}_{N_s}^{(i) \top} \vb{c} + \frac{1}{T} \sum_{i=1}^{T} {\bigl(y^{(i)}\bigr)}^2.
  \end{align*}
  Treat $\vb{x}^{(i)}_{N_s} \vb{x}_{N_s}^{(i) \top}$, $y^{(i)} \vb{x}_{N_s}^{(i) \top}$ and $\left(y^{(i)}\right)^2$ as random variables. Then, as $T\to \infty$, the linear regression defined on training set $\mathbb{S}^{N_s}$ reads
  \begin{equation*}
   \min_{\vb{c}} \vb{c}^{\top} \Etrain\left[\vb{x}^{(i)}_{N_s}\vb{x}_{N_s}^{(i) \top}\right] \vb{c} -2 \Etrain[y^{(i)}\vb{x}^{(i)}_{N_s}]^{\top}\vb{c}+\Etrain \left[\left(y^{(i)}\right)^2\right] ,
  \end{equation*}
  where the expectation $\Etrain := \ES \Em$ is taken over two sources of randomness, namely, the probability distribution of the training data point $P_{\mathbb{S}}$ and quantum measurements.

  Next, we compute $\Etrain\left[\vb{x}^{(i)}_{N_s}\vb{x}_{N_s}^{(i) \top}\right]$, $\Etrain[y^{(i)}\vb{x}^{(i)}_{N_s}]^{\top}$ and $\Etrain \left[\left(y^{(i)}\right)^2\right]$.
  Based on this observation and the law of total expectation, we have that when $j\neq k$,
  \begin{align*}
    &\left(\Etrain\left[\vb{x}^{(i)}_{N_s}\vb{x}_{N_s}^{(i) \top}\right]\right)_{j,k} \\
    = &\ES\left[\Em\left[\left(x_{j}^{(i)} + \varepsilon_{i,j}\right)\left(x_{k}^{(i)} + \varepsilon_{i,k}\right)\Big| C^{(i)} \right]\right]\\
    = & \ES[x_{j}^{(i)} x_{k}^{(i)}] = \Et[x_{j} x_{k}] \ ,
\end{align*}    %
  The digaonal elements of this matrix is given by
  \begin{align*}
    &\left( \Etrain\left[\vb{x}^{(i)}_{N_s}\vb{x}_{N_s}^{(i) \top}\right]\right)_{j,j} \\
    = &\ES\left[\Em\left[\left(x_{j}^{(i)} + \varepsilon_{i,j}\right)^2\Big| C^{(i)} \right]\right]\\
    = & \ES\left[\left(x_{j}^{(i)}\right)^2\right] + \ES\left[\Var\left[x_{j,1}^{(i)}\right]\right]/N_s,
  \end{align*}
  where $j\in[N]$ and the last equality holds because $\Em[{\varepsilon_{i,j}^2}] = \frac{1}{N_s}\Var\left[x_{j,1}^{(i)}\right]$. According to \cref{lemma:two_design} and \cref{lemma:same_variance}, we have that $\ES\left[\left(x_{j}^{(i)}\right)^2\right] = \Et\left[x_j^2\right]$ and $\ES\left[\Var\left[x_{j,1}^{(i)}\right]\right]= \Et [X_j]$, where $X_j \coloneqq \Var [o_{j,1}]$ is the variance of the measurement outcomes for the $j$-th neighbor circuit of the target circuit.
  Then, we have
  \[
    \left(\Etrain\left[\vb{x}^{(i)}_{N_s}\vb{x}_{N_s}^{(i) \top}\right]\right)_{j,j} =  \Et\left[x_{j}^2\right] + \Et [X_j]/N_s.
  \]
  Therefore, it holds that
  \[
    \Etrain\left[\vb{x}^{(i)}_{N_s}\vb{x}_{N_s}^{(i) \top}\right] = \Et\left[\vb{A} + \ReNs \right],
  \]
  where $\ReNs \coloneqq \frac{\mathrm{diag}(X_1, \ldots, X_N)}{N_s}$. Next we compute $\Etrain[y^{(i)}\vb{x}^{(i)}_{N_s}]$:
\begin{align*}
  &\left(\Etrain[y^{(i)}\vb{x}^{(i)}_{N_s}]\right)_{j} \\
  = &\ES\left[\Em\left[y^{(i)}\left(x_{j}^{(i)} + \varepsilon_{i,j}\right)\Big| C^{(i)} \right]\right]\\
  = & \ES[y^{(i)} x_{j}^{(i)}] = \Et[y x_{j}].
\end{align*}
 Since the labels of the training set $\mathbb{S}^{N_s}$ are computed classically, the randomness of $\left(y^{(i)}\right)^2$ only comes from the probability distribution of the training dataset $P_{\mathbb{S}}$. Then we have
 \[
  \Etrain \left[\left(y^{(i)}\right)^2\right] = \ES\left[\left(y^{(i)}\right)^2\right] = \Et\left[\left(y^{(i)}\right)^2\right],
 \]
 where the last equality can be obtained directly from \cref{lemma:two_design}.
 Therefore, when $T$ is sufficiently large, we have
\begin{align*}
  \frac{1}{T} \sum_{i=1}^{T} \vb{x}^{(i)}_{N_s} \vb{x}_{N_s}^{(i) \top} &\rightarrow \Et\left[ \vb{A} + \ReNs \right], \\
  \frac{1}{T} \sum_{i=1}^{T} y^{(i)} \vb{x}_{N_s}^{(i) \top} &\rightarrow \Et\left[\vb{a}\right]^{\top}, \\
  \frac{1}{T} \sum_{i=1}^{T} \left(y^{(i)}\right)^2 &\rightarrow \Et\left[\left(y^{(i)}\right)^2\right].
\end{align*}%
To conclude, when $T \to \infty$, the solution to the linear regression defined on the training set $\mathbb{S}^{N_s}$ is also a solution to the following regression problem on the target circuit:
  \begin{equation}\label{eq:solution_shot}
    \min_{\vb{c}} \vb{c}^{\top} \Et\left[\vb{A} + \ReNs \right] \vb{c} - 2 \Et\left[\vb{a}\right]^{\top} \vb{c} + \Et y^2.
  \end{equation}
\end{proof}

As a direct corollary of \cref{thm:optimal_shot_noise}, the solution to \cref{eq:solution_shot} is also a solution to the linear regression on the target circuit in the presence of shot noise, if we measure all the involved quantum circuits with the same number of shots $N_s$.

\begin{Corollary}\label{cor:same_test}
  As the size of the
  training set $T \to \infty$, the solution to the linear (Lasso) regression on the training
  set $\mathbb{S}^{N_s}$ generated by 2-training method is also a solution to the linear (Lasso) regression on the target circuit, where all the expectation values of the neighbor circuits are obtained through $N_s$ measurement shots.
\end{Corollary}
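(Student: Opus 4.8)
The plan is to prove the corollary by running the derivation of \cref{thm:optimal_shot_noise} a second time, but now with the test distribution $P_{\rm test}$ in place of the training distribution $P_{\mathbb{S}}$. Concretely, I would show that the population-level linear regression on the target circuit with $N_s$-shot neighbor estimates---the problem $\min_{\vb{c}} \Et \Em {(\vb{c}^{\top}\vb{x}_{N_s} - y)}^2$, where each neighbor expectation value is averaged over $N_s$ measurement shots---converges to exactly the same quadratic objective \cref{eq:solution_shot} that \cref{thm:optimal_shot_noise} already identified as the limit of the training-side regression. Once both problems are seen to be the \emph{same} optimization problem, their solution sets coincide and the corollary follows immediately.

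First I would decompose the noisy target feature vector as $\vb{x}_{N_s} = \vb{x} + \vb*{\varepsilon}$, where $\vb{x}$ is the exact (infinite-shot) feature vector of the sampled test circuit and $\vb*{\varepsilon}$ collects the shot-noise fluctuations. The statistical input is the same one used in the theorem's proof: conditioned on the sampled circuit, the measurement noise is mean-zero, $\Em[\varepsilon_j \mid C] = 0$, independent across distinct neighbors, and has second moment $\Em[\varepsilon_j^2 \mid C] = X_j / N_s$, with $X_j = \Var[o_{j,1}]$ the single-shot variance of $O$ on the $j$-th neighbor of the target circuit. Expanding ${(\vb{c}^{\top}\vb{x}_{N_s}-y)}^2$ and taking $\Em[\cdot \mid C]$ and then $\Et[\cdot]$, the cross terms involving $\vb*{\varepsilon}$ drop out of the off-diagonal and linear pieces (mean-zero, independent noise), the label term stays $\Et[y^2]$ because labels are computed exactly, and the only surviving new contribution is the diagonal $\Et[\diag(X_1,\ldots,X_N)]/N_s = \Et[\ReNs]$. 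This yields $\Et \Em[\vb{x}_{N_s}\vb{x}_{N_s}^{\top}] = \Et[\vb{A}+\ReNs]$ and $\Et \Em[y\,\vb{x}_{N_s}] = \Et[\vb{a}]$, hence precisely the objective of \cref{eq:solution_shot}.

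The concluding step is to observe that this is verbatim the limiting problem of \cref{thm:optimal_shot_noise}: the training-side regression on $\mathbb{S}^{N_s}$ and the target-side regression with $N_s$ shots minimize the identical quadratic functional, so any minimizer of one is a minimizer of the other. For the Lasso variant the argument is unchanged---the constraint $\|\vb{c}\|_1 \le \gamma$ is the same feasible set on both sides, so the constrained minimizers also coincide.

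I do not expect a serious obstacle, since the heavy lifting was done upstream: the genuinely nontrivial facts are the rotation-$2$-design second-moment matching (\cref{lemma:two_design}) and the variance-matching identity (\cref{lemma:same_variance}), and these are exactly what allowed \cref{thm:optimal_shot_noise} to express the training-side regularizer $\ReNs$ purely through the target-circuit variances $X_j$. The one point warranting care is bookkeeping: I must keep in mind that $X_j$ is itself a random variable over $P_{\rm test}$, so the target-side regularization is $\Et[\ReNs]$---an average of per-circuit variances---and I should verify that this is the same object already appearing in \cref{thm:optimal_shot_noise}, rather than a circuit-specific quantity. With that alignment checked, the corollary is an immediate restatement.
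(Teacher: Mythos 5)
Your proposal is correct and follows essentially the same route as the paper's own proof: the paper also writes the test-side regression with $\Etest := \Et \Em$, reruns the moment computation of \cref{thm:optimal_shot_noise} on the target distribution to get $\Etest[\vb{x}_{N_s}\vb{x}_{N_s}^{\top}] = \Et[\vb{A}+\ReNs]$ and $\Etest[y\,\vb{x}_{N_s}] = \Et[\vb{a}]$, and concludes that both problems are the identical quadratic (or $\ell_1$-constrained) objective of \cref{eq:solution_shot}. Your write-up merely makes explicit the shot-noise decomposition and conditional-moment bookkeeping that the paper compresses into ``following a similar procedure.''
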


\begin{proof}
  We only give the proof for the case of the linear regression, and the case of the Lasso regression can be proved similarly. 

  Given the target circuit, we apply the neighbor map to generate the corresponding neighbor circuits and collect the feature data by performing measurements on each neighbor circuit for $N_s$ times. Let
  \[
    \vb{x}_{N_s}= \left(\frac{1}{N_s}\sum_{j=1}^{N_s} o_{1,j}, \ldots, \frac{1}{N_s}\sum_{j=1}^{N_s} o_{N,j}\right)^{\top}
  \]
  denote the feature vector for the target circuit, where $o_{i,j}$ is the   outcome when measuring the observable $O$ for the $j$-th time on the output of the $i$-th neighbor circuit. The linear regression problem defined on the test dataset is
  \begin{align}
    &\min_{\vb{c}} \Etest \left(\vb{c}^{\top}\vb{x}_{N_s} - y \right)^2 \nonumber \\
    =&\min_{\vb{c}} \vb{c}^{\top} \Etest\left[\vb{x}_{N_s}\vb{x}_{N_s}^{\top}\right] \vb{c} -2 \Etest[y\vb{x}_{N_s}]^{\top}\vb{c} + \Et y^2, \label{eq:solution_shot_test}
   \end{align}
   where the expectation $\Etest := \Et \Em$ is taken over the target circuits and the randomness of measurement outcomes.
   Following a similar procedure as in the proof of \cref{thm:optimal_shot_noise}, we obtain
   \begin{align*}
    \Etest\left[\vb{x}_{N_s}\vb{x}_{N_s}^{\top}\right] &= \Et\left[\vb{A} + \ReNs\right],
    \\
    \Etest[y\vb{x}_{N_s}] &=\Et\left[\vb{a}\right]^{\top}.
  \end{align*}
  Therefore \cref{eq:solution_shot_test} can also be written as
  \begin{equation*}
    \min_{\vb{c}} \vb{c}^{\top} \Et\left[\vb{A} + \ReNs\right] \vb{c} - 2 \Et\left[\vb{a}\right]^{\top} \vb{c} + \Et y^2,
  \end{equation*}
  which is identical to the linear regression defined on the training set when its size is sufficiently large.
\end{proof}

  {From the above theorem and corollary, we observe that the $\ell_2$ regularization term introduced by shot noise is given by
  \[
  \vb{c}^{\top}\Et\left[\ReNs\right] \vb{c}.
  \]
  It turns out that this regularization term helps us to achieve a better numerical stability, as the new coefficient matrix
  \[
    \Et\left[\vb{A} + \ReNs\right]
  \]
  would be less ill-conditioned. This suggests that shot noise helps the optimization process to converge more quickly.
  On the other hand, to ensure that the $\ell_2$ regularization term does not dominate the obtained MSE, the norm of $\vb{c}$ should not be too large.
  Otherwise, the required $N_s$ will become excessively large, making the needed computational cost unacceptable.} 




\section{Estimation of the Computational Cost}\label{app:efficiency}

\subsection{{Estimation of the Size of Training Set}}\label{subapp:estimation_trainingsize}
We have argued that with a sufficiently large training set, our method is
guaranteed to obtain the optimal linear solution with respect to the MSE\@.
We now provide an estimate of how large is sufficient, and our conclusion
suggests that the number of training circuits required is tolerable.
Specifically, we will prove that only
$\mathcal{O}({\ln{(N/\delta)}}/\varepsilon^2)$ training circuits suffice to
obtain a solution whose MSE is at most $\varepsilon$ away from that of the case
with no restrictions on the number of training circuits, with probability at
least $1-\delta$, where $N$ is the number of neighbor circuits.
In the following discussion, we treat the number of shots $N_s$ used to estimate
the expectation value of the observable $O$ as a hyperparameter.
Our proof holds for arbitrary $N_s$, including the case $N_s \to \infty$, where
the shot noise is negligible.
Recall that
$\vb{x}^{(i)}_{N_s}=\left(\frac{1}{N_s}\sum_{j=1}^{N_s} x_{1,j}^{(i)}, \ldots, \frac{1}{N_s}\sum_{j=1}^{N_s} x_{N,j}^{(i)}\right)^{\top}$,
which represents the feature vector of the $i$-th training circuit, and
$y^{(i)}$ is the corresponding label.
Solving the Lasso regression in \cref{eqapp:lasso_regression} defined on the
training set of size $T$, we obtain a solution
\[
\vb{c}_T^{N_s} \in \argmin_{\|\vb{c}\|_1 \leq \gamma} \vb{c}^{\top}\vb{A}_T^{N_s}\vb{c} - 2 \left(\vb{b}^{N_s}_T\right)^\top \vb{c} + Y_T,
\]
where we define
$\vb{A}_T^{N_s} := \frac{1}{T}\sum_{i=1}^T \vb{x}^{(i)}_{N_s}\vb{x}_{N_s}^{(i) \top}$,
$\vb{b}_T^{N_s} := \frac{1}{T} \sum_{i=1}^T\left(\left(\frac{1}{N_s}\sum_j x_{1,j}^{(i)}\right) y^{(i)}, \ldots, \left(\frac{1}{N_s}\sum_j x_{N,j}^{(i)}\right) y^{(i)}\right)^{\top}$,
and $Y_T := \frac{1}{T} \sum_{i=1}^T \left(y^{(i)}\right)^2$.
Since the solution to the Lasso regression may not be unique, we assume
$\vb{c}_T^{N_s}$ is one of the solutions returned by the optimization algorithm.

According to the proof for \cref{cor:same_test}, the MSE of using
$\vb{c}_T^{N_s}$ to predict the labels on the test set is given by
\[
  L(\vb{c}_T^{N_s}) := \vb{c}_T^{N_s {\top}} \vb{A}^{N_s}  \vb{c}_T^{N_s} - 2 \vb{b}^\top  \vb{c}_T^{N_s} + Y,
\]
where we have defined $\vb{A}^{N_s} := \Et\left[\vb{A}+\ReNs\right]$,
$\vb{b} := \Et\left(x_1y, \ldots, x_N y\right)^{\top}$, and $Y := \Et[y^2]$.
Now, consider an arbitrary solution obtained from \cref{eqapp:lasso_regression}
defined on the test set, i.e.,
\[
\vb{c}_*^{N_s} \in \argmin_{\|\vb{c}\|_1 \leq \gamma}\vb{c}^{\top}\vb{A}^{N_s}\vb{c} - 2 \vb{b}^\top \vb{c} + Y.
\]
Then, the MSE of applying $\vb{c}_*^{N_s}$ to perform QEM on the test dataset reads
\[
  L(\vb{c}_*^{N_s}) := \vb{c}_*^{N_s {\top}} \vb{A}^{N_s}  \vb{c}_*^{N_s} - 2 \vb{b}^\top\vb{c}_*^{N_s} + Y.
\]
Note that $L(\vb{c}_*^{N_s})$ is the infimum of the MSE that can be achieved on
the test dataset with Lasso regression when $T \to \infty$.

Next, we estimate the size of the training set $T$ required such that
\[
  L(\vb{c}_T^{N_s}) \leq L(\vb{c}_*^{N_s}) + \varepsilon
\]
with probability at least $1-\delta$.
We first present the following lemma, which characterizes the relationship
between the MSE on the training set and that on the test set using the same
coefficients $\vb{c}$.

\begin{lemma}\label{lemma:relation_MSE}
  For arbitrary $\|\vb{c}\|_1 \leq \gamma$, a training set of size
  $T\geq \ln \left(\frac{6N^2}{\delta}\right)
   \frac{{\left(12\gamma^2\|O\|^2\right)}^2}{\varepsilon^2} \in \mathcal{O}\left(\ln \left(\frac{N}{\delta}\right)\frac{1}{\varepsilon^2}\right)$ suffices to obtain $\vb{A}_T^{N_s}, \vb{b}_T^{{N_s}}$ and  $Y_t$ such that
    \begin{align*}
    L(\vb{c}) - \varepsilon/2 \leq \vb{c}^{\top}\vb{A}_T^{N_s}\vb{c} - 2 \left(\vb{b}^{N_s}_T\right)^\top \vb{c} + Y_T \leq L(\vb{c}) +\varepsilon/2
  \end{align*}
  with probability at least $1-\delta$.
 \end{lemma}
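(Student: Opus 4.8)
The plan is to treat the empirical quantities $\vb{A}_T^{N_s}$, $\vb{b}_T^{N_s}$, and $Y_T$ as sample averages of $T$ independent and identically distributed terms, one per training circuit, and to establish \emph{entrywise} concentration around their means. The key structural input, already supplied by \cref{thm:optimal_shot_noise} and \cref{cor:same_test}, is that these means are exactly the shot-noise-corrected test quantities: $\Etrain[\vb{A}_T^{N_s}] = \vb{A}^{N_s} = \Et[\vb{A}+\ReNs]$, $\Etrain[\vb{b}_T^{N_s}] = \vb{b}$, and $\Etrain[Y_T] = Y$, where $\Etrain = \ES\Em$ runs over both the random choice of training circuit and the measurement outcomes. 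Because the target bound must hold simultaneously for \emph{all} $\vb{c}$ with $\|\vb{c}\|_1\le\gamma$ (the learned coefficient vector will itself be data-dependent), I would \emph{not} apply a concentration bound to the scalar residual for a fixed $\vb{c}$; instead I would control the finitely many entries of $\vb{A}_T^{N_s}-\vb{A}^{N_s}$, $\vb{b}_T^{N_s}-\vb{b}$, and $Y_T-Y$, and then transfer this to the quadratic form via Hölder's inequality.

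For the concentration step, I would first record the boundedness facts: since each feature $(\vb{x}^{(i)}_{N_s})_j$ is an average of $N_s$ measurement outcomes of the observable $O$, every feature lies in $[-\|O\|,\|O\|]$, and the ideal label satisfies $|y^{(i)}|\le\|O\|$ as well. Consequently each summand of $(\vb{A}_T^{N_s})_{jk}$ lies in $[-\|O\|^2,\|O\|^2]$, and likewise for the summands of $(\vb{b}_T^{N_s})_j$ and of $Y_T$. Averaging over the $T$ independent training circuits (independence is only needed across the index $i$, not across entries $j,k$), Hoeffding's inequality gives, for each individual entry, a tail bound of the form $2\exp(-Tt^2/(2\|O\|^4))$ for a deviation of size $t$.

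Next I would apply a union bound over the $N^2$ entries of $\vb{A}_T^{N_s}$, the $N$ entries of $\vb{b}_T^{N_s}$, and the single scalar $Y_T$ — at most $3N^2$ quantities — so that, with probability at least $1-\delta$, all of $\|\vb{A}_T^{N_s}-\vb{A}^{N_s}\|_{\max}$, $\|\vb{b}_T^{N_s}-\vb{b}\|_{\infty}$, and $|Y_T-Y|$ are bounded by a common $t$; this is where the factor $6N^2$ inside the logarithm appears, the extra $2$ coming from the two-sided tail. For any admissible $\vb{c}$, Hölder's inequality then yields
\begin{align*}
 \Bigl|\,\vb{c}^{\top}\vb{A}_T^{N_s}\vb{c} - 2(\vb{b}^{N_s}_T)^{\top}\vb{c} + Y_T - L(\vb{c})\,\Bigr|
 \le \gamma^2 t + 2\gamma t + t \le (\gamma+1)^2 t .
\end{align*}
Setting $(\gamma+1)^2 t = \varepsilon/2$ and substituting into the Hoeffding requirement $T \ge (2\|O\|^4/t^2)\ln(6N^2/\delta)$ produces a bound of the form $T \ge 8(\gamma+1)^4\|O\|^4\varepsilon^{-2}\ln(6N^2/\delta)$; since $\gamma\ge1$ in all regimes of interest, $(\gamma+1)^4\le 16\gamma^4$, which is comfortably absorbed into the stated constant $(12\gamma^2\|O\|^2)^2 = 144\gamma^4\|O\|^4$.

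The boundedness bookkeeping and the Hoeffding/union-bound mechanics are routine. The one genuinely important point — and the step I would be most careful about — is the \emph{uniformity over $\vb{c}$}: this lemma is used downstream to compare the data-dependent minimizer $\vb{c}_T^{N_s}$ with the population minimizer $\vb{c}_*^{N_s}$, so the concentration must hold for every $\vb{c}$ in the feasible $\ell_1$-ball at once. This is precisely why the proof must proceed through entrywise control of $(\vb{A}_T^{N_s}, \vb{b}_T^{N_s}, Y_T)$ combined with the $\ell_1$–$\ell_\infty$ Hölder bound, which costs the $\ln N$ factor, rather than through a single scalar Hoeffding bound (which would be $N$-free but would cover only one fixed $\vb{c}$).
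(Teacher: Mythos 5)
Your proposal is correct and follows essentially the same route as the paper's proof: entrywise Hoeffding concentration for the $N^2+N+1$ sample-averaged quantities (bounded by $\|O\|^2$), a union bound producing the $\ln\left(6N^2/\delta\right)$ factor, and the $\ell_1$--$\ell_\infty$ H\"older transfer $\gamma^2\left\|\vb{A}_T^{N_s}-\vb{A}^{N_s}\right\|_{\max}+2\gamma\left\|\vb{b}_T^{N_s}-\vb{b}\right\|_\infty+\left|Y_T-Y\right|$, applied uniformly over the $\ell_1$-ball exactly as the downstream comparison of $\vb{c}_T^{N_s}$ with $\vb{c}_*^{N_s}$ requires. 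The only difference is bookkeeping: the paper assigns separate deviation thresholds $\varepsilon/(6\gamma^2)$, $\varepsilon/(12\gamma)$, $\varepsilon/6$ to the three groups so the stated constant falls out directly, whereas you use a single threshold $t$ and absorb $(\gamma+1)^2$ into the constant under the explicitly flagged (and in the paper's regimes harmless) assumption $\gamma\ge 1$.
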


A similar result to the above lemma can be found in~\cite[Chapter 11]{mohri2018foundations}.
To make this paper self-contained, we provide a proof here.

\begin{proof}
  We begin by deriving an upper bound for the difference in MSE between the
  training set of size $T$ and the test set, using the same coefficients
  $\vb{c}$.
  \begin{align*}
    & \left|\vb{c}^{\top}\vb{A}_T^{N_s}\vb{c} - 2 \left(\vb{b}^{N_s}_T\right)^\top \vb{c} + Y_T - \left(\vb{c}^{\top}\vb{A}^{N_s}\vb{c} - 2 \vb{b}^\top \vb{c} + Y\right)\right|\\
    \leq & \left|\vb{c}^{\top}\vb{A}_T^{N_s}\vb{c} - \vb{c}^{\top}\vb{A}^{N_s}\vb{c}\right| + 2\left|\left(\vb{b}^{N_s}_T\right)^\top\vb{c} - \vb{b}^\top \vb{c}\right| + \left|Y_T - Y\right|\\
    \leq & \|\vb{c}\|_1 \left\|\left(\vb{A}_T^{N_s}-\vb{A}^{N_s}\right)\vb{c}\right\|_\infty + 2 \|\vb{c}\|_1\left\|\vb{b}^{N_s}_T - \vb{b}\right\|_\infty + \left|Y_T - Y\right|\\
    \leq & \gamma^2 \left\|\vb{A}_T^{N_s}-\vb{A}^{N_s}\right\|_{\text{max}} + 2\gamma \left\|\vb{b}^{N_s}_T - \vb{b}\right\|_\infty + \left|Y_T - Y\right|,
  \end{align*}    %
  where the max (infinity) norm of a matrix (vector) is the maximum absolute
  value of its entries.
  As in \cref{lemma:two_design} and \cref{thm:optimal_shot_noise}, each entry of
  $\vb{A}^{N_s}, \vb{b}$ and $Y$ is the expectation value of a certain random
  variable, and each corresponding entry of $\vb{A}_T^{N_s}, \vb{b}_T^{{N_s}}$
  and $Y_T$ is an approximation of the expectation value using $T$ samples.
  Meanwhile, the absolute values of these random variables are upper bounded:
  \begin{align*}
  &\left|\frac{1}{N_s}\sum_{l=1}^{N_s} x_{j,l}^{(i)}\frac{1}{N_s}\sum_{l=1}^{N_s} x_{k,l}^{(i)}\right|\leq \|O\|^2,\\
  &  \left|\frac{1}{N_s}\sum_{l=1}^{N_s} x_{j,l}^{(i)} y^{(i)}\right|\leq \|O\|^2, \left(y^{(i)}\right)^2 \leq \|O\|^2,
  \end{align*}
  for arbitrary $j,k\in[N]$ and $i\in[T]$, where $\|O\|$ denotes the spectral norm of the observable $O$. As a result, by Hoeffding's inequality we have that
    \begin{align*}
    &\Pr \left(\left| \frac{1}{T} \sum_{i=1}^{T} \left(\frac{1}{N_s}\sum_{l=1}^{N_s} x_{j,l}^{(i)} \right)\left(\frac{1}{N_s}\sum_{l=1}^{N_s} x_{k,l}^{(i)} \right)\right.\right. \left.\left.- \left(\vb{A}^{N_s}\right)_{j,k}\right| \geq\frac{\varepsilon}{6\gamma^2}\right)\leq \frac{\delta}{3N^2},\\
    &\Pr \left(\left| \frac{1}{T} \sum_{i=1}^{T} {\left(\frac{1}{N_s}\sum_{l=1}^{N_s} x_{j,l}^{(i)} \right)} y^{(i)} -\left(\vb{b}\right)_{j}\right| \geq\frac{\varepsilon}{12\gamma}\right)\leq \frac{\delta}{3N},\\
    &\Pr \left(\left| \frac{1}{T} \sum_{i=1}^{T} \left(y^{(i)}\right)^2-\Et y^2\right| \geq\frac{\varepsilon}{6}\right)\leq \frac{\delta}{3},
    \end{align*}
  provided $T\geq \ln \left(\frac{6N^2}{\delta}\right)
   \frac{{\left(12\gamma^2\|O\|^2\right)}^2}{\varepsilon^2} \in \mathcal{O}\left(\ln \left(\frac{N}{\delta}\right)\frac{1}{\varepsilon^2}\right)$. By the union bound, the inequalities
   \begin{align*}
    &\left\|\vb{A}_T^{N_s}-\vb{A}^{N_s}\right\|_{\text{max}} \leq \frac{\varepsilon}{6\gamma^2}\\
    &\left\|\vb{b}^{N_s}_T - \vb{b}\right\|_\infty \leq \frac{\varepsilon}{12\gamma}\\
    & \left|Y_T - Y\right|\leq \frac{\varepsilon}{6}
   \end{align*}
    hold at the same time with probability at least $1-\delta$. Then we have that

    \begin{align*}
      & \left|\vb{c}^{\top}\vb{A}_T^{N_s}\vb{c} - 2 \left(\vb{b}^{N_s}_T\right)^\top \vb{c} + Y_T - \left(\vb{c}^{\top}\vb{A}^{N_s}\vb{c} - 2 \vb{b}^\top \vb{c} + Y\right)\right|\\
      \leq & \gamma^2 \left\|\vb{A}_T^{N_s}-\vb{A}^{N_s}\right\|_{\text{max}} + 2\gamma \left\|\vb{b}^{N_s}_T - \vb{b}\right\|_\infty + \left|Y_T - Y\right|\\
      \leq & \gamma^2 \frac{\varepsilon}{6\gamma^2}+ 2\gamma \frac{\varepsilon}{12\gamma}+\frac{\varepsilon}{6}\\
      \leq & \varepsilon/2
    \end{align*}
    holds with probability at least $1-\delta$, which concludes the proof.
\end{proof}

Based on \cref{lemma:relation_MSE}, we next prove the following theorem.
\begin{theorem}\label{thmapp:convergence_MSE}
  A training set
  \begin{equation*}
    \mathbb{S} = {\left\{\left( \left(x_1^{(i)}, \ldots, x_N^{(i)}\right),
          y^{(i)} \right)\right\}}_{i=1,\ldots,T}
  \end{equation*}
  of size
  \begin{equation*}
    T\geq \ln \left(\frac{6N^2}{\delta}\right)
    \frac{{\left(12\gamma^2\|O\|^2\right)}^2}{\varepsilon^2} \in \mathcal{O}\left(\ln \left(\frac{N}{\delta}\right)\frac{1}{\varepsilon^2}\right)
  \end{equation*}
  suffices to obtain a solution $\vb{c}_T^{N_s}$ such that
  \begin{equation*}
    L(\vb{c}_T^{N_s}) \leq L(\vb{c}_*^{N_s}) + \varepsilon
  \end{equation*}
  with probability at least $1-\delta$.
\end{theorem}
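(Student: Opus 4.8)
The plan is to run the standard two-sided uniform-convergence (excess-risk) argument, using \cref{lemma:relation_MSE} as the single quantitative input. First I would observe that the high-probability event furnished by \cref{lemma:relation_MSE} is in fact \emph{uniform} over the feasible set $\{\vb{c} : \|\vb{c}\|_1 \leq \gamma\}$. Indeed, its proof controls the entrywise deviations $\|\vb{A}_T^{N_s}-\vb{A}^{N_s}\|_{\text{max}}$, $\|\vb{b}_T^{N_s}-\vb{b}\|_\infty$ and $|Y_T - Y|$, which do not depend on $\vb{c}$, and only afterwards bounds the objective gap by $\gamma^2\|\vb{A}_T^{N_s}-\vb{A}^{N_s}\|_{\text{max}} + 2\gamma\|\vb{b}_T^{N_s}-\vb{b}\|_\infty + |Y_T - Y|$. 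Hence on a single good event $\mathcal{E}$ of probability at least $1-\delta$ we have $|\hat L_T(\vb{c}) - L(\vb{c})| \leq \varepsilon/2$ simultaneously for all $\vb{c}$ with $\|\vb{c}\|_1 \leq \gamma$, where $\hat L_T(\vb{c}) := \vb{c}^\top \vb{A}_T^{N_s}\vb{c} - 2(\vb{b}_T^{N_s})^\top \vb{c} + Y_T$ denotes the training objective and $L(\vb{c})$ the test objective.

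Second, I would chain three inequalities on the event $\mathcal{E}$, exploiting that both $\vb{c}_T^{N_s}$ and $\vb{c}_*^{N_s}$ are feasible. Applying uniform convergence at the (data-dependent) minimizer $\vb{c}_T^{N_s}$ gives $L(\vb{c}_T^{N_s}) \leq \hat L_T(\vb{c}_T^{N_s}) + \varepsilon/2$; optimality of $\vb{c}_T^{N_s}$ for the training objective together with feasibility of $\vb{c}_*^{N_s}$ gives $\hat L_T(\vb{c}_T^{N_s}) \leq \hat L_T(\vb{c}_*^{N_s})$; and invoking uniform convergence once more at $\vb{c}_*^{N_s}$ gives $\hat L_T(\vb{c}_*^{N_s}) \leq L(\vb{c}_*^{N_s}) + \varepsilon/2$. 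Composing the three,
\begin{equation*}
L(\vb{c}_T^{N_s}) \leq \hat L_T(\vb{c}_T^{N_s}) + \tfrac{\varepsilon}{2} \leq \hat L_T(\vb{c}_*^{N_s}) + \tfrac{\varepsilon}{2} \leq L(\vb{c}_*^{N_s}) + \varepsilon,
\end{equation*}
which is exactly the claimed bound, holding on $\mathcal{E}$, i.e.\ with probability at least $1-\delta$. The sample-size requirement $T \geq \ln(6N^2/\delta)\,(12\gamma^2\|O\|^2)^2/\varepsilon^2$ is inherited verbatim from \cref{lemma:relation_MSE}, so no new counting is needed.

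The only genuinely delicate point — and the one I would emphasize in the write-up — is the \emph{uniformity} asserted in the first step. Because $\vb{c}_T^{N_s}$ is itself a function of the training sample, one cannot legitimately apply a per-$\vb{c}$ concentration bound at the random point $\vb{c}_T^{N_s}$; what rescues the argument is that \cref{lemma:relation_MSE}'s guarantee is driven entirely by the $O(N^2)$ entrywise Hoeffding bounds combined through a union bound, whose conjunction is a fixed event independent of the evaluation point. Once this is made explicit, the remainder is the routine optimizer-comparison sandwich and requires no further probabilistic input; in particular the boundedness conditions ($|x^{(i)}_{j,l}| \leq \|O\|$ and $|y^{(i)}| \leq \|O\|$) underlying the Hoeffding step are already in force from the proof of \cref{lemma:relation_MSE}.
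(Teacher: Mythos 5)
Your proposal is correct and follows essentially the same route as the paper's own proof: the same optimizer-comparison sandwich, applying \cref{lemma:relation_MSE} once at the fixed point $\vb{c}_*^{N_s}$ and once at the data-dependent minimizer $\vb{c}_T^{N_s}$, combined with feasibility of $\vb{c}_*^{N_s}$ and optimality of $\vb{c}_T^{N_s}$ for the training objective, with the sample-size bound inherited verbatim from the lemma. Your explicit remark that the lemma's high-probability event is uniform over $\{\vb{c} : \|\vb{c}\|_1 \leq \gamma\}$ (being the conjunction of entrywise concentration events independent of $\vb{c}$, so that it may legitimately be invoked at the random point $\vb{c}_T^{N_s}$) is used implicitly in the paper's argument; making it explicit is a clarification rather than a deviation.
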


\begin{proof}
  According to \cref{lemma:relation_MSE}, given
  $T\geq \ln \left(\frac{6N^2}{\delta}\right) \frac{{\left(12\gamma^2\|O\|^2\right)}^2}{\varepsilon^2} \in \mathcal{O}\left(\ln \left(\frac{N}{\delta}\right)\frac{1}{\varepsilon^2}\right)$,
  we have that
  \begin{align*}
    &\vb{c}_T^{N_s {\top}} \vb{A}_T^{N_s}  \vb{c}_T^{N_s} - 2 \left(\vb{b}^{N_s}_T\right)^\top \vb{c}_T^{N_s} + Y_T \\
    = &\min_{\|\vb{c}\|_1 \leq \gamma} \vb{c}^{\top}\vb{A}_T^{N_s}\vb{c} - 2 \left(\vb{b}^{N_s}_T\right)^\top \vb{c} + Y_T  \\
    \leq &\vb{c}_*^{N_s {\top}}\vb{A}_T^{N_s}\vb{c}_*^{N_s {\top}} - 2 \left(\vb{b}^{N_s}_T\right)^\top \vb{c}_*^{N_s {\top}} + Y_T \\
    \leq & \vb{c}_*^{N_s {\top}}\vb{A}^{N_s}  \vb{c}_*^{N_s} - 2 \vb{b}^\top  \vb{c}_*^{N_s} + Y +\varepsilon/2
  \end{align*}
  with probability at least $1-\delta$. Then we have that
  \begin{align*}
    &L(\vb{c}_T^{N_s}) - L(\vb{c}_*^{N_s})\\
    = & \vb{c}_T^{N_s {\top}} \vb{A}^{N_s}  \vb{c}_T^{N_s} - 2 \vb{b}^\top  \vb{c}_T^{N_s} +Y - \left(\vb{c}_*^{N_s {\top}} \vb{A}^{N_s}  \vb{c}_*^{N_s} - 2 \vb{b}^\top  \vb{c}_*^{N_s} +Y\right)\\
    \leq & \vb{c}_T^{N_s {\top}} \vb{A}^{N_s}  \vb{c}_T^{N_s} - 2\vb{b}^\top  \vb{c}_T^{N_s} +Y - \left(\vb{c}_T^{N_s {\top}} \vb{A}_T^{N_s}  \vb{c}_T^{N_s} - 2 \left(\vb{b}^{N_s}_T\right)^\top \vb{c}_T^{N_s} + Y_T\right) + \varepsilon/2.
  \end{align*}
  Since $\|\vb{c}_T^{N_s}\|_1 \leq \gamma$, according to \cref{lemma:relation_MSE}, it holds that
  \begin{align*}
    &\vb{c}_T^{N_s {\top}} \vb{A}^{N_s}  \vb{c}_T^{N_s} - 2 \vb{b}^\top  \vb{c}_T^{N_s} +Y - \left(\vb{c}_T^{N_s {\top}} \vb{A}_T^{N_s}  \vb{c}_T^{N_s} - 2 \left(\vb{b}^{N_s}_T\right)^\top \vb{c}_T^{N_s} + Y_T\right)\leq \varepsilon/2,
  \end{align*}
  which implies that $L(\vb{c}_T^{N_s}) - L(\vb{c}_*^{N_s}) \leq \varepsilon$.
 \end{proof}

\subsection{{Time Complexity of NIL}}\label{subapp:time_complexity}

We analyze the time complexity of our method.
Suppose the number of neighbor circuits is $N$, and we want the test MSE of our
model to be smaller than $\varepsilon$ with probability at least $1-\delta$.
We estimate the cost of our approach.
We first estimate the number of shots $N_s$ needed to measure the involved
circuits.
Note that the MSE derived from the optimal coefficients $\vb{c}_*^{N_s}$ is
given by
\[
  \vb{c}_*^{N_s \top} \Et \left[A +\ReNs\right]\vb{c}_*^{N_s} -2\vb{b}^{\top} \vb{c}_*^{N_s} + \Et[y^2].
\]  %
Since $X_i \leq \|O^2\|$ and $\|\vb{c}_*^{N_s}\|_1 \leq \gamma$, the
regularization term introduced by shot noise is upper bounded by
\begin{align*}
  \vb{c}_*^{N_s \top} \Et \left[\ReNs\right]\vb{c}_*^{N_s} = & \vb{c}_*^{N_s \top} \Et \left[\frac{\diag(X_1,\ldots, X_N)}{N_s}\right]\vb{c}_*^{N_s}\\
   \leq & \|O^2\|\gamma^2/N_s.
\end{align*}

To ensure that this term does not dominate the error in the MSE, we impose the
condition $\|O^2\|\gamma^2/N_s \leq \varepsilon$, which gives that
$ N_s \geq \|O^2\|\gamma^2/\varepsilon$.
{It is worth noting that the above expression may be rather conservative and
  far from the tight condition on the number of shots required to achieve an MSE
  of $\varepsilon$.
  In the numerical experiments, we observe
  that using $N_s$ shots to estimate the expectation value can yield an MSE that
  is significantly smaller than $\|O^2\|\gamma^2 / N_s$.}

In the following, we treat $N_s$ as a constant satisfying the above condition.
The time complexity of our approach can be estimated through the following
steps: generating the training set, computing the optimal coefficients given in
\cref{eqapp:lasso_regression}, and applying the learned coefficients to mitigate
noise in the target circuit.
Below, we discuss the time complexity of each step individually.

\paragraph{Generating the training set.} To generate a training set of the form
\begin{equation*}
  \mathbb{S}^{N_s}  = {\left\{\left(\left(\frac{1}{N_s}\sum_{j=1}^{N_s} x_{1,j}^{(i)}, \ldots, \frac{1}{N_s}\sum_{j=1}^{N_s} x_{N,j}^{(i)}\right),
        y^{(i)}\right)\right\}}_{i\in [T]},
\end{equation*}%
we need to compute the feature vector
$\left(\frac{1}{N_s}\sum_{j=1}^{N_s} x_{1,j}^{(i)}, \ldots, \frac{1}{N_s}\sum_{j=1}^{N_s} x_{N,j}^{(i)}\right)$
and the label $y^{(i)}$ for each training circuit generated by
2-training method.
The number of training circuits we generate is
$T = \mathcal{O}\left(\ln \left(\frac{{N}}{\delta}\right)\frac{{1}}{\varepsilon^2}\right)$.
This choice can be justified as follows.
As proved in \cref{subapp:estimation_trainingsize}, a training set of size
$T = \mathcal{O}\left(\ln \left(\frac{N}{\delta}\right)\frac{1}{\varepsilon^2}\right)$
suffices to obtain a solution $\vb{c}_{t}^{N_s}$ whose test MSE is smaller than
the optimal test MSE plus $\varepsilon$.
If the optimal test MSE is already smaller than $\varepsilon$, then the test MSE of
$\vb{c}_{t}^{N_s}$ will be below $2\varepsilon$, which is acceptable.
However, if the test MSE of $\vb{c}_{t}^{N_s}$ exceeds $2\varepsilon$, the optimal
MSE must be larger than $\varepsilon$.
In this case, we should increase the number of neighbor circuits rather than the
number of training circuits.

After generating the training circuits, we apply the neighbor map to these
circuits and measure
$\mathcal{O}\left(T N\right) = \mathcal{O}\left(\ln \left(\frac{{N}}{\delta}\right)\frac{{N}}{\varepsilon^2}\right)$
neighbor circuits (with $N$ neighbor circuits for each training circuit) for
$N_s$ times.
The expectation values for the ideal training circuits can be simulated
efficiently, as they are all Clifford circuits.
The labels $y^{(i)}$'s can be efficiently computed classically in
$\mathcal{O}\left(T \operatorname{poly}(n)\right)$ time, where $n$ is the number
of qubits.

In our numerical experiments, we observe that the training set size required for
convergence is significantly smaller than the theoretical bound.
Specifically, empirical data fitting suggests that the training MSE converges when the training set size follows the approximate scaling
\[
T = \frac{2\gamma \ln N}{\sqrt{\varepsilon}},
\]
where $\gamma$ is some constant.
Based on this empirical scaling, it appears sufficient in practice to generate approximately $\order{TN} = \order{\frac{N \ln N}{\sqrt{\varepsilon}}}$ neighbor circuits.

\paragraph{Computing the optimal coefficients.}
This step can be completed on classical computers.
Many algorithms have been developed to solve the optimization problem
\cref{eqapp:lasso_regression}~\cite{schmidt2005least,mairal2012complexity,roth2004generalized,hazan2012linear}.
Particularly, we can solve this problem with the algorithm
in~\cite{schmidt2005least} in time
$\mathcal{O}\left(N^3+T N^2\right) = \mathcal{O}\left(N^3\right)$.

The above two steps constitute the entire training stage, where we generate
$\mathcal{O}\left(T N\right) = \mathcal{O}\large(\ln \left(\frac{{N}}{\delta}\right)\\ \frac{{N}}{\varepsilon^2}\large)$
quantum circuits, and the running time on a classical computer is upper bounded
by
\[
\mathcal{O}\left(T (\operatorname{poly}(n))+N^3\right) = \mathcal{O}\left(\ln \left(\frac{{N}}{\delta}\right)\frac{{\operatorname{poly}(n)}}{\varepsilon^2}+N^3\right).
\]
The coefficients returned by the classical computer for QEM are applicable to
all target circuits with the same structure, i.e., circuits that differ only in
the parameters of the rotation gates.
Next, we utilize the learned coefficients to perform QEM on the target circuit.

\paragraph{Applying the learned model.}
After obtaining the coefficients $\vb{c}_{t}^{N_s}$, for a target circuit
requiring QEM, we only need to measure its $N$ neighbor circuits $N_s$ times to
collect the feature vector $\vb{x}_{N_s}$.
Finally, we output the mitigated target expectation value as
$\vb{c}_{t}^{N_s \top}\vb{x}_{N_s}$, which can be computed in $\mathcal{O}(N)$
time.

As we can see, once the coefficients are learned, applying our method to the
target circuit is very efficient, requiring significantly less time compared to
the training phase.
We emphasize that the computational resources consumed in the training stage are
worthwhile, as the learned coefficients can be applied to all parameter
configurations of the target circuit.

\section{Benchmarking Performance on Large-scale Non-Clifford Circuits}\label{app:same_mse}


In this section, we demonstrate that the performance of NIL can be reliably benchmarked even on large non-Clifford circuits, a regime that poses significant challenges for previous learning-based QEM protocols.
Owing to the specific mathematical structure of the 2-design training method, we can establish the following corollary:

\begin{Corollary}\label{cor:same_mse}
  For an arbitrary $\vb{c} = {\left(c_1, c_2, \cdots c_N\right)}^{\top}$,
  \begin{equation*}
    \Etrain {\left(\vb{c}^{\top} \cdot \vb{x}_{N_s}-y\right)}^2 =
    \Etest {\left(\vb{c}^{\top} \cdot \vb{x}_{N_s}-y\right)}^2.
 \end{equation*}
\end{Corollary}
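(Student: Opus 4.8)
The plan is to reduce the identity to the second-moment computations that have already been carried out for the training and test distributions. Expanding the squared loss as a quadratic form gives
\[
(\vb{c}^{\top}\vb{x}_{N_s} - y)^2 = \vb{c}^{\top}\vb{x}_{N_s}\vb{x}_{N_s}^{\top}\vb{c} - 2\,y\,\vb{x}_{N_s}^{\top}\vb{c} + y^2,
\]
so by linearity of expectation both sides of the claimed identity decompose into three pieces: a matrix moment $\Etrain[\vb{x}_{N_s}\vb{x}_{N_s}^{\top}]$ versus $\Etest[\vb{x}_{N_s}\vb{x}_{N_s}^{\top}]$, a vector moment $\Etrain[y\,\vb{x}_{N_s}]$ versus $\Etest[y\,\vb{x}_{N_s}]$, and a scalar moment $\Etrain[y^2]$ versus $\Etest[y^2]$. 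It therefore suffices to show that each of these three training-side moments equals its test-side counterpart.

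First I would invoke the moment computations performed in the proof of \cref{thm:optimal_shot_noise}, which establish for the training distribution that $\Etrain[\vb{x}_{N_s}\vb{x}_{N_s}^{\top}] = \Et[\vb{A} + \ReNs]$, $\Etrain[y\,\vb{x}_{N_s}] = \Et[\vb{a}]$, and $\Etrain[y^2] = \Et[y^2]$. The identical reduction on the test side is exactly the content of the derivation inside the proof of \cref{cor:same_test}, which yields $\Etest[\vb{x}_{N_s}\vb{x}_{N_s}^{\top}] = \Et[\vb{A} + \ReNs]$ and $\Etest[y\,\vb{x}_{N_s}] = \Et[\vb{a}]$; the scalar moment $\Etest[y^2] = \Et[y^2]$ holds trivially because the label $y$ is the ideal expectation value and carries no shot noise. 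Substituting these into the quadratic expansion, both sides collapse to the single expression $\vb{c}^{\top}\Et[\vb{A}+\ReNs]\vb{c} - 2\,\Et[\vb{a}]^{\top}\vb{c} + \Et[y^2]$, which proves the corollary.

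The substantive point---and the only place where the argument is more than bookkeeping---is that the shot-noise regularization matrix $\ReNs$ must appear with identical coefficients under both distributions. On each side the off-diagonal cross terms carry no shot-noise contribution, since the measurement noise on distinct neighbor circuits is independent and mean-zero, whereas the diagonal picks up a factor $\Var[\,\cdot\,]/N_s$. That these diagonal variances coincide in expectation between the two distributions is precisely \cref{lemma:same_variance}, and the agreement of the noise-free second moments is \cref{lemma:two_design}. Thus the main obstacle is conceptual rather than computational: one must confirm that the rotation $2$-design property forces agreement not merely of the ideal second moments but also of the measurement variances, so that the $N_s$-dependent term is reproduced exactly on both sides. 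Once both lemmas are invoked, the corollary follows immediately from linearity of expectation.
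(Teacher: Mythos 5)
Your proposal is correct and takes essentially the same approach as the paper: both expand the squared loss into the quadratic form $\vb{c}^{\top}\Exp\left[\vb{x}_{N_s}\vb{x}_{N_s}^{\top}\right]\vb{c} - 2\vb{c}^{\top}\Exp\left[y\vb{x}_{N_s}\right] + \Exp\left[y^2\right]$ and then equate the training-side and test-side moments using the computations underlying \cref{thm:optimal_shot_noise} and \cref{cor:same_test}. The paper's proof is merely terser, folding these moment identities into a single citation of \cref{thm:optimal_shot_noise}, whereas you make explicit the supporting roles of \cref{lemma:two_design} and \cref{lemma:same_variance}.
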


\begin{proof}
  We directly substitute $\vb{c}$ to the target circuit.
  According to \cref{thm:optimal_shot_noise}, we have
  \begin{align*}
   & \Etest {\left(\vb{c}^{\top} \cdot \vb{x}_{N_s}-y\right)}^2 \\
   = & \Etest {\left( \vb{c}^{\top} \cdot \vb{x}_{N_s}-y\right) \left(\vb{c}^{\top} \cdot \vb{x}_{N_s}-y\right)}^{\top} \\
   = & \vb{c}^{\top} \Etest \left[\vb{x}_{N_s}\vb{x}_{N_s}^{\top} \right] \vb{c} -2 \vb{c}^{\top}\Etest\left[y\vb{x}_{N_s}\right]  + \Et y^2\\
   = &  \vb{c}^{\top} \Etrain \left[\vb{x}_{N_s}\vb{x}_{N_s}^{\top} \right] \vb{c} -2 \vb{c}^{\top}\Etrain\left[y\vb{x}_{N_s}\right]  + \Et y^2\\
   = & \Etrain {\left(\vb{c}^{\top} \cdot \vb{x}_{N_s}-y\right)}^2.
  \end{align*}
\end{proof}

\Cref{cor:same_mse} reveals an important fact: when a learning-based QEM protocol
employs linear regressions to fit the training data, the set of training
circuits generated by 2-training method can also efficiently benchmark the
obtained model's actual performance on the target parameterized circuit in the
average sense, provided the neighbor circuits of both training and target
circuits are measured with the same number of shots.

To understand this, note that the MSE achieved on the training circuits
generated by 2-training method can be computed classically and
efficiently, whereas the MSE for target parameterized circuits is difficult to
obtain even with quantum computers, since ideal expectation values cannot be
efficiently simulated classically.
Therefore, \cref{cor:same_mse} enables us to estimate our protocol's performance
on large non-Clifford quantum circuits on average without requiring precise
classical simulation of the ideal target expectation values.

\section{Mean Squared Error and Worst Case Error}\label{app:mse_wce}

We have shown that the learning-based QEM model using our approach to generate
training circuits yields the optimal linear function (with respect to MSE) on
average.
In this section, we prove that if the MSE on the training dataset is low, then for any
specific parameter configuration of the target quantum circuit, there is a high probability of obtaining good error mitigation performance.

Let the MSE achieved on the training circuit be $\varepsilon$, i.e.,
\[
  \Etrain {\left(\vb{c}^{\top} \cdot \vb{x}_{N_s}-y\right)}^2 = \varepsilon .
\]
According to \cref{cor:same_mse}, we have
\begin{equation*}
  \Etest {\left(\vb{c}^{\top} \cdot \vb{x}_{N_s}-y\right)}^2 =
  \Etrain {\left(\vb{c}^{\top} \cdot \vb{x}_{N_s}-y\right)}^2 = \varepsilon.
\end{equation*}
By Jensen's inequality, we derive that
\begin{equation*}
  {\left(\Etest|\vb{c}^{\top} \cdot \vb{x}_{N_s}-y|\right)}^2 \leq
  \Etest {\left(\vb{c}^{\top} \cdot \vb{x}_{N_s}-y\right)}^2 = \varepsilon,
\end{equation*}
which yields
\[
  \Etest|\vb{c}^{\top} \cdot \vb{x}_{N_s}-y| \leq \sqrt{\varepsilon}.
\]
Therefore, for any instance of $\left(\vb{x}_{N_s}', y'\right)$, which is
obtained by measuring each neighbor circuit of the target circuit for $N_s$
times, by Chebyshev's inequality it holds that
\begin{align*}
  \Pr \left(\left||\vb{c}^{\top} \cdot \vb{x}_{N_s}'-y'| -
  \Etest|\vb{c}^{\top} \cdot \vb{x}_{N_s}-y|\right| \geq
  k\sqrt{\varepsilon}\right)  \leq \frac{\Var|\vb{c}^{\top} \cdot \vb{x}_{N_s}-y|}{k^2\varepsilon}\leq \frac{1}{k^2}.
\end{align*}
This suggests that on any specific parameter configuration of the target quantum circuit,
the protocol can achieve
\begin{equation*}
  |\vb{c}^{\top} \cdot \vb{x}_{N_s}'-y'|\leq k\sqrt{\varepsilon} +
  \sqrt{\varepsilon} = (k+1)\sqrt{\varepsilon}
\end{equation*}
with probability at least $1-\frac{1}{k^2}$.

For instance, if the MSE on the training set is less than $10^{-4}$ (which is
easily achievable for all the quantum circuits we have numerically studied) and
we choose $k = 10$, then our approach ensures that on any specific parameter
configuration of the target circuit, the error in the output expectation value
will be less than 0.1 with probability exceeding 99\%.
Note that we do not rule out the possibility of encountering certain parameter
configurations with larger errors.
However, the above discussion implies that their occurrence is statistically
improbable.
In fact, we have conducted various numerical experiments on different quantum
circuits to investigate the QEM performance for individual circuits.
In all cases we have studied (as detailed in \cref{subapp:mse_wce_numerical}), when
a good QEM protocol is discovered in terms of MSE, the expectation values are
recovered to be very close to the ideal values for all test circuits.

\subsection{Low Mean Squared Error May Indicate Low Worst-Case Error}\label{subapp:mse_wce_numerical}
\begin{figure}[H]
  \captionsetup[subfigure]{justification=centering, labelfont = bf}
  \begin{subfigure}[t]{0.33\textwidth}
    \centering
    \includegraphics[width = \linewidth]{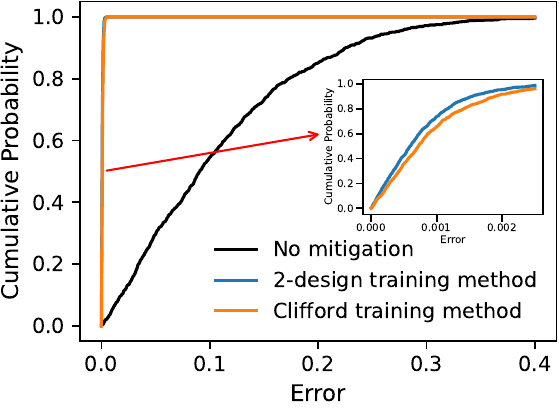}
    \subcaption{$\mathsf{vqe}$-6-4 {(6 qubits, 13 layers)}}
  \end{subfigure}%
  \hfill
  \begin{subfigure}[t]{0.33\textwidth}
    \centering
    \includegraphics[width = \linewidth]{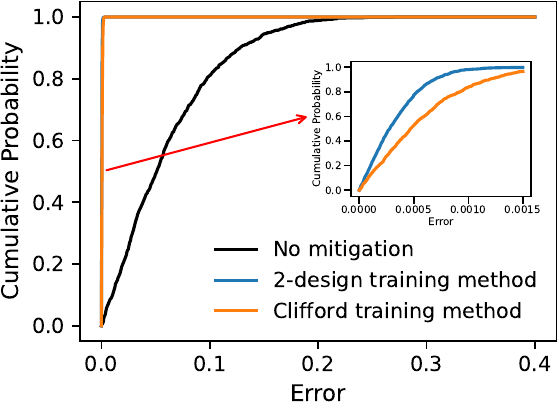}
    \subcaption{$\mathsf{vqe}$-$R_y$-$6$-$4$ {(8 qubits, 17 layers)}}
  \end{subfigure}%
  \hfill
  \begin{subfigure}[t]{0.33\textwidth}
    \centering
    \includegraphics[width = \linewidth]{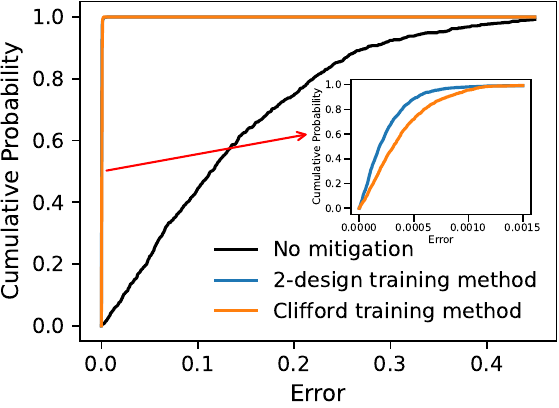}
    \subcaption{$\mathsf{vqe}$-$(3,2)$-$2$ {(8 qubits, 20 layers)}}
  \end{subfigure}%
  \hfill

  \vspace{\baselineskip}

  \begin{subfigure}[t]{0.33\textwidth}
    \centering
    \includegraphics[width = \linewidth]{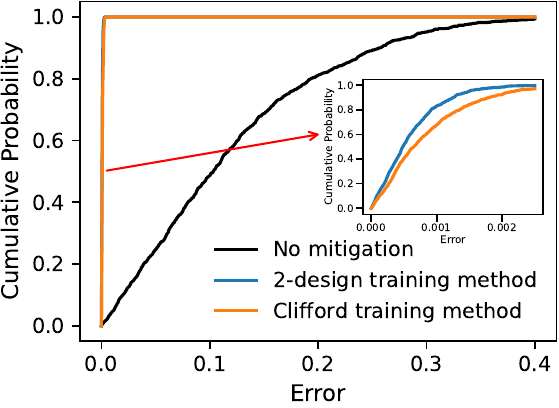}
    \subcaption{$\mathsf{vqe}$-$8$-$4$ {(6 qubits, 13 layers)}}
  \end{subfigure}%
  \hfill
  \begin{subfigure}[t]{0.33\textwidth}
    \centering
    \includegraphics[width = \linewidth]{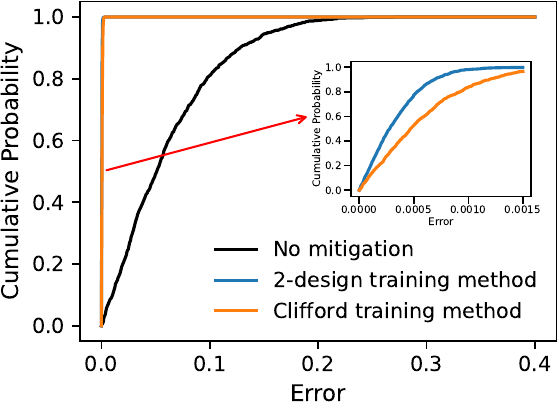}
    \subcaption{$\mathsf{vqe}$-$R_y$-$8$-$4$ {(8 qubits, 17 layers)}}
  \end{subfigure}%
  \hfill
  \begin{subfigure}[t]{0.33\textwidth}
    \centering
    \includegraphics[width = \linewidth]{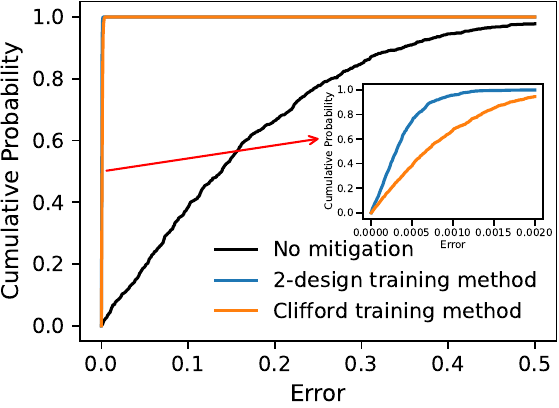}
    \subcaption{$\mathsf{vqe}$-$(4,2)$-$2$ {(8 qubits, 20 layers)}}
  \end{subfigure}%

  \caption{Empirical cumulative distribution of the computing error for 500
    random parameter configurations of the target quantum circuit.
    For each point $(x,y)$ on the curve, $x$ represents the absolute deviation
    from the ideal expectation value, and $y$ denotes the proportion of
    configurations (out of 500) with an absolute deviation less than $x$.
    In the main figures, the blue and orange curves nearly coincide.
    The insets zoom in on the region where the curves diverge, specifically when
    the error approaches zero.
    The results show that the blue curve (obtained using the 2-design method)
    yields a more accurate prediction of the ideal expectation value with higher
    probability compared to the orange curve (obtained using the Clifford
    training method).}%
  \label{fig:cumulative_distribution}
\end{figure}
We perform numerical simulations for a specific error mitigation task to evaluate the performance of our protocol on individual parameter configurations of the target quantum circuit.
Recall that our protocol guarantees the optimal linear solution in the average
case.

Here all the weight-1 Pauli neighbors are chosen to serve as neighbor circuits,
as depicted in \cref{fig:cumulative_distribution}.
We find that for all the parameter configurations where we could classically
calculate the ideal expectation values, the worst-case error of our protocol is
less than $3\times 10^{-3}$.
This indicates that our method not only has excellent error mitigation effects
on average, but also has a low failure probability.

We also compare the performance of our method with the Clifford training method,
which uses all Clifford gates.
As shown in the insets of \cref{fig:cumulative_distribution}, our method
achieves lower error with higher probability across all cases.

\section{Using Non-Clifford Quantum Circuits for Training}\label{app:mixed_non_clf}

In this section, we show that retaining a subset of non-Clifford gates when generating the training circuits helps the model learn more efficiently.
We first show a limitation of Clifford training circuits in learning the pattern
of quantum noise.
When considering only Pauli noise channels and a Pauli observable $O$, if a
neighbor circuit $C^{(i)}_j$ is obtained by inserting Pauli gates to a Clifford
training circuit $C^{(i)}$, then it can be proven that
$\bignavg{\C^{(i)}_{j}} = \pm \navg{\C^{(i)}}$, where $\navg{\C}$ is the noisy
target expectation value obtained on $\C$.
\begin{lemma}\label{lemma:form_noisy_clf}
  Suppose a Clifford quantum circuit $C_j$ can be expressed as
  $C_j = U_m \cdots U_1$, and a noisy implementation of $C_j$ is given by
  \begin{equation*}
    \noise{\C}_{j} = \E_m \circ \P_{m} \circ \U_m \cdots \E_1 \circ \P_{1} \circ \U_1,
  \end{equation*}
  where $\E_j$ is a Pauli noise channel, $\U_j$ is a Clifford gate, and $\P_j$
  is a Pauli gate or the product of two Pauli gates.
  Then $\bignavg{\C^{(i)}_{j}} = \pm \navg{\C^{(i)}}$.
\end{lemma}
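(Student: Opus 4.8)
The plan is to exploit two elementary facts about Pauli channels and Clifford gates in order to transport every inserted Pauli to the output end of the circuit, where it acts on the Pauli observable $O$ only by a global sign. First I would record the two commutation rules \emph{at the level of quantum channels}, since this is where the phases from (anti)commutation cancel. (i) A Pauli noise channel $\E(\rho) = \sum_k p_k Q_k \rho Q_k^\dagger$ commutes with conjugation by any Pauli $P$: writing $Q_k P = \pm P Q_k$, the two $\pm$ signs cancel in $Q_k P \rho P^\dagger Q_k^\dagger = P Q_k \rho Q_k^\dagger P^\dagger$, so that $\E \circ \P = \P \circ \E$. (ii) A Clifford channel $\U$ turns a Pauli conjugation into another Pauli conjugation, namely $\U \circ \P = \P' \circ \U$ with $\P'(\cdot) = (U P U^\dagger)(\cdot)(U P U^\dagger)^\dagger$, where $U P U^\dagger$ is again Pauli. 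Both identities are insensitive to phases because the channel conjugates symmetrically on both sides.

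Next I would push every inserted factor $\P_j$ leftward through the rest of the circuit. Starting from $\noise{\C}_j = \E_m \circ \P_m \circ \U_m \circ \cdots \circ \E_1 \circ \P_1 \circ \U_1$, fact (i) lets $\P_1$ slide past the noise $\E_1$ unchanged, and fact (ii) lets it slide past each Clifford $\U_k$ at the cost of updating it to a new Pauli conjugation; whenever it meets a later inserted $\P_k$, the two Pauli conjugations merge into a single Pauli conjugation. Iterating over all positions collapses every inserted Pauli into one conjugation channel $\tilde{\P}(\cdot) = P_{\rm tot} (\cdot) P_{\rm tot}^\dagger$ by a single overall Pauli $P_{\rm tot}$, giving $\noise{\C}_j = \tilde{\P} \circ \noise{\C}$, where $\noise{\C} = \E_m \circ \U_m \circ \cdots \circ \E_1 \circ \U_1$ is exactly the noisy circuit with all insertions removed, i.e. the noisy training circuit $\C^{(i)}$.

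Finally I would evaluate the expectation value. Writing $\bignavg{\C_j} = \tr\!\big(O\, P_{\rm tot}\, \noise{\C}(\rho_{\rm in})\, P_{\rm tot}^\dagger\big)$ and using cyclicity of the trace gives $\tr\!\big(P_{\rm tot}^\dagger O P_{\rm tot}\, \noise{\C}(\rho_{\rm in})\big)$; since both $O$ and $P_{\rm tot}$ are Pauli operators, they commute or anticommute, so $P_{\rm tot}^\dagger O P_{\rm tot} = \pm O$, and hence $\bignavg{\C_j} = \pm \navg{\C}$, which is the claimed identity $\bignavg{\C^{(i)}_j} = \pm \navg{\C^{(i)}}$. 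Note that pushing the Paulis toward the \emph{output} (rather than the input) is the clean choice, because it leaves $\rho_{\rm in}$ untouched and lands the accumulated Pauli next to the Pauli observable.

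I expect the only delicate point to be the bookkeeping in the pushing step: one must verify at the channel level that each sign produced by an (anti)commutation genuinely cancels, so that the accumulated object is a single Pauli \emph{conjugation} rather than a signed operator sum, and that the merge of two Pauli conjugations is again a Pauli conjugation up to an irrelevant global phase. Once facts (i) and (ii) are phrased as the channel identities $\E \circ \P = \P \circ \E$ and $\U \circ \P = \P' \circ \U$, the iteration is routine and the sign tracking is mechanical.
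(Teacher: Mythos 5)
Your proposal is correct and follows essentially the same route as the paper's proof: commute each inserted Pauli conjugation past the Pauli noise channels, conjugate it through the Clifford gates (staying Pauli), accumulate everything into a single Pauli channel at the output, and then absorb it into the Pauli observable at the cost of a sign. Your explicit channel-level justification of the sign cancellations and of the merging step is a slightly more careful write-up of the same bookkeeping the paper performs.
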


\begin{proof}
  Since $\E_m$ is a Pauli noise channel, we have that
  $\E_m \circ \P_m = \P_m \circ \E_m$.
  In this way, $\P_m$ can be moved to the end of the circuit.
  Similarly, $\P_{m-1}$ commutes with $\E_{m-1}$.
  As $U_m$ is a Clifford gate; that is,
  $U_m P_{m-1} \rho P_{m-1} U_m^{\dagger} = P'_{m-1} U_m \rho U_m^{\dagger} P'_{m-1}$,
  where $P'_{m-1}$ is also a Pauli operator.
  Then the quantum channel $\P_{m-1}'$ commutes with $\E_m$ and can also be
  moved to the end of the circuit, which combined with $\P_m$ becomes another
  Pauli operator.

  Repeating this process, we move all the inserted Pauli gates to the end of the
  circuit, resulting in
  \begin{equation*}
    \noise{\C}_{j} = \Q \circ \E_m \circ \U_m \cdots \E_1 \circ \U_1,
  \end{equation*}
  where $\Q$ is a Pauli channel.
  Therefore, we have that
  \begin{align*}
    \navg{\C_{j}}
    & = \tr \bigl(O \, \noise{\C}_{j} (\rho_{\lin}) \bigr) \\
    & = \tr(O \, \Q \circ \E_m \circ \U_m \cdots \E_1 \circ \U_1 (\rho_{\lin})) \\
    & = \tr( (Q^{\dagger} O Q) \, \E_m \circ \U_m \cdots \E_1 \circ \U_1 (\rho_{\lin})) \\
    & = \mu(Q, O) \tr \bigl( O \, \noise{\C} (\rho_{\lin}) \bigr) \\
    & = \mu(Q, O) \navg{\C},
 \end{align*}
 where $\mu(Q, O)$ is $1$ if $Q$ and $O$ commute and $-1$ otherwise.
\end{proof}

\begin{algorithm}[H]
  \textbf{Input:} Target parameterized circuit $C(\vb*{\theta})$ and
  the depth $L$ for the non-Clifford layer\\
  \textbf{Output:} A set $\mathbb{S}$ of training circuits
  \begin{algorithmic}[1]
    \State $S \gets \emptyset$
    \For{$i = 1, 2, \ldots, T$}
      \For{each gate in the circuit $C(\vb*{\theta})$}
        \If {the gate is of the form $R_{\sigma}(\theta)$ ($\sigma = x,y,z$)}
          \If {the current layer \(l>L\),}
            \State $\theta \gets_{\$} \{0, \pi/2, \pi, 3\pi/2\}$
          \Else
            \State $\theta \gets_{\$} [0, 2\pi)$
          \EndIf
        \EndIf
      \EndFor
      \State Add the new circuit to $\mathbb{S}$
    \EndFor
  \end{algorithmic}
  \caption{Strategy for constructing training circuits including non-Clifford
    ones}%
  \label{alg:gen_mixed_training}
\end{algorithm}

\Cref{lemma:form_noisy_clf} implies that the noisy expectation value of each
training circuit $C^{(i)}$ and those of all its neighbor circuits $C^{(i)}_j$
are either the same or the opposite in this case.
Consequently, all data points in the training set are of the form
$\left((x^{(i)}, \pm x^{(i)}, \pm x^{(i)}, \cdots), y^{(i)}\right)$ with
$x^{(i)} = \navg{\C^{(i)}}$ and the noiseless expectation value
$y^{(i)} \in \{0, \pm 1\}$~\cite{qin2023error}.
This creates a significant discrepancy between the training set and the
test set generated from the target quantum circuits, which could seriously hinder
the learning process.

A possible way to address this issue is to introduce training circuits of the
form $C^{(i)} = U^{(i)}V^{(i)}$, where $U^{(i)}$ is a Clifford circuit and
$V^{(i)}$ is a shallow quantum circuit that may contain non-Clifford gates.
To achieve this, we can modify 2-training method by setting the angles of
the rotation gates in the first few layers of $C^{(i)}$ to be uniformly
distributed in $[0, 2\pi)$, which forms $V^{(i)}$, as illustrated in
\cref{fig:gen_mixed_training_circuit}.
The pseudo-code for this strategy is presented in \cref{alg:gen_mixed_training}.

\begin{figure}[htb]
  \centering
  \includegraphics[width = 0.3\columnwidth]{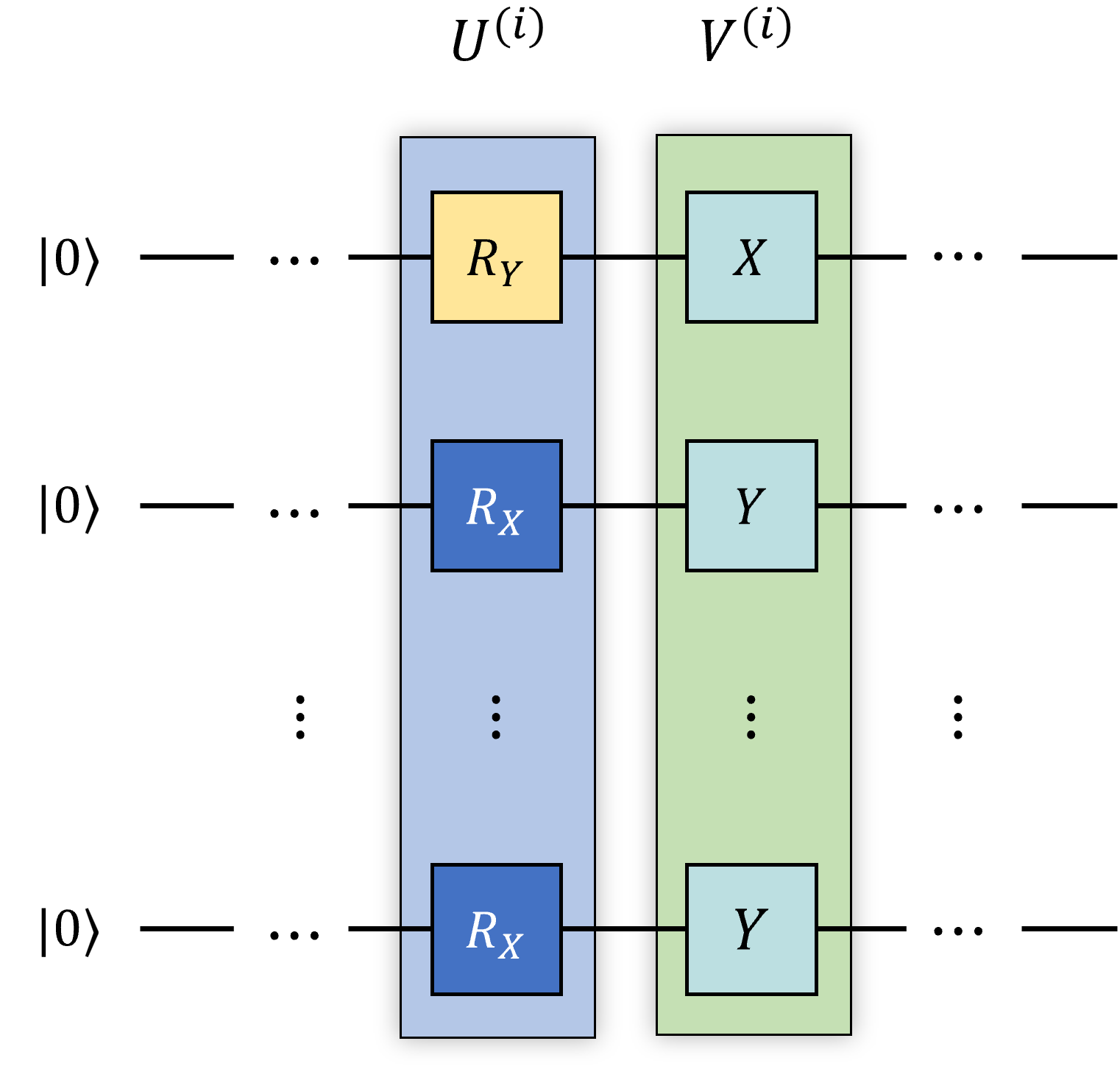}
  \caption{Non-Clifford training circuits.
    It is produced by first applying 2-training method on the target
    circuit, and then replacing the first layer by non-Clifford gates.}%
  \label{fig:gen_mixed_training_circuit}
\end{figure}

When the observable $O$ is a Pauli operator or a sum of polynomially many Pauli
operators, the expectation value $\avg{\C^{(i)}}$ can be classically computed.
Indeed, suppose for simplicity that $O$ is a Pauli operator, then the ideal
expectation value is given by
\begin{equation*}
  \avg{\C^{(i)}} = \tr (O\, \U^{(i)} \circ \V^{(i)} \bigl(\rho_{\lin}\bigr))
  = \tr(O'\, \V^{(i)} \bigl(\rho_{\lin}\bigr)),
\end{equation*}
where $O' := {\bigl(U^{(i)}\bigr)}^{\dagger} O U^{(i)}$ is also a Pauli operator
by the definition of Clifford circuits.
In such a situation, computing the expectation value for $C^{(i)}$ is reduced to
computing that for $V^{(i)}$, which can then be classically computed by tensor
network methods~\cite{markov_simulating_2008}.
For the computation of $\avg{\C^{(i)}}$ to be efficient, we require the depth of
$V^{(i)}$ to be $O(\log{n})$, where $n$ is the number of input qubits.

\subsection{Comparison Between Mixed and Purely Clifford Training Circuits}

Next we compare the performance of a learning-based QME protocol
employing shallow non-Clifford circuits for training with that of a protocol
employing only Clifford training circuits.
Specifically, in the first one or two layers of each training circuit, we allow
non-Clifford gates (i.e., rotation gates with random angles), while the
remaining layers are still generated by 2-training method.
Here, we choose all the weight-1 neighbors as the neighbor circuits, and the
learning model remains a linear function.
It can be seen that both protocols successfully converge to the optimal linear
function.
We then compare the convergence speeds of these two protocols to the solutions,
particularly the sizes of the training circuit sets required to achieve stable
convergence.

As shown in \cref{fig:non-Clifford_vs_Clifford(lasso)}, the protocol using
non-Clifford training circuits exhibits a smoother convergence curve, indicating
a faster convergence rate.
Therefore, in practical applications, we can reduce the number of training
circuits required by introducing shallow non-Clifford gate layers into the
training circuits.

\begin{figure}[htb]
  \centering
  \includegraphics[width = 0.4\columnwidth]{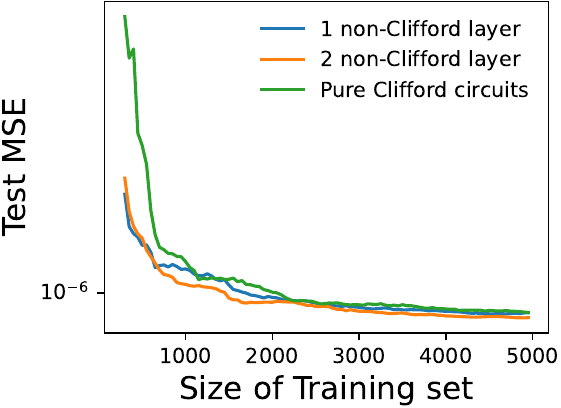}
  \caption{Comparison between the performance of learning-based QEM models with
    purely Clifford training circuits and that with non-Clifford training
    circuits.
    The underlying circuit is $\mathsf{vqe}$-6-4.}%
  \label{fig:non-Clifford_vs_Clifford(lasso)}
\end{figure}

\section{Producing Training Circuits with All the Single-Qubit Clifford Gates}\label{app:solution_all_clifford}

In this section, we show that the Clifford training method performs well only when the single-qubit gates in the target circuits are drawn from the Haar random distribution.
This discrepancy arises because Clifford gates do not form a quantum rotation
$2$-design, but rather a unitary $2$-design with respect to Haar random
unitaries.
Suppose the training set
\begin{equation*}
  \mathbb{S}' = {\left\{\left(\left(x_1^{(i)}, \ldots x_N^{(i)}\right),
      y^{(i)}\right)\right\}}_{i=1,\ldots,|\mathbb{S}'|}
\end{equation*}
is generated by Clifford training method, i.e., all the single-qubit
Clifford gates are utilized to replace each non-Clifford single-qubit gates in
target circuits.
Let $P_{\mathbb{S}}'$ be the probability distribution of the data in $S'$, when
the parameters of the target quantum circuit are uniformly distributed within
$[0,2\pi]$.
As in \cref{app:prove_optimal}, we can derive that the solution to the linear
regression that fits the data in $S'$ is expressed as
\begin{align}\label{eq:solution_train_all_clifford}
  {\left(\underset{P_{\mathbb{S'}}}{\Exp}\left[\vb{A}\right]\right)}^{-1}
  \underset{P_{\mathbb{S'}}}{\Exp}\left[\vb{a}\right],
\end{align}
where we write $\underset{\left(\vb{x}, y\right) \sim P_{\mathbb{S}'}}{\Exp}$ as
$\underset{P_{\mathbb{S}'}}{\Exp}$ for simplicity and the definitions of $\vb{A}$ and
$\vb{a}$ can be found in \cref{eq:coefficient_matrix}.
If $\underset{P_{\mathbb{S}'}}{\Exp} \left[\vb{x} \cdot \vb{x}^{\top}\right]$ is
not invertible, then we can use the pseudoinverse.
Repeating the proof in \cref{lemma:two_design}, and using the fact that the
Clifford group forms a unitary 3-design~\cite{webb2015clifford}, we have
\begin{align*}
  & \underset{g\in C_1}{\Exp} \tr
    \left(g \otimes I \rho g^{\dagger} \otimes I \Phi_2^{\dagger}(O)\right)
    \tr\left(g \otimes I \rho^{\prime} g^{\dagger}
    \otimes I \Phi_2^{\prime\dagger}(O)\right) \\
  = &\underset{U_1\sim\text{haar}}{\Exp}
      \tr\left(U_1 \otimes I \rho U_1^{\dagger}
      \otimes I \Phi_2^{\dagger}(O)\right)
      \tr\left(U_1 \otimes I \rho^{\prime} U_1^{\dagger}
      \otimes I \Phi_2^{\prime\dagger}(O)\right).
\end{align*}
By a similar proof as in \cref{lemma:two_design}, we can directly derive the following conclusion.

\begin{lemma}\label{lemma:two_design_all_clifford}
  Consider a new set of test circuits, where the circuits have a same structure
  with the original test circuit, with the difference being that the
  non-Clifford gates are not fixed-axis rotation gates, but Haar random
  single-qubit gates.
  Then we have
  \begin{equation*}
    \underset{P_{\mathbb{S'}}}{\Exp}\left[\vb{A}\right] =
    \underset{P_{\mathbb{\rm test'}}}{\Exp}\left[\vb{A}\right] \text{ and }
    \underset{P_{\mathbb{S'}}}{\Exp}\left[\vb{a}\right] =
    \underset{P_{\mathbb{\rm test'}}}{\Exp}\left[\vb{a}\right],
 \end{equation*}
 where $P_{\rm test'}$ is the probability distribution of the data produced by
 the new test set.
\end{lemma}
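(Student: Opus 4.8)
The plan is to mirror, almost verbatim, the gate-by-gate argument used in the proof of \cref{lemma:two_design}, replacing the rotation $2$-design identity there by the corresponding property of the single-qubit Clifford group $C_1$. The starting observation is structural: every entry of $\vb{A}$ is a product $x_i x_j$ and every entry of $\vb{a}$ is a product $x_i y$, where each $x_i$ and each $y$ is a trace expression $\tr(O\,\mathcal{N}(\rho_{\rm in}))$ in which a given single-qubit non-Clifford gate $g$ enters exactly once, through conjugation $g(\cdot)g^{\dagger}$. Consequently, once all other gates are held fixed, each such entry is a \emph{second moment} in $g$: the gate at a fixed position appears once in the factor coming from $x_i$ and once in the factor coming from $x_j$ (respectively $y$). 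This is exactly the regime that a unitary $2$-design controls.

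First I would isolate a single gate $U_1=g$ acting, say, on the first qubit, and write the two trace factors as $\tr\bigl((g\otimes I)\rho(g^{\dagger}\otimes I)\Phi_2^{\dagger}(O)\bigr)$ and $\tr\bigl((g\otimes I)\rho'(g^{\dagger}\otimes I)\Phi_2'^{\dagger}(O)\bigr)$, just as in \cref{lemma:two_design} but with $g$ in place of $R_{\hat{n}}(\theta_1)$, and with $\Phi_2,\Phi_2'$ absorbing all remaining (fixed) gates and the fixed, parameter-independent noise channels. The displayed identity established immediately before the lemma—valid because $C_1$ is a unitary $2$-design (indeed a $3$-design~\cite{webb2015clifford}) with respect to the Haar measure on $\mathbb{U}(2)$—then shows that averaging this product over $g\in C_1$ equals averaging it over $U_1$ drawn Haar-randomly. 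This is precisely the replacement, for one gate position, of the training-side average (random Clifford, distribution $P_{\mathbb{S'}}$) by the modified-test-side average (Haar random, distribution $P_{\rm test'}$), conditional on all other gates.

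Next I would iterate this single-gate replacement over all single-qubit gate positions, one at a time, exactly as the proof of \cref{lemma:two_design} sweeps through $\theta_1,\dots,\theta_M$. Because the $2$-design identity holds conditionally on any fixed values of the remaining gates, the tower of expectations telescopes and converts the full $C_1$-average into the full Haar-average, giving $\Exp_{P_{\mathbb{S'}}}[x_i x_j]=\Exp_{P_{\rm test'}}[x_i x_j]$ and, by the same second-moment argument applied with one trace factor carried through the noiseless channel, $\Exp_{P_{\mathbb{S'}}}[x_i y]=\Exp_{P_{\rm test'}}[x_i y]$, for all $i,j$. Assembling these entrywise equalities yields $\Exp_{P_{\mathbb{S'}}}[\vb{A}]=\Exp_{P_{\rm test'}}[\vb{A}]$ and $\Exp_{P_{\mathbb{S'}}}[\vb{a}]=\Exp_{P_{\rm test'}}[\vb{a}]$, as claimed.

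The main point to get right—rather than a genuine obstacle—is the bookkeeping that certifies each entry really is only a second moment in each gate. Concretely, one must check that in the entries of $\vb{a}$ the ideal factor $y$ and the noisy factor $x_i$ contain the \emph{same} gate $g$ at each position, so that its two appearances combine into a single second-moment average, and that the noise channels, being fixed and independent of the gate parameters by the standing assumption on the noise model, can be folded into $\Phi_2,\Phi_2'$ without altering the $g$-dependence. Once this is verified, a unitary $2$-design is all that the argument requires; the single-qubit Clifford group supplies it, and its stronger $3$-design property is not needed here.
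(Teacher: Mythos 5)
Your proposal is correct and takes essentially the same approach as the paper: the paper likewise observes that each entry of $\vb{A}$ and $\vb{a}$ is a balanced second-moment trace product in each single-qubit gate, invokes the unitary-design property of the single-qubit Clifford group (citing its 3-design property) to equate the Clifford average with the Haar average for one gate position, and then repeats the gate-by-gate sweep of \cref{lemma:two_design}. Your remark that a unitary $2$-design suffices is accurate—the paper's appeal to the stronger $3$-design fact is not actually needed for these degree-$(2,2)$ expressions.
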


Based on \cref{lemma:two_design_all_clifford}, the solution in
\cref{eq:solution_train_all_clifford} equals
\begin{align}\label{eq:solution_test_all_clifford}
  {\left(\underset{P_{\mathbb{\rm test'}}}{\Exp}
  \left[\vb{A}\right]\right)}^{-1}
  \underset{P_{\mathbb{\rm test'}}}{\Exp}\left[\vb{a}\right],
\end{align}
which indicates that in a learning-based QEM protocol, if training circuits are
generated by replacing each single-qubit non-Clifford gate with all possible
single-qubit Clifford gates uniformly---the conventional strategy in
literature---the model will obtain the optimal linear solution on average,
provided all single-qubit non-Clifford gates in the target circuit are
Haar-random unitaries.
This also explains the significant differences observed between protocols using
different training circuit sets in the manuscript.

\section{Comparisons Between Different Design Choices of NIL-based QEM}%
\label{app:comparison_different_settings}
We now compare different design and hyperparameter choices of NIL in a comprehensive manner.

\subsection{Comparison Between Choices of Neighbor Maps}\label{subapp:comparison_neighbors}

We first compare the performance of learning-based QEM models constructed with different types of neighbor circuits. The test circuit is chosen to be $\mathsf{vqe}$-6-4, as introduced in the manuscript. Training circuits are generated using the 2-design method. For comparison, we consider three neighbor-generation strategies: Pauli-insertion neighbors, CPTP-insertion neighbors, and ZNE neighbors. The definitions of Pauli-insertion and ZNE neighbors follow those in the manuscript.

The CPTP-insertion neighbors are obtained by expanding the Pauli gate set used in Pauli-insertion neighbors to
\begin{equation*}
  \GG_1 := \left\{\X, \Y, \Z, \K^{\dagger} \S^{\dagger} \K,
    \K \S^{\dagger} \K^{\dagger}, \S^{\dagger}, \K \H \K^{\dagger}, \H,
    \K^{\dagger} \H \K \right\},
\end{equation*}
where $\S$, $\H$, and $\K$ are the quantum channels for the $S$ gate, the $H$
gate, and the $K := SH$ gate, respectively.
$\GG_1$ is actually a basis for the single-qubit CPTP maps when appended with
the identity channel and three extra state-preparation
channels~\cite{takagi_optimal_2021}. Note that all the gates from $\GG_1$ are single-qubit Clifford gates. For brevity, we hereafter refer to Pauli-insertion and CPTP-insertion neighbors simply as Pauli neighbors and CPTP neighbors, respectively.

The training method still adopts the Lasso regression formulation
in~\cref{eqapp:lasso_regression}, where we set $\gamma = 2$ for both Pauli
neighbors and CPTP neighbors.
Recall that the ZNE neighbor circuits for each training (or test) circuit are
obtained by replacing each noise channel $\Lambda$ with $\Lambda^{\alpha}$,
where $\alpha \in \{1, 1.1, 1.34, 1.58\}$.
This implies that only four neighbor circuits are available for each training
circuit.
To improve the power of this neighbor map, we set the $\ell_1$ norm restriction
$\gamma = 5$ in Lasso regression for ZNE neighbors.

\begin{figure}[H]
  \centering
  \includegraphics[width = 0.5\columnwidth]{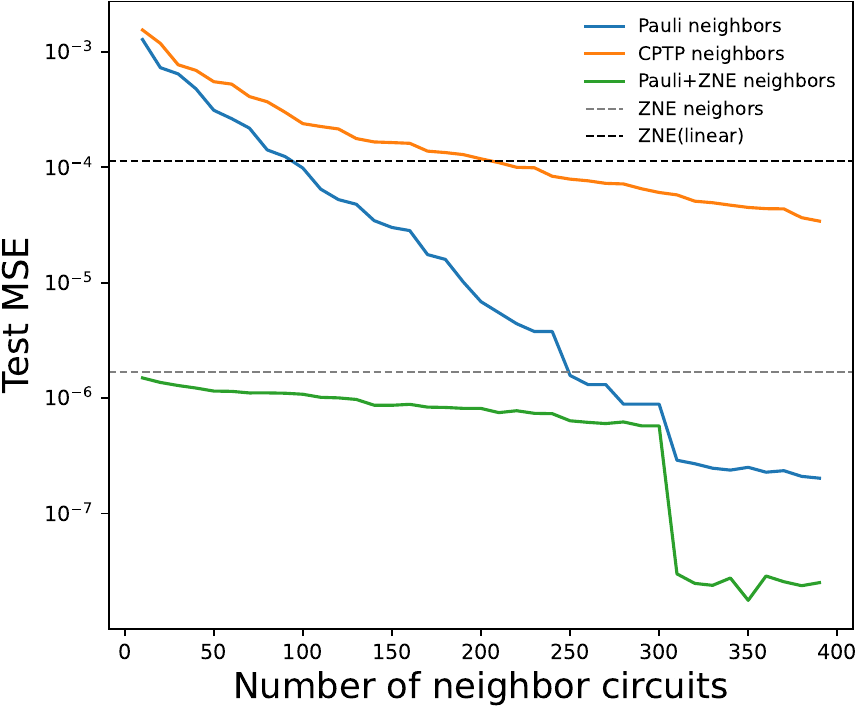}
  \caption{Performances of learning-based QEM protocols that choose different neighbor circuits.
    The underlying circuit is $\mathsf{vqe}$-$6$-$4$.}\label{fig:Performance_three_neighbor}
\end{figure}

The results are shown in~\cref{fig:Performance_three_neighbor}.
It can be observed that the performance of Pauli neighbors surpasses that of
CPTP neighbors, suggesting that Pauli neighbors provide a better choice
for PEC-inspired NIL.
Moreover, we observe that the protocol based on ZNE neighbors also achieves
competitive performance, even though it uses only four neighbor circuits per
training circuit.
It outperforms both CPTP and Pauli neighbors when the number of available
neighbor circuits is limited.
However, Pauli neighbors deliver the best performance when the number of
neighbor circuits exceeds 250 (in this case, there are a total of 300
weight-1 neighbor circuits).

To combine the power of both ZNE neighbors and Pauli neighbors, we propose a new
type of neighbor map called \emph{ZNE + Pauli neighbors}.
This $\combine$ map is constructed by combining these two approaches as
follows:
\begin{small}
\begin{align*}
  \combine \left(\!\navg{\C}_{\epsilon},\navg{\C}_{1.1\epsilon},\navg{\C}_{1.34\epsilon},\navg{\C}_{1.58\epsilon},\navg{\C_{1}}, \!\ldots\!, \navg{\C_{N}}\!\right),
\end{align*}
\end{small}%
where $\epsilon$ is the original noise rate and $C_1\ldots, C_{N}$ are the Pauli neighbors.
Again, by employing Lasso regression with $\gamma = 5$, we test this new
neighbor and observe that it performs well even with a very small number of
neighbors.
As shown in \cref{fig:Performance_three_neighbor}, this neighbor achieves a
two-order-of-magnitude improvement in MSE compared to Pauli neighbors when the
number of neighbors is small.
This further demonstrates the flexibility of our approach.

\subsection{Comparison Between Linear Models and Neural Networks}\label{subapp:comparison_learning_model}

We first show that Lasso regression can significantly reduce the computational
cost of learning-based QEM protocols compared to standard linear regression.

\Cref{tab:lambada_mse_cost} illustrates the example of $\mathsf{vqe}$-$6$-$4$.
It can be seen that by adjusting the $\ell_1$ norm constraint in
\cref{eqapp:lasso_regression}, we can significantly reduce the sampling cost of the
protocol, while the QEM performance only slightly decreases.
It is critical to note that, as evidenced by the last column of
\cref{tab:lambada_mse_cost}, the value of $\sum_i |c_i|$ obtained by ordinary
linear regression is extremely high, rendering it nearly unusable in practical
applications.
This issue stems from the ill-conditioning of the least squares coefficient
matrix.
However, implementing Lasso regression significantly reduces the value of
$\sum_i |c_i|$, thereby greatly reducing the computational cost.
Therefore, the advantage of Lasso regression deserves further investigation in
learning-based QEM protocols.
\begin{table}[H]
  \centering
  \begin{small}
  \begin{tabular}{cccc}
    \toprule
    $\gamma$ & \textbf{Training MSE} & \textbf{Test MSE} & \textbf{$\sum_i |c_i|$} \\
    \midrule
    {unconstrained}  & $8.17\times 10^{-7}$ & $7.03\times 10^{-7}$ & 11057.57  \\
    {$10$} & {$8.57\times 10^{-7}$} & {$7.93\times 10^{-7}$} & {2.6} \\
    {$2$} & {$8.87\times 10^{-7}$} & {$8.97\times 10^{-7}$} & {1.74}  \\
    {$1.5$} & {$1.37\times 10^{-6}$} & {$1.54\times 10^{-6}$} & {1.39}  \\
    \bottomrule
  \end{tabular}
  \end{small}

  \caption{Performance of the {Lasso} regression with different values of
    $\gamma$, where the target circuit is $\mathsf{vqe}$-$6$-$4$.}%
  \label{tab:lambada_mse_cost}
\end{table}

In \cref{fig:neighbor_and_learning_alg}, we also compare the performance of our
approach with neural network-based approaches.
Here, the size of training circuits is set to 5,000.
It turns out that in all cases we have studied, linear regression significantly
outperforms neural networks in QEM performance, indicating that nonlinearity may
not offer advantages in our tasks.
\begin{figure}[H]
  \captionsetup[subfigure]{justification=centering, labelfont = bf}
  \centering
  \begin{subfigure}[t]{0.33\textwidth}
    \centering
    \includegraphics[width = \linewidth]{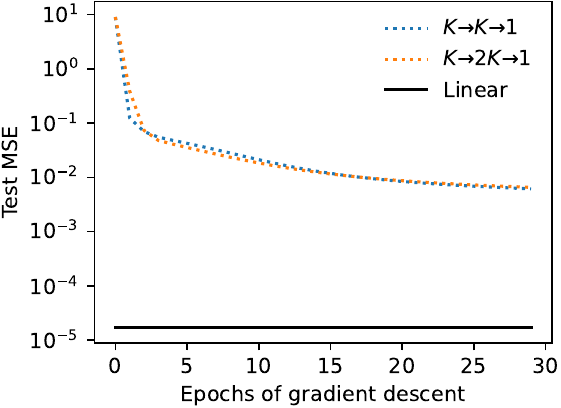}
    \subcaption{CPTP neighbors (weight-1)}
  \end{subfigure}%
  \hfill
  \begin{subfigure}[t]{0.33\textwidth}
    \centering
    \includegraphics[width = \linewidth]{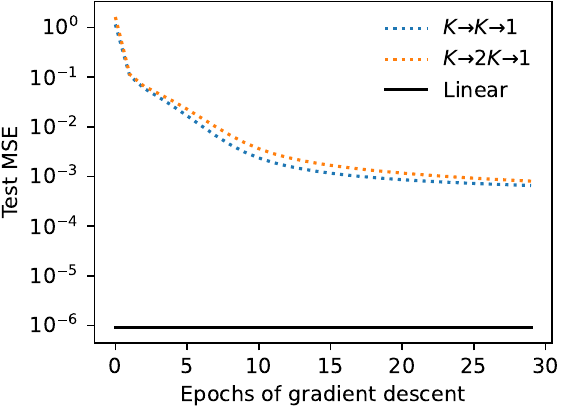}
    \subcaption{Pauli neighbors (weight-1)}
  \end{subfigure}%
  \begin{subfigure}[t]{0.33\textwidth}
    \centering
    \includegraphics[width = \linewidth]{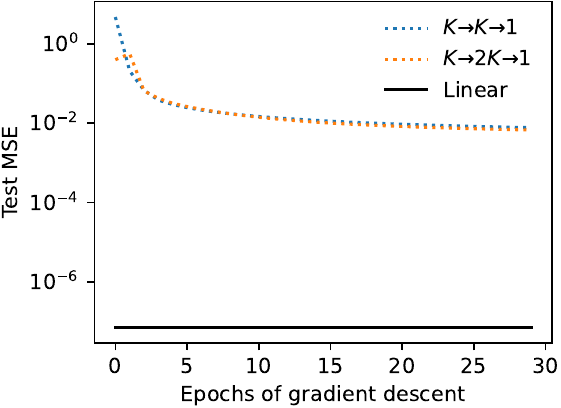}
    \subcaption{Pauli neighbors (weight-3)}
  \end{subfigure}%
  \hfill
  \caption{Comparisons between the performances of learning-based QME models
    with different neighbors and different learning models.
    Here neighbor circuits can be Pauli neighbors or CPTP neighbors, and
    learning models can be linear models or neural networks ($K \to K \to 1$ and
    $K \to 2K \to 1$).
    The target circuit is $\mathsf{vqe}$-$6$-$4$.}%
  \label{fig:neighbor_and_learning_alg}
\end{figure}

\section{Quantum Chemistry Problems}\label{app:quantum_chemistry_problems}

Following the approach in Ref.~\cite{guo_experimental_2024}, we study the LiH
and $\text{F}_2$ molecular Hamiltonians.
The Hamiltonians for both molecules are generated in the STO-3G basis set and
then transformed into qubit Hamiltonians using the Jordan-Wigner transformation.
This process is automatically implemented by the
\texttt{qchem.molecular\_hamiltonian} function in
\texttt{PennyLane}~\cite{bergholm2022pennylane} with the following specific
settings.
For LiH, we set
\begin{itemize}
  \item \texttt{coordinates = [[0.3925, 0.0, 0.0],[-1.1774, 0.0, 0.0]]},
  \item \texttt{active\_electrons = 2},
  \item \texttt{active\_orbitals = 3}.
\end{itemize}
For $\text{F}_2$, we set
\begin{itemize}
    \item \texttt{coordinates = [[0.0, 0.0, -0.7059],[0.0, 0.0, 0.7059]]},
    \item \texttt{active\_electrons = 10},
    \item \texttt{active\_orbitals = 6}.
\end{itemize}
The resulting Hamiltonian is 6-qubit for LiH and 12-qubit for $\text{F}_2$.

The ansatz circuits for VQE in both cases are chosen to be unitary-coupled
clusters, where a series of circuit simplification and compilation strategies
from Ref.~\cite{guo_experimental_2024} are applied.
The ansatz circuit for LiH is shown in \cref{fig:UCC_ansatz} (a) and that for
$\text{F}_2$ is shown in \cref{fig:UCC_ansatz} (b).
The gates before the dotted lines form the quantum circuits for preparing the
initial multi-reference states.
Here, the parameterized gates are represented by the yellow blocks, which are
all Pauli rotation gates.
\begin{figure}[H]
  \centering
  \captionsetup[subfigure]{justification=centering, labelfont = bf}
  \begin{subfigure}[t]{\textwidth}
    \centering
    \includegraphics[width=1\linewidth]{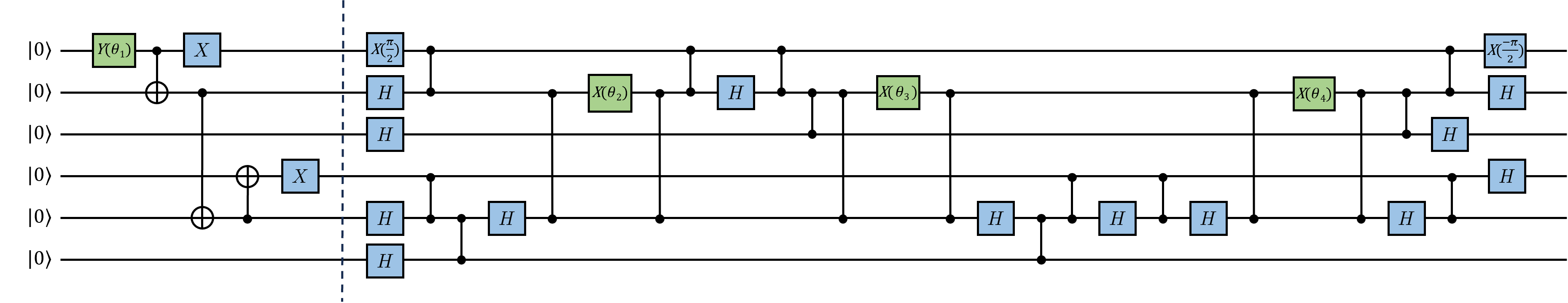}
    \subcaption{LiH}
  \end{subfigure}

  \vspace{\baselineskip}

  \begin{subfigure}[t]{\textwidth}
    \centering
    \includegraphics[width=1\linewidth]{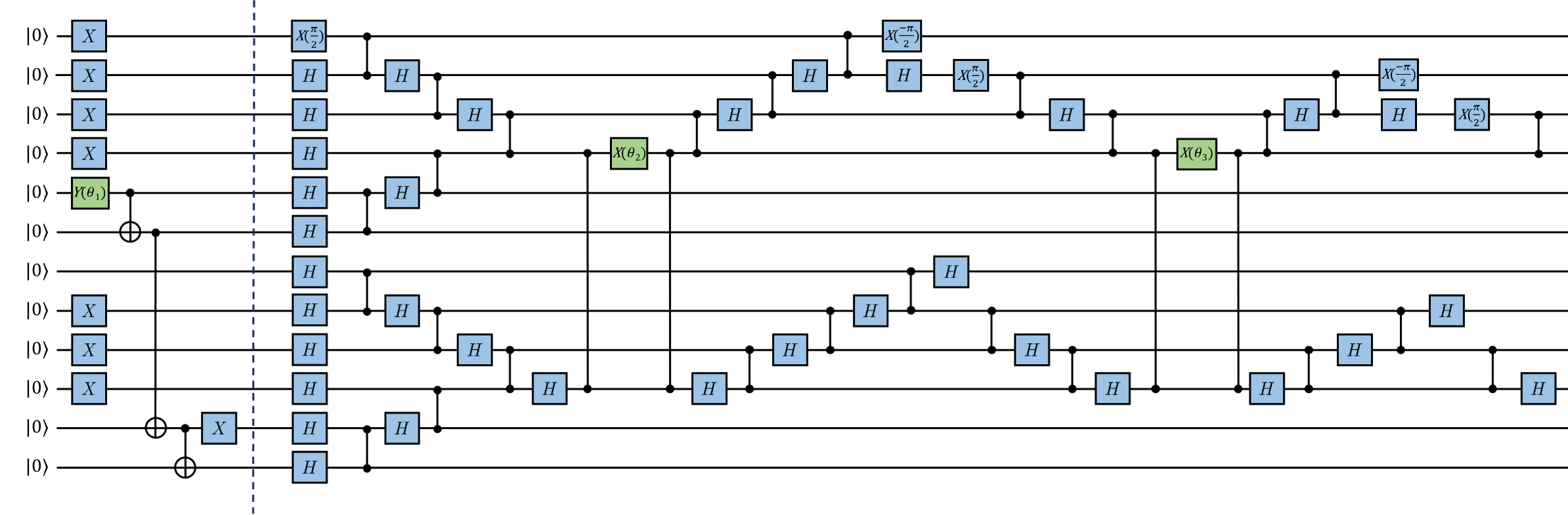}
    \subcaption{$\text{F}_2$ part 1}
  \end{subfigure}

  \begin{subfigure}[t]{\textwidth}
    \centering
    \includegraphics[width=1\linewidth]{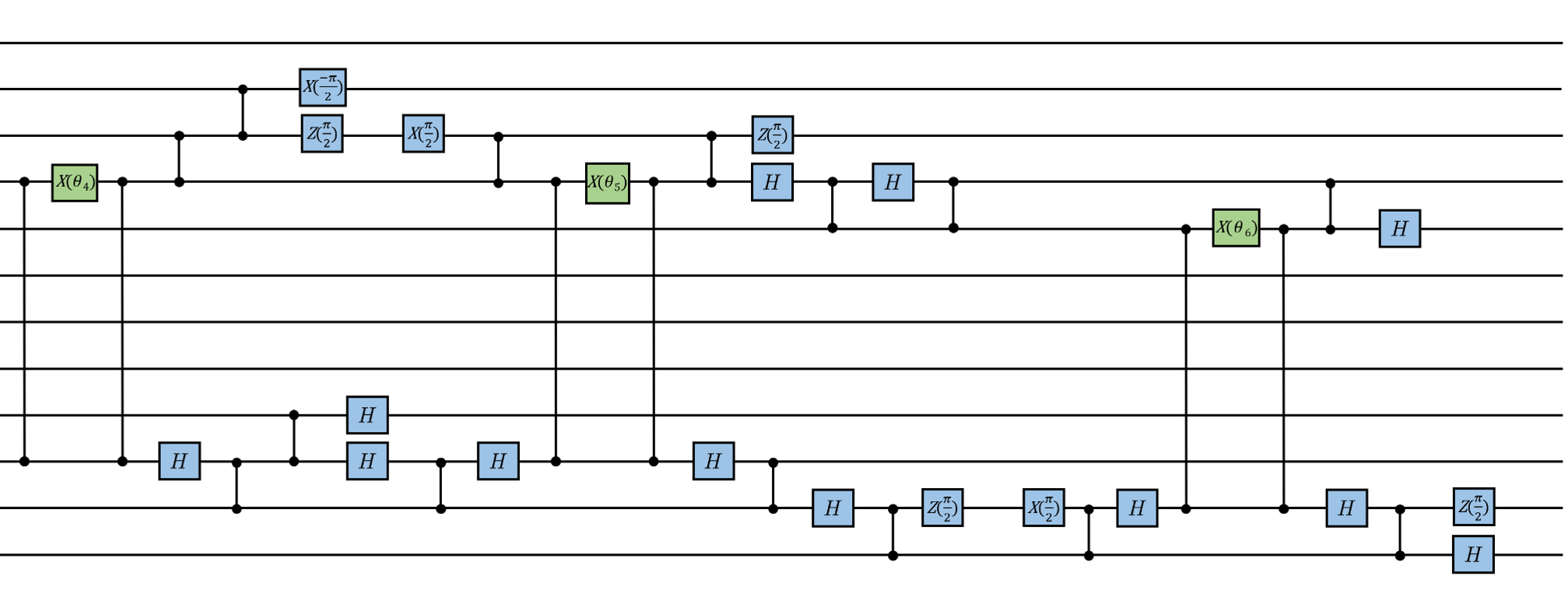}
    \subcaption{$\text{F}_2$ part 2}
  \end{subfigure}

  \caption{The compiled UCC ansatz circuits for LiH and $\text{F}_2$.
    The circuits in subfigure (b) and subfigure (c) together comprise the
    compiled UCC ansatz circuits for $\text{F}_2$.
    The part of the circuit before the blue dashed line represents the initial
    state preparation.
    The gates marked in green, along with all the two-qubit gates, are fixed
    Clifford gates, while the gates marked in yellow are single-parameter
    rotation gates.}\label{fig:UCC_ansatz}
\end{figure}
\subsection{Performance of NIL on UCC Ansatz Circuits for $\text{F}_2$}
In this subsection, we demonstrate the application of our method to noise mitigation in UCC ansatz circuits for $\mathrm{F}_2$, with the circuit diagrams shown in \cref{fig:UCC_ansatz}(b) and \cref{fig:UCC_ansatz}(c).

Specifically, we use only half of the weight-1 Pauli neighbors as the set of
neighbor circuits, and then generate 1,000 training circuits by
the 2-design training method.
To obtain each noisy expectation value for the training circuits, we run each circuit 10,000 times and calculate the expectation value as the output.
The label for each training circuit is the exact expectation value, which can be
obtained efficiently since it is a Clifford circuit.
Finally, we apply Lasso regression to fit the training set.
As shown in \cref{subfig:performance_vqef2}, the training MSE ($\approx$ test
MSE) is nearly $10^{-4}$, which is comparable to the statistical fluctuations
caused by measurement shot noise. Compared with the MSE of the noisy outputs ($\approx 0.25$), the mitigation gain reaches nearly two orders of magnitude, corresponding to a suppression of about 99\% of the errors in terms of MSE.
This numerical result demonstrates that our approach is applicable to real quantum chemistry
tasks of modest size.

\begin{figure}[htbp]
  \centering
  \includegraphics[width = 0.4\columnwidth]{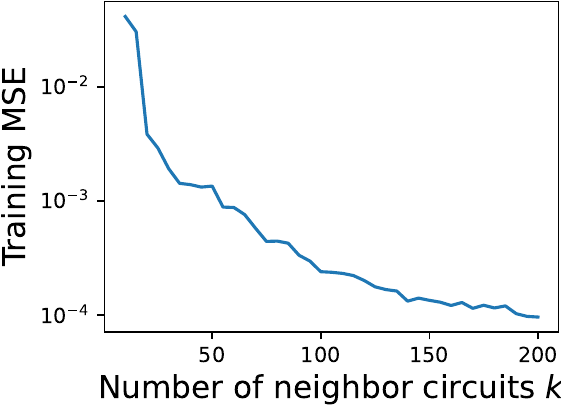}
  \caption{The 12-qubit UCC ansatz circuit for $\text{F}_2$.
    In this case, the MSE of the noisy outputs without mitigation is 0.25.}%
  \label{subfig:performance_vqef2}
\end{figure}%

\bibliographystyle{unsrt}
\bibliography{main}


\end{document}